\documentclass[journal]{IEEEtran}

\IEEEoverridecommandlockouts

\if CLASSOPTIONcompsoc
\usepackage[nocompress]{cite}
\else
\usepackage{cite}
\fi

\usepackage[utf8x]{inputenc}
\usepackage{listings}
\usepackage{color}
\usepackage{epsfig}
\usepackage{url}
\usepackage{amsmath}
\usepackage{graphicx}
\usepackage{color}
\usepackage{amssymb,amsmath,dsfont}
\usepackage{amsthm}
\usepackage{epstopdf}
\usepackage{lipsum}
\usepackage{bbm}
\usepackage{tablefootnote}

\usepackage{tikz}
\usetikzlibrary{automata,arrows,positioning,calc}

\usepackage{graphics}

\usepackage{csquotes}
\usepackage{amsmath} 
\usepackage{subcaption}
\usepackage{graphicx}
\usepackage{accents}
\usepackage{mathtools}
\usepackage[normalem]{ulem}
\usepackage{algorithm,algpseudocode}

\usepackage{multirow}

\usepackage{breqn}

\usepackage{fancyhdr}

\newtheorem{theorem}{Theorem}
\newtheorem{corollary}{Corollary}

\newtheorem{lemma}{Lemma}

\theoremstyle{remark}
\newtheorem*{remark}{Remark}
\newtheorem*{exmp}{Example}

\usepackage{tabularx}



\newcommand{\eqdef}{\vcentcolon=}

\graphicspath{figures/}

\usepackage{mathtools}

\DeclarePairedDelimiter\ceil{\lceil}{\rceil}

\setlength{\textfloatsep}{5pt plus 1.0pt minus 2.0pt}
\hyphenation{different availability}

\newcommand{\ind}[1]{1_{#1}}
\newcommand{\calE}{\mathcal{E}}
\newcommand{\calL}{\mathcal{L}}
\newcommand{\calS}{\mathcal{S}}
\newcommand{\calX}{\mathcal{X}}
\newcommand{\Reals}{\mathbb{R}}

\newcommand\copyrighttext{%
	\footnotesize This work has been submitted to the IEEE for possible publication. Copyright may be transferred without notice, after which this version may no longer be accessible.
}
\newcommand\copyrightnotice{%
	\begin{tikzpicture}[remember picture,overlay]
	\node[anchor=north,yshift=-10pt] at (current page.north) {\parbox{\dimexpr\textwidth-\fboxsep-\fboxrule\relax}{\copyrighttext}};
	\end{tikzpicture}%
}

\begin{document}
	
\title{On Packet Reordering in Time-Sensitive Networks}
	
	
	\author{\IEEEauthorblockN{
					Ehsan Mohammadpour, Jean-Yves Le Boudec
			\\}
		\IEEEauthorblockA{\'Ecole Polytechnique F\'ed\'erale de Lausanne, Switzerland\\
			$\{$firstname.lastname$\}$@epfl.ch}}

	\maketitle
	
	\copyrightnotice

	\begin{abstract}

Time-sensitive networks (IEEE TSN or IETF DetNet) may tolerate some packet reordering. Re-sequencing buffers are then used to provide in-order delivery, the parameters of which (timeout, buffer size) may affect worst-case delay and delay jitter. There is so far no precise understanding of per-flow reordering metrics nor of the dimensioning of re-sequencing buffers in order to provide worst-case guarantees, as required in such networks. First, we show that a previously proposed per-flow metric, reordering late time offset (RTO), determines the timeout value. If the network is lossless, another previously defined metric, the reordering byte offset (RBO), determines the required buffer. If packet losses cannot be ignored, the required buffer may be larger than RBO, and depends on jitter, an arrival curve of the flow at its source, and the timeout. Then we develop a calculus to compute the RTO for a flow path; the method uses a novel relation with jitter and arrival curve, together with a decomposition of the path into non order-preserving and order-preserving elements. We also analyse the effect of re-sequencing buffers on worst-case delay, jitter and propagation of arrival curves. We show in particular that, in a lossless (but non order-preserving) network, re-sequencing is ``for free'', namely, it does not increase worst-case delay nor jitter, whereas in a lossy network, re-sequencing increases the worst-case delay and jitter. We apply the analysis to evaluate the performance impact of placing re-sequencing buffers at intermediate points and illustrate the results on two industrial test cases.
\end{abstract}

	\section{Introduction}\label{sec:intro}
Time-sensitive networks provide real-guarantees for applications in the automobile~\cite{ieeeDraftStandardLocal2019b}, automation~\cite{iecIECIEEE608022019}, space~\cite{ecssSpaceWireLinksNodes2008}, avionics~\cite{AFDX,TTE} and video~\cite{noauthor_ieee_2011-1} industries. Standardization is taking place at the IEEE Time Sensitive Networking (TSN) working group and at the IETF Deterministic Networking (DetNet) working group. In such networks, the aim is to provide flows with hard bounds on worst-case delay and on delay jitter (defined as the difference between worst-case and best-case delays), together with zero congestion loss and seamless redundancy~\cite{ieeeIEEEStandardLocal2017}.
	
Time-sensitive networks may allow some limited amount of
packet reordering. This may occur due to parallelism in network elements like switches and routers, routing of packets via different paths, or packet duplication \cite{bennett_packet_1999,laor_effect_2002,jaiswal_measurement_2007}. The IETF states in~\cite{rfc8655} that the amount of reordering is a key quality-of-service attribute of a flow; but neither IETF nor TSN specify what it means in detail. If a time-sensitive flow is subject to possible reordering and the application requires in-order packet delivery, a re-sequencing buffer is used to restore packet order. It is typically placed at the final destination, but it is also proposed in~\cite{rfc8655} to place at intermediate points inside the network, for example if the network path between the re-sequencing buffer and the destination preserves order, or simply to reduce the amount of reordering. A re-sequencing buffer uses the assumption that the source increments a sequence number field by 1 for every packet of the flow. Early packets are stored until all packets with smaller sequence numbers arrive \cite{piratla_metrics_2008,gao_large_2012,li_probabilistic_2010}. A timer is used to limit the waiting time of a packet in the re-sequencing buffer, as otherwise the loss of a packet in the network would cause indefinite holding of packets with larger sequence numbers.

Packet reordering is well understood in the context of best-effort networks where it is shown to be detrimental to the performance of TCP connections (see Section~\ref{sec:related}). Several metrics were proposed to capture the amount of reordering in \cite{piratla_metrics_2008} and in RFC4737~\cite{rfc4737}. In these references, the aim is to define reordering metrics that can be measured on a flow and can be correlated to the performance of reliable transfers over a best-effort network.

In time-sensitive networks, flows require a guarantee on worst-case delay and delay jitter, together with zero congestion-loss (no packet is discarded due to buffer overflow). To obtain such guarantees, a time-sensitive flow must conform to an arrival curve constraint at the source, which can be seen as a formal specification of a rate and burstiness constraint (see Section~\ref{sec:arrcur}). Then, using some forms of Network Calculus~\cite{le_boudec_network_2001}, the control or management plane computes worst-case delay and jitter bounds, together with the buffer sizes required for zero congestion-loss. Surprisingly, such computations are currently done without taking into account the impact of re-sequencing buffers. 

The main goal of this paper is to bridge this gap and provide a theory to compute worst-case performance guarantees in presence of packet reordering and with re-sequencing buffers.
Specifically, a first issue is to find appropriate per-flow information that enables proper setting of timeout value and buffer size at a re-sequencing buffer, such that deterministic guarantees hold for a time-sensitive flow, namely, no packet is lost due to spurious timeout or buffer overflow and packets are delivered in-order. 
A second issue is how to compute such per-flow information.
A third issue is the effect of the re-sequencing buffers on worst-case delay and on delay jitter. 
Furthermore, an intriguing topic is to study the aforementioned issues in the interconnection of various network elements and intermediate re-sequencing buffers. A first challenge is how to systematically compute the propagation of per-flow information in a sequence of network elements that affect the re-sequencing parameters. The second challenge is to evaluate the impact of intermediate re-sequencing buffers on the end-to-end worst-case delay and delay jitter.

Our contributions are as follows. To address the first and second issues, we need appropriate metrics for per-flow reordering. We show in Theorem~\ref{thm:reseq-param} that one of the metrics in RFC4737~\cite{rfc4737}, the reordering late time offset (RTO, the definition of which is recalled in Section~\ref{sec:metrics}), equals the minimal timeout value. Furthermore, combined with other information on the flow (namely arrival curve at source and delay jitter), the RTO can be used to derive the required buffer size (Theorem~\ref{thm:reseq-param-size}). In-line with the operation mode of time-sensitive networks, such a metric, or an upper bound on it, must be computed by the control or management plane before a flow is set up. This differs from the intended use of the metrics in RFC4737, which focus on ex-post measurements. Therefore, we propose a theory to compute tight upper bounds on RTO for flows, given the information that is otherwise available to the time-sensitive network control or management plane (Section~\ref{sec:reordering}). Such information includes bounds on delay jitter, arrival curve constraints of flows at their sources, and whether a network element is guaranteed to preserve per-flow order or not.

Another metric in RFC4737 is the reordering byte offset (RBO). We show in Theorem \ref{thm:reseq-param-size} that it is equal to the required size of the re-sequencing buffer when the network can be assumed to be lossless. Otherwise, if packet losses cannot be ignored, we show that the RBO underestimates the required buffer size, for which we give a formula that involves the RTO. This closes the first issue.


Concerning the third issue, observe that re-sequencing buffers may delay packets until they can be delivered in-order, therefore, they may increase the worst-case delay and the delay jitter. However, we show in Theorem~\ref{thm:delay-reseq} that, if a flow is lossless between its source and a re-sequencing buffer, the worst-case delay and the delay jitter are not increased by the re-sequencing buffer (i.e. re-sequencing is for free in terms of delay in the lossless case). In contrast, if the flow may be subject to packet losses on its path from source to the re-sequencing buffer, then the worst-case delay may be increased by an amount up to the timeout value of the re-sequencing buffer, which must be at least as large as the RTO between the source and the input of the re-sequencing buffer. These results are based on a novel input-output characterization of the re-sequencing buffer.


Finally, our theory also allows to evaluate the value of re-sequencing buffers at intermediate points, in addition to the destination, in time-sensitive networks. Regarding the first challenge, our formulas in Section \ref{sec:mco} capture in particular the pattern of RTO amplification by downstream jitter: if a non order-preserving element (typically a switching fabric) has very small RTO but is followed by a per-flow order-preserving element (typically the queuing system on an output port) with large delay jitter, then the concatenation of the two produces a large RTO. This motivates some vendors to perform per-flow re-sequencing after every switching fabric. With respect to the second challenge, we find that such intermediate re-sequencing buffers do not improve the worst-case delay nor delay jitter if the network is lossless; but they do reduce the worst-case delay, delay jitter and RTO at destination in presence of network losses. To quantify the effect of intermediate re-sequencing buffers, we also need to evaluate how arrival curves of flows are modified by re-sequencing, since such arrival curves are required to compute delay and jitter bounds (Section~\ref{sec:prp}). We illustrate the application of our theory to two industrial test cases.

The rest of the paper is organized as follows. The state-of-the-art is presented in Section~\ref{sec:related}. Common assumptions, including a formal description of 
the RTO and RBO metrics, are given in Section~\ref{sec:sys}, together with background results on network calculus in non-FIFO networks, and a notation list. In Section~\ref{sec:reseq-param} we provide a formal input-output characterization of the re-sequencing buffer, which is then used to establish the link between RTO and its required parameters, and to establish its effect on worst-case delay, delay jitter and output arrival curve. In Section~\ref{sec:reordering}, we show how RTO and RBO can be computed, as required to establish performance guarantees for time-sensitive flows; the method is in two parts: first, we develop formulas for an individual network element, given delay jitter and an arrival curve of the flow; then we develop a calculus to concatenate network elements.
In Section~\ref{sec:ir} we apply the results to analyze the performance of intermediate re-sequencing in two industrial case studies. Section~\ref{sec:conclusion} concludes the paper. Proofs of theorems and details of computations are in appendix. 
	\section{Related Work}\label{sec:related}

Kleinrock et al obtain the average re-sequencing delay in ~\cite{kamoun1982queueing}, assuming Poisson arrival of messages and a number of other simplifying assumptions. A more complete analysis is then performed in~\cite{baccelli1984end}, where the distribution of the end-to-end response time, including re-sequencing delay, is obtained.

Later studies mainly focus on the statistical measurement of the occurrence of reordering in a communication network; in \cite{paxson_end--end_1997,bennett_packet_1999}, the authors indicate that the rate of packet reordering is high inside the network.
 Later, other works focus on the real-time techniques to measure packet reordering \cite{measuring_bellardo_2002,wang_study_2004,piratla_reordering_2005}. In \cite{measuring_bellardo_2002}, the authors provide a collection of measurement techniques that can estimate end-to-end reordering rates in TCP connections. In \cite{wang_study_2004}, the authors propose and implement an algorithm to measure reordering at a TCP receiver. The authors in \cite{piratla_reordering_2005} provide the probability density function for the amount of reordering of an arbitrary packet, based on received packets.

All the aforementioned works focus on the techniques to capture statistical information on packet reordering inside the network. Few works study the sizing of re-sequencing buffers: \cite{li_probabilistic_2010,gao_large_2012} provide probability distribution of the re-sequencing buffer size. To the best of our knowledge, there is no prior work that computes the size of re-sequencing buffer and its timeout value in the context of worst-case performance (as required with time-sensitive networks) nor the effect of re-sequencing on worst-case delay and delay jitter.
	\section{Background Information}\label{sec:sys}
\subsection{Network Assumptions}
We consider a network that contains a set of nodes, a set of hosts, and a set of links with fixed capacity. Nodes are switches or routers. A node consists of elements that can be order-preserving for the flow of interest (e.g. output port FIFO queues) or non order-preserving (e.g. switching fabric). Every flow follows a fixed path, has a finite lifetime and emits a finite, but arbitrary, number of packets. We consider unicast flows (extension to single-source multicast flows is straightforward) with known arrival curves at their sources (i.e. there are known bounds on the number of bits or packets that can be emitted by a flow within any period of time, see Section \ref{sec:arrcur} for a formal definition of arrival curve). A node may also implement a re-sequencing buffer to provide in-order packet delivery for one or several flows of interest. If a flow requires in-order packet delivery and if there is at least one non-order-preserving element on its path, then one re-sequencing buffer is required, and can be placed anywhere after the last non order-preserving element on the path. In some configurations, we will also consider that some additional intermediate re-sequencing buffers are placed inside the network.

Hosts are sources or destinations of flows. Packet sequence numbers are written at the source, starting with number $1$ for the first packet sent by the flow. The sequence number is incremented by $1$ for every packet of the flow, i.e., sequence numbering is per-packet per-flow. If a packet is lost in the network, with most time-sensitive applications, there is no packet retransmission; instead, the application hides the loss using some application-specific robustness mechanism (see e.g. \cite{saab2018robust}). If the source happens to retransmit the missing data, the resulting packets are assumed to have a new sequence number (larger than the already sent packets of the same flow).

This section provides a set of general assumptions to present a high-level view of the considered network. Further details, e.g., scheduling policy inside the nodes, are not required for understanding the theory presented in this paper. In Section~\ref{sec:ir} we describe two case studies; there, we describe the network and flows with all details.


\subsection{Delay and Jitter}\label{sec:delay_jitter}

For a given flow, call $d_n$ the delay of the packet with sequence number $n$, measured from source to destination. The ``worst-case delay" of the flow is $\max_{n}\{d_n\}$ where the max is over all packets sent by the flow during its lifetime. Similarly, the ``best-case delay" of the flow is $\min_{n}\{d_n\}$. The ``delay jitter" is the difference, i.e.,
\begin{align}\label{eq:jitter-def}
V = \max_{n}\{d_n\} - \min_{m}\{d_m\},
\end{align} so that $d_m-d_n\leq V$ for any $m,n$. In the above, only packets that reach their destination should be considered (lost packets are excluded). Delay jitter is called IP Packet Delay Variation 
in RFC 3393 \cite{rfc3393}.

Times are assumed to be measured according to the true time, i.e.  the international atomic time (temps atomique international, TAI). In reality, times are measured with local clocks, which may or may not be synchronized. Some small corrections may need to be applied to delay and jitter bounds \cite{thomas2020time}; the details are for further study.

\subsection{Packet Reordering Metrics}\label{sec:metrics}
RFC 4737 defines a number of packet reordering metrics, two of which are of interest in the context of time-sensitive networks: the reordering late time offset (RTO) and the reordering byte offset (RBO), which we now formally define.
Both metrics are defined for a flow and between an input and an output observation points. When the input observation point is not specified, it is implicitly assumed that it is the source of the flow.

We call packet with index $n$ the $n$th packet observed at the input observation point, in chronological order. If the input observation point is the source, then packet~$n$ is the packet with sequence number $n$. Let $E_n$ be the time at which packet~$n$ is observed at the output observation point. If this packet is lost between the two observation points, we take $E_n=+\infty$. Note that, since some network elements may be non order-preserving, $E_n$ cannot be assumed to be a monotonic sequence.
Simultaneous packet observations might be possible in some cases (e.g. if the observation of a packet depends on the realization of a software condition) and the system must use some tie-breaking rule to determine a processing order for packets; if this happens, we assume that we modify the timestamps $E_n$ by some small amounts to reflect the tie-breaking rule i.e. we assume that if $j\neq n$ then $E_j  \neq E_n$.
%
If packet~$n$ is not lost, i.e. if $E_n<+\infty$, its reordering late time offset is
\begin{align}
\lambda_n = E_n - \min_{j | j \geq n, E_j \leq E_n}E_j
\end{align}
i.e. $\lambda_n$ is the largest amount of time by which a packet with index larger than $n$ arrives earlier than packet~$n$, i.e. the maximum amount of ``overtaking'' undergone by packet~$n$; if there is no reordering after packet~$n$, then $\lambda_n=0$. The reordering late time $\lambda_n$ is undefined if packet~$n$ is lost.

The reordering late time offset, RTO, of the flow between the two observation points is
$\lambda = \max_{n | E_n<+\infty}\lambda_n$.
It follows that, if packet~$n$ is not lost, then for any packet index $p \geq n$:
\begin{align}\label{eq:reordering-time-def}
E_p \geq  E_n - \lambda.
\end{align}
We always have $\lambda\geq 0$ and it is easy to see that $\lambda=0$ if and only if the network path between the two observation points preserves the order of packets for this flow.

For the second metric, we need to count misordered bytes; observe that 
packet $n$ is misordered between the two observation points if there exists some $j>n$ such that $E_j < E_n$.
Then, if packet~$n$ is not lost, its reordering byte offset is defined by
\begin{align}\label{eq:reordering-byte-def}
\pi_n = \sum_{j | j >n, E_j < E_n} l_{j}.
\end{align}
where $l_j$ is the size, in bytes, of packet~$j$. Thus $\pi_n$ is the cumulated number of bytes of packets with index larger than $n$ that arrive earlier than packet~$n$; if there is no reordered packet after $n$, then the sum is empty and $\pi_n=0$. The reordering byte offset $\pi_n$ is undefined if packet~$n$ is lost.
The reordering byte offset, RBO, of the flow between the two observation points is
$\pi = \max_{n | E_n<+\infty}\pi_n$.

This definition of RBO is in bytes and not in bits as is often done for buffer and packet lengths in the context of time-sensitive networks; this is to be consistent with the terminology in RFC~4737. Also observe that a similar definition could be given by counting packets rather than bytes, as is done in \cite{piratla_metrics_2008}.

\subsection{Re-sequencing Buffer}
A re-sequencing buffer stores the packets of a flow until the packets with smaller sequence numbers arrive; then it delivers them in the increasing order of their sequence numbers. A re-sequencing buffer has two parameters, a size in bytes, $B$, and a timeout value, $T$. For any individual packet that is stored in the buffer, a timer is set that expires after $T$ seconds. Then, if a timer for a packet expires, all the stored packets with smaller or equal sequence number are released in-order. Hence, a packet is released if any one of the following conditions holds: 1) all packets with smaller sequence numbers are received, 2) its timer expires or 3) the timer of a received packet with a larger sequence number expires. A detailed description of the re-sequencing buffer algorithm is provided in Appendix A.

By construction, the re-sequencing buffer delivers the packets that it does not discard in increasing sequence numbers. Furthermore, a packet is discarded by the re-sequencing buffer either when the buffer is full or when the sequence number of the arriving packet is less than the largest sequence number that was already released. The latter occurs when the timeouts of packets are too early, compared to the lateness of misordered packets. Therefore, to avoid discarding packets, the re-sequencing buffer size and the timeout value should be large enough. In Section \ref{sec:reseq-param}, we analyze how to set the parameters such that these conditions hold.

\subsection{Arrival Curves}\label{sec:arrcur}
In time-sensitive networks, in order to provide guarantees to flows in terms of delay, jitter and zero-congestion-loss, flow rates and burstinesses must be limited at the source. This is done precisely by imposing an ``arrival curve'' constraint, also called T-SPEC (traffic specification) at the source of the flow. Formally, let $\alpha$ be some wide-sense increasing function $[0,+\infty)\to [0,+\infty]$; the flow is said to satisfy the arrival curve constraint $\alpha$ at some observation point if the number of bytes (or bits) observed on the flow at this observation point on any interval $(s,t]$ is upper-bounded by $\alpha(t-s)$. Without loss of generality \cite{le_boudec_network_2001}, the arrival curve $\alpha$ can be assumed to be sub-additive ($\alpha(s+t)\leq\alpha(s)+\alpha(t)$ for all $s,t\geq 0$), left-continuous and such that $\alpha(0)=0$. In this paper we assume that entire packets are observed at the observation point, which imposes
that $\alpha(0^+)\geq L^{\max}$, where $L^{\max}$ is the maximal packet size, otherwise no packet of maximal size can be sent by the flow ($\alpha(0^+)$ stands for the right-limit of $\alpha$ at $0$).
An arrival curve $\alpha$ is ``achievable'' if, for any sequence of packet sizes between $L^{\min}$ and $L^{\max}$, there is a
source with these packet sizes that achieves equality in the arrival curve constraint, or more specifically, if
the sequence of packets obtained by packetizing the fluid source $R(t)=\alpha(t)$ satisfies the arrival curve constraint $\alpha$. Note that the fluid source $R(t)=\alpha(t)$ always satisfies the arrival curve constraint $\alpha$ since we can always assume that $\alpha$ is sub-additive; however packetization may introduce some violations \cite{le2002some}. Any concave arrival curve that satisfies $\alpha(0)=0$ and $\alpha(0^+)\geq L^{\max}$ is achievable \cite[Thm 1.7.3]{le_boudec_network_2001,le2002some}. For a flow with packets of constant size, any arrival curve whose values are integer multiples of the packet size is achievable as soon as it is sub-additive, left-continuous and satisfies $\alpha(0)=0$.


A commonly used arrival curve is the ``leaky bucket'' arrival curve with rate $r$ and burst $b$, defined by $\alpha(t)=rt+b, t>0$ and $\alpha(0)=0$,  which expresses that the rate of the flow is limited to $r$, with a burst tolerance equal to $b\geq L^{\max}$; it is always achievable. Another  commonly used arrival curve is the staircase arrival curve with period $\tau$ and burst $b$ defined by $\alpha(t)=b \left\lceil \frac{t}{\tau}\right\rceil$, which applies to periodic flows that send $b$ bytes every $\tau$ time units. It is achievable if all packets are of size $b$. If for example, in contrast, $b=1500$bytes but the packets emitted by the source all have a size equal to $1200$bytes, then this arrival curve is not achievable: $b$ should be set to $1200$, i.e. $\alpha$ should be replaced by a smaller arrival curve, which is then achievable.

Instead of counting bytes, constraints can be expressed in number of packets. Formally, let $\alpha_{\mathrm{pkt}}$ be some wide-sense increasing function $[0,+\infty)\to [0,+\infty]$; the flow is said to satisfy the arrival curve constraint $\alpha_{\mathrm{pkt}}$ at some observation point if the number of packets observed on the flow at this observation point on any interval $(s,t]$ is upper-bounded by $\alpha_{\mathrm{pkt}}(t-s)$. A packet-level arrival curve used by IEEE TSN is the staircase one, with period $\tau$ and number of packets $K$, defined by $\alpha_{\mathrm{pkt}}(t)=K \left\lceil \frac{t}{\tau}\right\rceil$, which expresses that the flow sends at most $K$ packets every $\tau$ time units. Packet-level arrival curves can always be replaced by an integer-valued sub-additive, left continuous function that vanishes at zero; it is then always achievable.

Last, we also need two technical assumptions. First, we assume that arrival curves are not bounded from above, i.e. $\lim_{t\to+\infty}\alpha(t)=+\infty$. This holds for all arrival curves of interest. Second, we assume that every flow has a maximum and minimum packet size $L^{\max}$ and $L^{\min}$; then the number of bytes observed on any time interval must be an element of $\calL$, the set of all possible sums of a finite number of packet sizes. If $L^{\max}\geq 2 L^{\min}$, then $\calL$ is made of all numbers $\geq L^{\min}$; if $L^{\max} =L^{\min}$ then $\calL$ is made of all multiples of $L^{\max} =L^{\min}$. Unless otherwise specified, we assume either of these conditions holds, as otherwise $\calL$ is cumbersome and tightness results would become very complex.
Appendix \ref{sec:appendix-arr} provides an alternative characterization of arrival curves that is used in the proofs of the theorems.

%
%
%
%
%

\subsection{Network Calculus Results in Non-FIFO Networks}
\label{sec-ncnf}
In a time-sensitive network, the burstiness of a flow may increase at every node, due to multiplexing and random delays. Thus an arrival curve constraint at the source is usually no longer valid inside the network. Analysis of time-sensitive networks uses bounds on the propagation of arrival curves \cite{bouillard_deterministic_2018}. Such results are based on network calculus theorems that were derived for order-preserving networks\cite[Section 1.4.1]{le_boudec_network_2001}, but which can be extended to non  order-preserving networks, as we show next. Specifically, we will use the following two results, the proofs of which are in appendix.
\begin{lemma}\label{lem:acp}
Assume a flow has arrival curve $\alpha$ at the input of some system $\calS$, which needs not preserve the order of packets of the flow.
Assume the delay jitter of the flow through $\calS$ is upper bounded by some quantity $V$. At the output of $\calS$, the flow has arrival curve $\alpha'$ given by $\alpha'(t)=\alpha(t+V)$.  The same holds, mutatis mutandi, for packet-level arrival curves.
\end{lemma}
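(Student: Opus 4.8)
The plan is to work directly from the definitions of delay, delay jitter, and arrival curve, and to exploit the fact that the arrival-curve definition quantifies over \emph{arbitrary} intervals, so it does not care whether packets leave $\calS$ in the same order they entered. Fix an observation interval $(s,t]$ at the output of $\calS$. I want to bound the number of bytes of the flow that are observed leaving $\calS$ during $(s,t]$, and show this is at most $\alpha(t-s+V)$.

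First I would set up notation: for each packet $n$ that traverses $\calS$ (and is not lost), let $a_n$ be its arrival time at the input of $\calS$ and $e_n$ its departure time from the output; the delay is $d_n = e_n - a_n$, and by the delay-jitter hypothesis there is a constant (the best-case delay) $\underline d$ with $\underline d \le d_n \le \underline d + V$ for all such $n$, since $d_m - d_n \le V$ for all $m,n$ by \eqref{eq:jitter-def}. The key step: if a packet departs in $(s,t]$, i.e. $s < e_n \le t$, then its input time satisfies $a_n = e_n - d_n \ge s - (\underline d + V)$ and $a_n = e_n - d_n \le t - \underline d$, hence $a_n \in (s - \underline d - V,\, t - \underline d]$, an interval of length $t - s + V$. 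Therefore the set of packets leaving $\calS$ during $(s,t]$ is a subset of the set of packets entering $\calS$ during $(s-\underline d - V, t - \underline d]$; consequently the number of bytes observed at the output on $(s,t]$ is at most the number of bytes observed at the input on an interval of length $t-s+V$, which by the input arrival-curve constraint is at most $\alpha(t-s+V)$. This gives $\alpha'(t) = \alpha(t+V)$ after relabelling the interval length. The argument for packet-level arrival curves is identical, counting packets instead of bytes.

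The point worth care — and the only place non-order-preservation could look like an obstacle — is the inclusion of packet sets: one must note that we are comparing the \emph{byte count} of two packet sets, one a subset of the other, so monotonicity of counting gives the bound regardless of the order in which packets appear at either observation point; we never need $e_n$ to be monotone in $n$, and indeed \eqref{eq:reordering-byte-def} and the surrounding discussion already warn that $E_n$ need not be monotone. A minor technical point is that $\underline d$ is attained (or at least approached) over the finite set of packets of the flow, which is fine since each flow emits finitely many packets; and the shifted interval $(s-\underline d-V, t-\underline d]$ may have negative endpoints, but that is harmless since we only use it to bound a count of packets and $\alpha$ is defined for all nonnegative lengths, the relevant length $t-s+V$ being nonnegative. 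Finally, $\alpha'(t)=\alpha(t+V)$ is wide-sense increasing because $\alpha$ is, so it is a legitimate arrival curve; if one wishes the normalized form ($\alpha'(0)=0$, left-continuous, sub-additive), one replaces $\alpha'$ by its standard regularization, which only makes it smaller and hence still valid.
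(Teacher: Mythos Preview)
Your proof is correct and is essentially the same argument as the paper's: both fix the best-case delay $\underline d$ (the paper calls it $d^{\min}$), use $\underline d\le d_n\le \underline d+V$ to map the output interval $(s,t]$ back to the input interval $(s-\underline d-V,\,t-\underline d]$ of length $t-s+V$, and then apply the input arrival-curve constraint. The only cosmetic difference is that the paper phrases this via cumulative functions $R,R'$ and a step function $\varphi$, whereas you argue directly by inclusion of packet sets; the underlying logic is identical.
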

\begin{lemma}\label{lem:backlog}
Assume a flow has arrival curve $\alpha$ at the input of some system $\calS$, which needs not preserve the order of packets of the flow.
Assume the worst-case delay of the flow through $\calS$ is upper bounded by some quantity $U$. At any point in time, the amount of data of the flow that is present in $\calS$ is upper-bounded by $\alpha(U)$.  The same holds, mutatis mutandi, for packet-level arrival curves.
\end{lemma}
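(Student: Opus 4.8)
The plan is to track individual bits of the flow (and, for the packet-level version, individual packets) through $\calS$ and to use the worst-case delay bound to localize their arrival times within a window of length $U$ immediately preceding the observation instant; the arrival curve $\alpha$ then caps the amount of data that can fit into such a window. This mirrors the classical backlog bound of \cite[Section 1.4.1]{le_boudec_network_2001}, the only twist being that $\calS$ need not be FIFO.

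Concretely, fix an observation time $t$. For each bit of the flow, let $a$ denote its arrival time at the input of $\calS$ and $e$ its departure time at the output; since the worst-case delay is at most $U$, we have $e \le a + U$. A bit contributes to the amount of data of the flow present in $\calS$ at time $t$ exactly when $a \le t < e$. For such a bit, $t < e \le a + U$, hence $a > t - U$, so $a \in (t-U, t]$. Therefore the data of the flow present in $\calS$ at time $t$ is no larger than the total amount of data of the flow that arrived at the input of $\calS$ during the interval $(t-U, t]$, which by the arrival curve constraint is at most $\alpha\big(t - (t-U)\big) = \alpha(U)$. Since $t$ is arbitrary, this proves the bound. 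The packet-level statement follows by the identical argument with ``bit'' replaced by ``packet'' and $\alpha$ by $\alpha_{\mathrm{pkt}}$, counting whole packets that have arrived but not yet fully departed.

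The only point requiring care — and the main, rather mild, obstacle — is that $\calS$ need not be order-preserving, so there is no single cumulative departure function and the usual ``arrivals minus departures'' expression for backlog is unavailable; the argument above sidesteps this by reasoning per bit (resp.\ per packet) rather than with aggregate cumulative functions, using only that each atomic unit enters and leaves $\calS$ with an individual delay bounded by $U$. One should also note that the hypothesis $U < +\infty$ implicitly requires that every bit eventually leaves $\calS$ (the flow is lossless through $\calS$), which is exactly what keeps ``the amount of data present'' finite; under this assumption the localization $a \in (t-U,t]$ is valid for every unit still in the system at time $t$, and the proof goes through.
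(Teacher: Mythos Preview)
Your proof is correct and the core idea---localizing the arrival time of any unit of data present in $\calS$ at time $t$ to the window $(t-U,t]$, then invoking the arrival curve---is exactly the paper's argument. The one point worth correcting is your claim that ``there is no single cumulative departure function and the usual `arrivals minus departures' expression for backlog is unavailable'': this is false, and in fact the paper's proof proceeds precisely via cumulative functions $R(t)=\sum_n l_n\ind{A_n<t}$ and $R'(t)=\sum_n l_n\ind{D_n<t}$, writes the backlog as $R(t)-R'(t)$, and uses $D_n\leq A_n+U$ together with monotonicity of the indicator to bound it by $R(t)-R(t-U)\leq\alpha(U)$. Non-FIFO-ness is irrelevant to the validity of $R'$; it only means $R'$ need not be expressible as a time-shift of $R$. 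So your per-bit phrasing and the paper's cumulative-function phrasing are the same argument in different clothing, not a genuine alternative route.
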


\subsection{Lower Pseudo-inverse}
\label{sec:lps}
Let $F$ be a wide-sense increasing function $[0,+\infty)\to [0,+\infty)$. Its
lower pseudo-inverse, $F^{\downarrow}$, is the wide-sense increasing function $[0,+\infty)\to [0,+\infty)$ defined by \cite{liebeherr_duality_2017}:
\begin{align}\label{eq:def-lower-inverse}
F^{\downarrow}(x) = \inf\left\{s\geq 0~|~F(s)\geq x\right\}.
\end{align}
\begin{figure}[h]
	\centering
	\includegraphics[width=0.6 \linewidth]{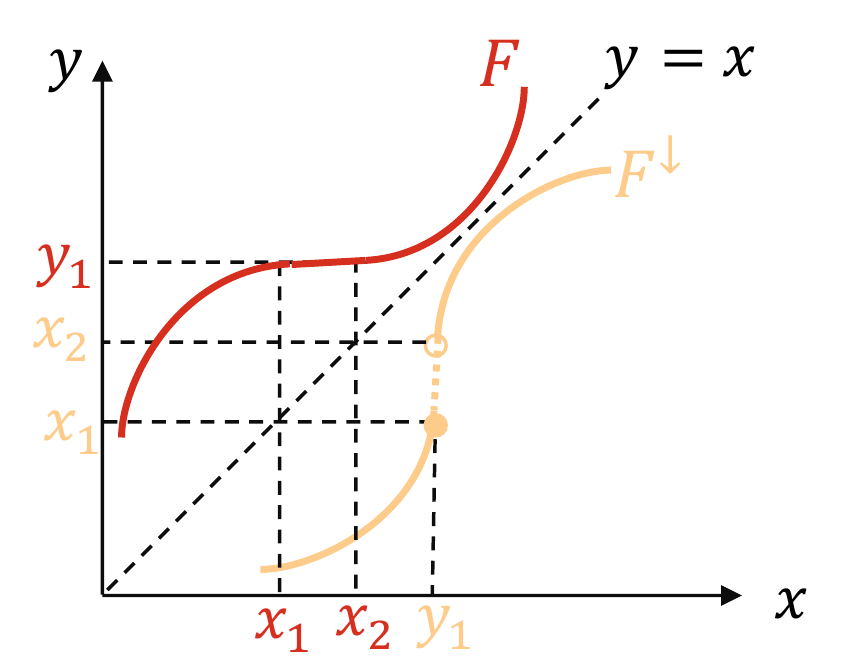}
	\caption{Illustration of lower pseudo-inverse of a monotonically increasing function $F$. The lower pseudo-inverse, $F^{\downarrow}$, is obtained by flipping the graph of $F$ around the line $y=x$. The resulting graph does not correspond to a function as the plateau part of $F$ i.e., $x_1$ to $x_2$, causes ambiguity. With the lower pseudo-inverse, the ambiguity is resolved by selecting the infimum,  i.e., $F^{\downarrow}(y_1)=x_1$.}
	\label{fig:pseudo}
\end{figure}
Some of the common functions and their lower-pseudo inverses are:
\begin{align} \label{eq:common-function-pseudo}
\nonumber F(t) = rt+b, t>0; F(0)=0 \indent \implies F^{\downarrow}(x) = \left[\frac{x-b}{r}\right]^+,\\
F(t) = b\ceil{\frac{t}{\tau}} \indent \implies F^{\downarrow}(x) = \tau \ceil{\frac{x-b}{b}}, x>0; F^{\downarrow}(0)=0.
\end{align}
It follows immediately from \cite[Property P7, Section 10.1]{liebeherr_duality_2017} that
\begin{align}\label{eq:gen-node-reordering-byte-11}
	\forall x,y \in [0,+\infty):~~F^{\downarrow}(y)  < x \implies y \leq F(x).
\end{align}


\subsection{Notation List}
Packet~$n$ is the packet with sequence number $n$ if the input point of observation is the source of the flow, otherwise it is the $n$th packet in chronological order at the input of the system of interest.
\begin{itemize}
	\item $l_n$: length of packet~$n$, in bytes.
	\item $L^{\min}$: minimum packet length of the flow of interest, in bytes.
	\item $L^{\max}$: maximum packet length of the flow of interest, in bytes.
	\item $A_n$: time at which packet~$n$ is released by its source.
	\item $D_n$: departure time of packet~$n$ from a re-sequencing buffer.
	\item $E_n$: the exit time of packet~$n$ from a non order-preserving or order-preserving element.
	\item $T$: timeout value of re-sequencing buffer.
	\item $B$: size of re-sequencing buffer.
	\item $\lambda$: reordering late time offset (RTO) of the flow of interest.
	\item $\pi$: reordering byte offset (RBO) of the flow of interest.
	\item the maximum of an empty set is $-\infty$.
	\item $[x]^+ = \max(x,0)$.
	\item $F^{\downarrow}$: lower pseudo-inverse of function $F$.
\end{itemize}

	\section{Properties of the Re-sequencing Buffer}\label{sec:reseq-param}
 In this Section we first provide a formal input-output characterization of the re-sequencing buffer. Then we use it to analyze the optimal parameter setting, assuming that bounds on RTO and RBO of the flow are known. Last, we characterize the performance effect of a re-sequencing buffer in terms of delay, delay jitter and arrival curve propagation.

\subsection{Input-output Characterization of the Re-sequencing Buffer}
We use the notation in Figure~\ref{fig:reseq-gen}. Recall that a packet may be lost in the network or discarded by the re-sequencing buffer. The latter may occur either (1) when the packet arrives after a packet with smaller sequence number was released; this is due to the timeout value $T$ being too small; or (2) if the buffer size $B$ is too small. In this subsection, we assume the buffer is large enough, and in Section~\ref{sec:cdsq} we compute the maximum buffer occupancy, which will give the required buffer size for a given flow. The following theorem characterizes the departure times from the re-sequencing buffer and is the basis from which the results in the rest of this section are derived.

\begin{figure}[h]
	\centering
	\includegraphics[width=0.8 \linewidth]{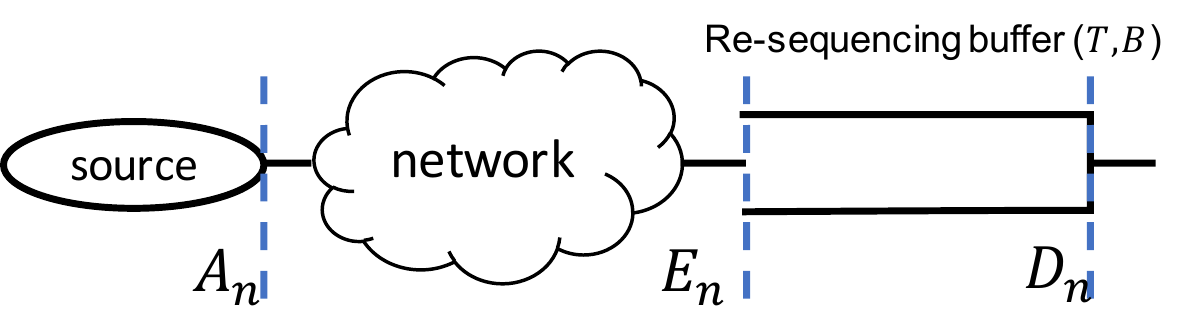}
	\caption{Notation used in Section~\ref{sec:reseq-param}. Packets of the flow of interest are emitted in sequence by a source. Packet with sequence number $n$ is emitted at time $A_n$, traverses a non order-preserving network, reaches the re-sequencing buffer at time $E_n$, from which it is released at time $D_n$. If the packet is lost by the network, then $E_n=D_n=+\infty$. If the re-sequencing buffer discards packet~$n$, then $D_n=+\infty$.}
	\label{fig:reseq-gen}
\end{figure}

\begin{theorem}
\label{thm:io}
  Consider the re-sequencing buffer described in Section \ref{sec:sys} with timeout value equal to $T$ and with infinite buffer capacity $B=+\infty$. See Figure~\ref{fig:reseq-gen} for the notation.
  \begin{enumerate}
    \item The packet with sequence number $n$ leaves the re-sequencing buffer at time $D_n$ given by
	\begin{align} \label{eq:reseq-io}
	D_n &= \begin{cases}
	I_n\indent&\mathrm{~if~}\indent n=1,\\
	\max\{G_n , I_n\} \indent &\mathrm{~if~} \indent n>1
	\end{cases}
	\end{align}
\begin{equation}
\label{eq:ioi}
	\mbox{with } I_n = \begin{cases}
	+\infty\indent\mathrm{~if~}\indent E_n>\min_{j\geq n}\{E_j\}+T,\\
	E_n \indent\mathrm{~otherwise}
	\end{cases}
\end{equation}
\begin{equation}
	 \mbox{and, for } n\geq 2: \;\;\;G_n = \min\left(D_{n-1}, T + \min_{j\geq n}\{E_j\}\right).\label{eq:iog}
\end{equation}	

    \item Let $\lambda$ be the RTO of this flow between the source and the input of the re-sequencing buffer. If $T\geq \lambda$ and packet~$n$ is not lost in the network (i.e. $E_n<+\infty$) then it also holds that
        \begin{align}
	D_n &= \begin{cases}
	E_1\indent&\mathrm{~if~}\indent n=1,\\
	\max\{G_n , E_n\} \indent &\mathrm{~if~} \indent n>1
	\end{cases}
    \label{eq:iod2}
	\end{align} where $G_n$ is defined in \eqref{eq:iog}.

    \item If $T\geq \lambda$ and the network is lossless (i.e. $E_n<+\infty$ for all $n$), then it also holds that
        $D_n = \max_{k\leq n}\left\{E_k\right\}$.
  \end{enumerate}

\end{theorem}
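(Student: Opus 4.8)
The plan is to prove the identity $D_n=\max_{k\leq n}\{E_k\}$ by induction on $n$, built on item~2 of the theorem. Because the network is lossless we have $E_n<+\infty$ for every $n$, so the hypotheses of item~2 are satisfied for all indices; hence $D_1=E_1$ and, for $n\geq 2$, $D_n=\max\{G_n,E_n\}$ with $G_n=\min\bigl(D_{n-1},\,T+\min_{j\geq n}\{E_j\}\bigr)$ as in \eqref{eq:iog}. The base case is immediate since $D_1=E_1=\max_{k\leq 1}\{E_k\}$.

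For the inductive step, assume $D_{n-1}=\max_{k\leq n-1}\{E_k\}$. The key point is that the timeout term in $G_n$ is never the active one, i.e.\ $D_{n-1}\leq T+\min_{j\geq n}\{E_j\}$. To see this, fix any $k\leq n-1$ and any $j\geq n$, so that $j>k$; since the network is lossless, packet~$k$ is not lost, and \eqref{eq:reordering-time-def} applied to packet~$k$ gives $E_j\geq E_k-\lambda$, hence $E_k\leq E_j+\lambda\leq E_j+T$ because $T\geq\lambda$. Taking the minimum over $j\geq n$ and then the maximum over $k\leq n-1$ yields $D_{n-1}=\max_{k\leq n-1}\{E_k\}\leq T+\min_{j\geq n}\{E_j\}$. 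Consequently $G_n=D_{n-1}$, and therefore $D_n=\max\{D_{n-1},E_n\}=\max\bigl\{\max_{k\leq n-1}\{E_k\},\,E_n\bigr\}=\max_{k\leq n}\{E_k\}$, which closes the induction. As a by-product, the induction keeps every $D_n$ finite, which confirms that no packet is discarded by the re-sequencing buffer in this regime.

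I do not expect a genuine obstacle: the only step that requires care is the invocation of \eqref{eq:reordering-time-def}, which is valid for the pair $(k,j)$ with $j\geq k$ precisely because packet~$k$ is delivered, and this holds for every index under the losslessness assumption. If one wishes to avoid citing item~2, the same induction can be run directly from item~1 after observing that $T\geq\lambda$ forces $I_n=E_n$ for all $n$: the restricted and unrestricted minima in the definition of $\lambda_n$ coincide, so $\lambda_n=E_n-\min_{j\geq n}\{E_j\}\leq\lambda\leq T$, which is exactly the condition under which \eqref{eq:ioi} sets $I_n=E_n$; the recursion of \eqref{eq:reseq-io} then collapses to that of \eqref{eq:iod2} and the argument proceeds as above.
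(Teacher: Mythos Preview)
Your proof of item~(3) is correct and follows essentially the same route as the paper: induction on $n$ starting from item~(2), with the key observation that the RTO bound \eqref{eq:reordering-time-def} together with $T\geq\lambda$ forces $D_{n-1}=\max_{k\leq n-1}\{E_k\}\leq T+\min_{j\geq n}\{E_j\}$, so that $G_n=D_{n-1}$ and the recursion collapses to $D_n=\max\{D_{n-1},E_n\}$. Your additional remark on bypassing item~(2) via $I_n=E_n$ is exactly the content of the paper's proof of item~(2), so nothing new is required there either.
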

The proof Theorem \ref{thm:io} is in Appendix B. It is based on induction and uses the description of re-sequencing buffer presented in Section \ref{sec:sys}.

\subsection{Optimal Dimensioning of the Parameters of the Re-sequencing Buffer}
\label{sec:cdsq}
Recall that, by construction, the re-sequencing buffer always delivers packets in-order. However, it may do so by discarding late packets. We can now use the previous theorem to derive the minimal values of the timeout $T$ and the size $B$ of the re-sequencing buffer, such that it never discards any packet. We start with the timeout value.

\begin{theorem}\label{thm:reseq-param}
	Consider the re-sequencing buffer described in Section \ref{sec:sys} and Figure~\ref{fig:reseq-gen}, with timeout value of $T$ and infinite buffer size $B=+\infty$. Let $\lambda$ be the RTO of the flow of interest between the source and the input of the re-sequencing buffer. The minimum value of $T$ that guarantees that the re-sequencing buffer never discards packets of this flow is $T= \lambda$.
\end{theorem}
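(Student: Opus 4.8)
The plan is to read the ``no discard'' condition directly off the input-output characterization of Theorem~\ref{thm:io} and then match it term-by-term with the definition of the RTO. First I would note that, since $B=+\infty$, a packet that reaches the re-sequencing buffer --- i.e.\ a packet~$n$ with $E_n<+\infty$ --- is never discarded because of a full buffer and, once stored, is released no later than $T$ after its arrival; hence such a packet is discarded by the buffer if and only if $D_n=+\infty$. Now, for $n\geq 2$, \eqref{eq:reseq-io} and \eqref{eq:iog} give $D_n=\max\{G_n,I_n\}$ with $G_n=\min\bigl(D_{n-1},\,T+\min_{j\geq n}\{E_j\}\bigr)$; because $\min_{j\geq n}\{E_j\}\leq E_n<+\infty$ and $T$ is finite, $G_n$ is finite, so by \eqref{eq:ioi} we get $D_n=+\infty$ if and only if $I_n=+\infty$, i.e.\ if and only if $E_n>\min_{j\geq n}\{E_j\}+T$. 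For $n=1$ we have $D_1=I_1$ and the same condition applies.

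The second step is to recognise $E_n>\min_{j\geq n}\{E_j\}+T$ as the statement $\lambda_n>T$. For this I would first observe that the minimum in the definition of $\lambda_n$ may be taken without the constraint $E_j\leq E_n$: both index sets contain $n$, and any index attaining $\min_{j\geq n}\{E_j\}$ automatically satisfies $E_j\leq E_n$, so dropping the constraint changes neither minimum. Thus $\min_{j\mid j\geq n,\,E_j\leq E_n}\{E_j\}=\min_{j\geq n}\{E_j\}$, hence $\min_{j\geq n}\{E_j\}=E_n-\lambda_n$, and the discard condition for packet~$n$ becomes exactly $\lambda_n>T$. Taking the union over all non-lost packets, the re-sequencing buffer discards at least one packet of the flow if and only if $\lambda=\max_{n\mid E_n<+\infty}\lambda_n>T$.

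The theorem is then immediate in both directions. If $T=\lambda$, then $\lambda_n\leq\lambda=T$ for every non-lost packet, so nothing is discarded, which proves that $T=\lambda$ suffices. For tightness, recall that the flow emits finitely many packets, so the maximum defining $\lambda$ is attained, say at $n^{*}$ with $\lambda_{n^{*}}=\lambda$; for any $T<\lambda$ we then have $\lambda_{n^{*}}>T$, so packet~$n^{*}$ is discarded. Hence no timeout strictly below $\lambda$ avoids all discards, and $\lambda$ is the minimal value.

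I do not expect a serious obstacle: the heavy lifting is already contained in Theorem~\ref{thm:io}, and what remains is mostly bookkeeping. The one point that must be handled cleanly --- and is really the crux --- is the set-rewriting of the minimum in the definition of $\lambda_n$, which is precisely what ties the operational characterization of the buffer to the RFC~4737 metric; beyond that, only the routine care around the value $+\infty$ (lost packets, empty minima) and the degenerate case $\lambda=0$ (an order-preserving path, where even $T=0$ discards nothing) is needed.
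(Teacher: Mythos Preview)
Your argument is correct. The crucial observation that $\min_{j\mid j\geq n,\,E_j\leq E_n}E_j=\min_{j\geq n}E_j$ (so that $E_n-\min_{j\geq n}E_j=\lambda_n$) is exactly what links the discard test in \eqref{eq:ioi} to the RFC\,4737 metric, and once that identity is in place both directions fall out immediately: discard of packet~$n$ is equivalent to $\lambda_n>T$, hence discard of some packet is equivalent to $\lambda>T$.

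Your route differs from the paper's. For sufficiency, the paper invokes item~2 of Theorem~\ref{thm:io} (which already encodes $T\geq\lambda\Rightarrow I_n=E_n$) inside an inductive argument establishing the monotonicity $D_n\geq\max_{i<n,\,E_i<\infty}D_i$; the finiteness of $D_n$ is then read off the formula $D_n=\max(G_n,E_n)$. For necessity, the paper does not use the given flow at all but constructs, for each $\lambda>0$, an explicit two-packet trace with RTO~$\lambda$ in which packet~1 is discarded whenever $T<\lambda$. Your approach is more direct: by working from item~1 of Theorem~\ref{thm:io} you obtain a per-packet \emph{characterization} of discard, which yields both sufficiency and tightness for the \emph{given} trace (the witness being the packet $n^{*}$ that attains $\lambda$). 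This is sharper than the paper's tightness statement, which only exhibits \emph{some} flow with RTO~$\lambda$ that needs $T\geq\lambda$. On the other hand, the paper's explicit witness has the advantage of being self-contained (no appeal to attainment of the max) and of giving a canonical adversarial trace; the monotonicity of $D_n$ proved along the way is not needed for Theorem~\ref{thm:reseq-param} itself but is of independent interest.
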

The proof of Theorem~\ref{thm:reseq-param} is in Appendix B.
 It consists in two steps. First, using Theorem~\ref{thm:io}, item~2, we show that, if $T\geq \lambda$, there is no packet discard due to spurious timeout. Second, using Theorem~\ref{thm:io}, item~1, we show that, for any $\lambda>0$, if $T<\lambda$ we can construct an execution trace with RTO $\lambda$ such that a packet is discarded due to spurious timeout.

Theorem~\ref{thm:reseq-param} thus establishes the central role of the RTO metric as far as the timeout value is concerned. For the required buffer size, the results are more complex, as shown in the next theorem.

\begin{theorem}\label{thm:reseq-param-size}
Consider the re-sequencing buffer described in Section \ref{sec:sys} and Fig.~\ref{fig:reseq-gen}, with timeout value of $T$ and buffer size $B$. Let $\lambda$, $\pi$ and $V$ be the RTO,
RBO and delay jitter of the flow of interest between the source and the input of the re-sequencing buffer. Assume that $T\geq \lambda$. Also assume that the flow has arrival curve $\alpha$ at its source. The minimal size of the re-sequencing buffer required to avoid buffer overflow is
	\begin{enumerate}
		\item  $B = \pi$, if the network in Fig.~\ref{fig:reseq-gen} is lossless for the flow;
		\item  $B = \alpha(V+T)$, if the network in Fig.~\ref{fig:reseq-gen} is not lossless for the flow.
	\end{enumerate}
\end{theorem}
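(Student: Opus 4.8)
The plan is to run the re-sequencing buffer with $B=+\infty$, study its instantaneous occupancy $Q(t)=\sum_{n\,:\,E_n\le t<D_n}l_n$ through the closed forms for $D_n$ from Theorem~\ref{thm:io}, and show that $\max_t Q(t)$ equals $\pi$ in the lossless case and $\alpha(V+T)$ otherwise. This settles both directions at once: whenever $B$ is at least this maximum, the finite buffer coincides with the infinite one and never overflows (sufficiency), while a single execution attaining the maximum shows that no smaller $B$ works (necessity). Recall that, since $T\ge\lambda$, Theorem~\ref{thm:reseq-param} already rules out discards caused by premature timeouts, so buffer overflow is the only remaining way to drop a packet, and $D_n<+\infty$ for every non-lost $n$.

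\emph{Lossless case.} Here Theorem~\ref{thm:io}, item~3, gives $D_n=\max_{k\le n}E_k$. Fix $t$ and put $m=\min\{k\,:\,E_k>t\}$ (if no such $k$ exists, the buffer is empty at $t$). One checks that packet~$n$ is in the buffer at $t$ if and only if $n>m$ and $E_n\le t$: $D_n=\max_{k\le n}E_k>t$ forces some $k\le n$ with $E_k>t$, hence $m\le n$, and $m=n$ is impossible because $E_n\le t$; conversely $n>m$ gives $D_n\ge E_m>t$. Since $t<E_m$, each such $n$ has $n>m$ and $E_n<E_m$, so $Q(t)\le\sum_{n>m,\,E_n<E_m}l_n=\pi_m\le\pi$ ($\pi_m$ is defined because packet~$m$ is not lost). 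Hence $B=\pi$ suffices, \emph{independently} of $V$ and $\alpha$. For tightness, take an execution with RBO exactly $\pi$ in which some packet~$m$ has $\pi_m=\pi$ while packets $1,\dots,m-1$ are the first to arrive (which does not change $\pi_m$); evaluating $Q$ just below $E_m$ gives occupancy $\pi$, so every $B<\pi$ overflows. The RTO of this execution stays $\le\lambda$ by using small overtaking amounts, as in the proof of Theorem~\ref{thm:reseq-param}.

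\emph{Lossy case.} From Theorem~\ref{thm:io}, item~2, for every non-lost $n$ we have $G_n\le T+\min_{j\ge n}E_j\le T+E_n$, hence $E_n\le D_n\le E_n+T$; so any packet present in the buffer at $t$ has $E_n\in(t-T,t]$. Writing $E_n=A_n+d_n$ with $d_n$ the source-to-buffer delay, and using that any two non-lost packets have delays differing by at most $V$, all such packets were emitted inside the common window $(t-T-d_{\max},\,t-d_{\min}]$, of length at most $T+V$; the source arrival curve gives $Q(t)\le\alpha(T+d_{\max}-d_{\min})\le\alpha(V+T)$, so $B=\alpha(V+T)$ suffices. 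For necessity I build an execution in which packet~$1$ is delivered early, packet~$2$ is lost, and packets $3,4,\dots$ form a maximal $\alpha$-conformant burst emitted over a window of length close to $T+V$, the earliest of them getting the largest delay and the latest the smallest, so that all arrive at the buffer within an interval of length less than $T$ and therefore coexist in the buffer just before packet~$3$'s timer fires; the occupancy then reaches $\alpha(V+T)$ up to packetization. This execution is order-preserving apart from the single loss, so its RTO is $0\le T$ (a minimal swap among the burst packets raises it to $\lambda$ if an exact match is wanted).

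The occupancy bounds are routine; the real work is the lossy-case tightness construction, where a burst carrying $\alpha(V+T)$ bytes must be made to pile up completely while simultaneously meeting the arrival-curve, jitter and RTO constraints. This is exactly where achievability of $\alpha$ (Section~\ref{sec:arrcur}) enters; if $\alpha$ is not achievable, the attained value is the largest element of $\calL$ not exceeding $\alpha(V+T)$, in keeping with the technical assumptions of Section~\ref{sec:arrcur}. A minor point is bookkeeping with the half-open windows $(t-T,t]$ and $(t-T-d_{\max},t-d_{\min}]$ and the left-continuity of $\alpha$, so that a window of length $\le T+V$ really yields the bound $\alpha(V+T)$.
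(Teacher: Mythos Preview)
Your proposal is correct and follows the same overall architecture as the paper: use Theorem~\ref{thm:io} to bound the occupancy of an infinite buffer, then exhibit matching executions. The lossy sufficiency is exactly the paper's argument (Lemmas~\ref{lem:acp}--\ref{lem:backlog} plus Theorem~\ref{thm:delay-reseq}) unrolled into a direct window computation, and your lossy tightness scenario is essentially the paper's, only with the first packet delivered rather than lost.

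One point is worth noting. Your lossless sufficiency is actually \emph{cleaner} than the paper's: by taking $m=\min\{k:E_k>t\}$ at an arbitrary time $t$, you get $Q(t)\le\pi_m$ in two lines. The paper instead fixes the arriving packet $n$, sets $m=\min\{i<n:D_i\ge E_n\}$, and then needs an auxiliary lemma to establish $E_m\ge E_n$ before reaching the same inequality. Your choice of pivot sidesteps that lemma entirely.

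Where your write-up is weaker is the tightness side. The paper gives fully explicit traces (packet sizes, emission times, arrival times) and verifies each constraint---arrival curve, jitter, RTO---line by line; in the lossy case this takes a page of inequalities to confirm that all burst packets arrive strictly before the first timer fires while the jitter stays below $V$. Your constructions are sketched at the level of ``arrange delays so that\ldots'' and ``a minimal swap raises the RTO to $\lambda$'', which is fine as an outline but leaves the reader to check that the arrival-curve constraint is preserved when you insert the leading packet(s), that the interleaving of burst arrivals respects the tie-breaking convention, and that the modified trace really attains RTO exactly $\lambda$ (not merely $\le\lambda$). None of this is hard, but it is where the work is, and the paper spells it out.
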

The proof of Theorem~\ref{thm:reseq-param-size} is in Appendix B.
 It consists in four steps. First, assuming the network in Figure~\ref{fig:reseq-gen} is lossless for this flow and using Theorem~\ref{thm:io}, we show that the actual buffer content is upper bounded by $\pi$, which shows that a buffer of size $B=\pi$ is sufficient. Second, we show that for any $\lambda>0$ and any valid RBO value $\pi$ (a valid RBO value is a number that can be decomposed as the sum of an arbitrary number of packet sizes) there always exists one execution trace of a flow with RTO $\lambda$ and RBO $\pi$, that achieves a buffer content equal to $\pi$; therefore the minimal size cannot be less than $\pi$. If $\lambda=0$, the network preserves packet order for this flow and thus $\pi=0$ as well and the result is clear. This shows item~1.
Third, using Lemmas~\ref{lem:acp} and \ref{lem:backlog} in Section~\ref{sec:arrcur}, we show that, if the network is not lossless for this flow, the actual buffer content is upper bounded by $\alpha(V+T)$. 
Fourth, we show that, for any achievable arrival curve $\alpha$, RTO $\lambda$, jitter $V$ and timeout value $T$, we can construct an execution trace with RTO $\lambda$ in which the buffer content can become arbitrarily close to $\alpha(V+T)$.
This shows item~2.

\begin{remark}
It follows from
Theorem~\ref{thm:gen-node-reordering-byte} that the bound, $\pi$, in item~1, is always less than the bound, $\alpha(V+T)$, in item~2, as expected.
\end{remark}

\begin{remark}
Loss-free operation is often considered as the normal case in time-sensitive networks, since congestion losses are avoided and transmission losses are very rare; a packet loss might then seen as an exceptional error case, treated by exception-handling routines. If such an assumption can be made, Theorem~\ref{thm:reseq-param-size} shows that the required buffer content is only dependent on the RBO of the flow.

However, such an interpretation should be taken with care. Indeed, the loss of a single packet before the input to the re-sequencing buffer may delay a number of other packets: the first arriving packet with sequence number larger than the lost packet is delayed at the re-sequencing buffer by $T$, and, depending on the scenario, following packets may be delayed as well.
Thus, the loss of a single packet, even rarely, may impose a delay increase to many more subsequent packets and may lead to the violation of the bound in item 1.
Hence, if the bound in item~1 is used for dimensioning the re-sequencing buffer, then the loss of a single packet in the network may cause the loss of  many more packets at the re-sequencing buffer due to an insufficient buffer size (since the bound in item~2 is always larger than in item~1). Quantifying this in detail is left to further study.
\end{remark}

\begin{remark}
The RBO and buffer size $B$ in Theorem~\ref{thm:reseq-param-size} are  expressed in bytes. Obviously, a similar result holds if we count in packets: if the flow is constrained at the source by a packet-level arrival curve $\alpha_{\mathrm{pkt}}$, then the size of the re-sequencing buffer, counted in packets, is upper-bounded by $\alpha_{\mathrm{pkt}}(V+T)$.
\end{remark}

\subsection{Effect of Re-sequencing on Worst-case Delay, Jitter and Arrival Curve}
 \label{sec:prp}
When re-sequencing buffers are used, they may affect packet delay. In this section, we quantify this effect in the sense of worst-case delay and delay jitter, as required in time-sensitive networks.

\begin{theorem}\label{thm:delay-reseq}
	Consider a flow as in Figure~\ref{fig:reseq-gen}, and let $\lambda$ be the RTO of the flow between the source and the input of the re-sequencing buffer. Assume that the timeout value $T$ of the re-sequencing buffer satisfies $T\geq \lambda$.  The worst-case delay and the delay jitter of the flow
	\begin{enumerate}
		\item are not increased, if the network is lossless;
		\item are increased by up to $T$, if the network is not lossless.
	\end{enumerate}
\end{theorem}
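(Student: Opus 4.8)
The plan is to compare, packet by packet, the delay of the flow measured just \emph{before} the re-sequencing buffer, $E_n-A_n$, with the delay measured just \emph{after} it, $D_n-A_n$. First I would pin down that these two delay families are indexed by the same packets: the hypothesis $T\ge\lambda$, together with the assumption inherited from Theorem~\ref{thm:io} that the buffer is large enough not to overflow, makes Theorem~\ref{thm:reseq-param} apply, so the re-sequencing buffer discards nothing and the delivered packets are exactly those with $E_n<+\infty$. Hence the worst-case delay, best-case delay and jitter before and after the buffer are all computed over the same index set, and it suffices to bound $D_n-A_n$ in terms of the $E_k-A_k$ for every such~$n$. I will also use that the source emits in sequence, so $A_k\le A_n$ whenever $k\le n$.

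For item~1 (lossless network) I would invoke Theorem~\ref{thm:io}, item~3, namely $D_n=\max_{k\le n}E_k$. Taking $k=n$ gives $D_n\ge E_n$, so $D_n-A_n\ge E_n-A_n$ and the best-case delay does not decrease; and $D_n-A_n=\max_{k\le n}(E_k-A_n)\le\max_{k\le n}(E_k-A_k)$, using $A_k\le A_n$, so the worst-case delay does not increase. Subtracting the two bounds, the jitter does not increase either. This settles item~1.

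For item~2 (lossy network) the key is the two-sided estimate $E_n\le D_n\le E_n+T$ for every delivered packet~$n$. The lower bound is immediate from Theorem~\ref{thm:io}, item~2 ($D_1=E_1$, and, for $n>1$, $D_n=\max\{G_n,E_n\}\ge E_n$). For the upper bound I would note that in \eqref{eq:iog}, $G_n=\min\!\big(D_{n-1},\,T+\min_{j\ge n}E_j\big)\le T+\min_{j\ge n}E_j\le T+E_n$, hence $D_n=\max\{G_n,E_n\}\le E_n+T$ (and trivially $D_1=E_1\le E_1+T$); this incidentally re-confirms $D_n<+\infty$ whenever $E_n<+\infty$. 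The estimate gives $E_n-A_n\le D_n-A_n\le (E_n-A_n)+T$ packet by packet; taking the maximum over delivered packets, the worst-case delay grows by at most~$T$, and taking the minimum, the best-case delay does not decrease, so the jitter grows by at most~$T$. To see that~$T$ cannot be improved, I would exhibit a trace in which the source emits packets $1,2,3,\dots$, packet~$1$ is lost in the network and every other packet traverses the network in order with a constant minimal delay (so $\lambda=0\le T$), packet~$3$ being emitted more than~$T$ later than packet~$2$: then \eqref{eq:reseq-io}--\eqref{eq:iog} give $D_2=E_2+T$ (packet~$2$ waits the full timeout for the missing packet~$1$) and $D_3=E_3$ (packet~$3$ is released at once), so both the worst-case delay and the jitter increase by exactly~$T$.

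The proof is, at bottom, bookkeeping on top of Theorem~\ref{thm:io}; the two points that need care are (i) checking that the lossless and lossy configurations deliver exactly the same packets, so that ``not increased'' and ``increased by up to~$T$'' are comparisons between like quantities, and (ii) keeping the direction of every inequality straight in $E_n\le D_n\le E_n+T$. The step I expect to be the main obstacle is the tightness construction in item~2, which requires tracking the re-sequencing-buffer dynamics around the lost packet — but the closed-form recursion of Theorem~\ref{thm:io} turns even that into a short computation.
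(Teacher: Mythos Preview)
Your proposal is correct and follows essentially the same route as the paper: both use Theorem~\ref{thm:io}, item~3, together with the monotonicity $A_k\le A_n$ for the lossless case, and the bound $G_n\le T+\min_{j\ge n}E_j\le T+E_n$ from Theorem~\ref{thm:io}, item~2, for the lossy case. Your write-up is slightly more complete in that you make explicit why the two delay families are indexed by the same packets and you supply a tightness trace for item~2, which the paper relegates to a separate remark (with a different but equivalent scenario: packet~$n$ at maximal delay and packet~$n-1$ lost).
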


Formally, with the notation in Figure~\ref{fig:reseq-gen}, the theorem means that, if the network is lossless (i.e. $E_n<+\infty$ for every $n$), then
\begin{align}
\max_n (D_n-A_n) = \max_n(E_n-A_n),
\end{align}
\begin{align}\label{eq:kljsda}
 \nonumber \max_n (D_n-A_n) - &\min_n (D_n-A_n)  = \\
  &\max_n(E_n-A_n)-\min_n(E_n-A_n)
\end{align}
as the former is the worst-case delay and the latter is the delay jitter.

In contrast, if there are some losses in the network, the theorem means that
\begin{align}
  \max_n (D_n-A_n) \leq \max_n(E_n-A_n)+T,
\end{align}
\begin{align}\label{eq:kljsda}
  \nonumber \max_n (D_n&-A_n) -\min_n (D_n-A_n)  \\
  &\leq \max_n(E_n-A_n)-\min_n(E_n-A_n)+T
\end{align}
The proof of Theorem~\ref{thm:delay-reseq} is in Appendix B.
It uses the input-output characterization of re-sequencing buffers in Theorem~\ref{thm:io}.

\begin{remark}
	Item (2) of Theorem \ref{thm:delay-reseq} is tight. Consider a packet~$n$ that experiences maximum delay $\delta^{\max}$ while packet~$n-1$ is lost in the network. Therefore, packet~$n$ should wait in the re-sequencing buffer until its timer expires after $T$ seconds. Then, packet~$n$ is delayed by $\delta^{\max} + T$.
\end{remark}

\begin{remark}
	Item (2) quantifies the price of misordering under lossy operation: the re-sequencing buffer, which is caused by the presence of misordering, increases the worst-case delay and the delay jitter of the flow by an amount ($T$) that is at least equal to the RTO.
\end{remark}

\begin{remark} The same remark about loss-free operation holds as in Section~\ref{sec:cdsq}. Specifically, the loss of a single packet may impose a delay increase to many more subsequent packets (e.g. to all packets that arrive before timeout). For example, if the flow has packet-level arrival curve $\alpha_{\mathrm{pkt}}$ at the source and jitter $V$ between the source and the input of the re-sequencing buffer, it can easily be seen, using the same arguments as in the proof of Theorem~\ref{thm:reseq-param-size}, that the loss of a single packet may cause the delay bound in item~1 to be violated for a number of packets equal to $\alpha_{\mathrm{pkt}}(V+T)$. This stresses again that the result in item~1 should be taken with care, and that the delay bounds in item~(2) are more realistic.
\end{remark}

We can apply Lemma~\ref{lem:acp} to the previous theorem and quantify the propagation of arrival curves through a re-sequencing buffer:
\begin{corollary}\label{coro:acp}
Consider a flow as in Figure~\ref{fig:reseq-gen}; assume that it satisfies the arrival curve $\alpha$ at the source and that the delay jitter between source and input to the re-sequencing buffer is $V$. Also assume that the timeout value $T$ of the re-sequencing buffer satisfies $T\geq \lambda$, where $\lambda$ is the RTO of the flow between the source and the input of the re-sequencing buffer. At the output of the re-sequencing buffer, the flow has arrival curve $\alpha'$ defined by
	\begin{enumerate}
		\item $\alpha'(t)=\alpha(t+V)$, if the network is lossless;
		\item $\alpha'(t)=\alpha(t+V+T)$, if the network is not lossless.
	\end{enumerate}
The same applies, mutatis mutandi, to packet-level arrival curves.
\end{corollary}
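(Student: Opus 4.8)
The plan is to view the concatenation of the network (from the source up to the input of the re-sequencing buffer) together with the re-sequencing buffer itself as a single system $\calS$, whose input is the source of the flow and whose output is the stream of packets released by the re-sequencing buffer at times $D_n$ (with discarded or lost packets absent from the output). At the input of $\calS$ the flow has arrival curve $\alpha$ by hypothesis. The strategy is then to bound the delay jitter of $\calS$ and invoke Lemma~\ref{lem:acp}.

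First I would bound the delay jitter of $\calS$. The per-packet delay through $\calS$ is $D_n-A_n$, so the delay jitter of $\calS$ is $\max_n(D_n-A_n)-\min_n(D_n-A_n)$, where only delivered packets are considered. This is exactly the quantity controlled by Theorem~\ref{thm:delay-reseq}; note that the hypothesis $T\ge\lambda$ of the corollary is the same as the hypothesis of that theorem. In the lossless case, item~1 of Theorem~\ref{thm:delay-reseq} gives $\max_n(D_n-A_n)-\min_n(D_n-A_n)=\max_n(E_n-A_n)-\min_n(E_n-A_n)$, and the right-hand side is at most $V$ since $V$ is, by assumption, a bound on the delay jitter of the flow between the source and the input of the re-sequencing buffer; hence $\calS$ has delay jitter at most $V$. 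In the lossy case, item~2 of Theorem~\ref{thm:delay-reseq} similarly yields a delay jitter of at most $V+T$ for $\calS$.

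Then I would apply Lemma~\ref{lem:acp} to $\calS$: since the flow has arrival curve $\alpha$ at the input of $\calS$ and the delay jitter of $\calS$ is at most $V$ (resp. $V+T$), the flow at the output of $\calS$, i.e. at the output of the re-sequencing buffer, has arrival curve $\alpha'(t)=\alpha(t+V)$ (resp. $\alpha'(t)=\alpha(t+V+T)$), which is the claimed result. The packet-level statement follows by repeating the same argument with the packet-level version of Lemma~\ref{lem:acp} and of Theorem~\ref{thm:delay-reseq}; the latter is purely a statement about delays, so it applies verbatim in the packet setting.

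There is essentially no serious obstacle here: the real work has already been done in Lemma~\ref{lem:acp} and Theorem~\ref{thm:delay-reseq}, and the corollary is obtained merely by composing the two. The only points needing a little care are that Lemma~\ref{lem:acp} must be applied to a system that need not preserve packet order, which is permitted by the lemma, and that any packets discarded by the re-sequencing buffer (for instance on buffer overflow, should $B$ be finite) simply do not appear at the output, so removing them from the output stream cannot violate an arrival curve constraint; thus the bound holds whether or not such discards occur.
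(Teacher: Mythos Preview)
Your proposal is correct and follows essentially the same approach as the paper: the paper states that the corollary is obtained by applying Lemma~\ref{lem:acp} to Theorem~\ref{thm:delay-reseq}, which is precisely what you do by viewing the network plus re-sequencing buffer as a single system, bounding its jitter via Theorem~\ref{thm:delay-reseq}, and then invoking Lemma~\ref{lem:acp}. Your added remark about discarded packets only shrinking the output stream is a nice bit of extra care, but otherwise the argument is identical.
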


\begin{remark}
	Part (2) of the corollary is tight. This can be shown using the same arguments as in step (4) of the proof of Theorem~\ref{thm:reseq-param-size}. Specifically, the bound is achieved in a scenario where an isolated packet loss occurs, followed by a burst of in-order packets.
\end{remark}

	\section{Computing RTO and RBO}\label{sec:reordering}
In the previous section we saw how to dimension a re-sequencing buffer in the context of time-sensitive networks, assuming that we know the RTO of the flow and, to the extent that lossless metrics are of interest, its RBO. It remains to see how the RTO/RBO, or bounds on them, can be estimated by the control or management plane in order to setup the flow. To this end, we decompose a network path into elements that are either per-flow order-preserving or not. Examples of the former are the IEEE TSN class-based queuing subsystems \cite{mohammadpour_latency_2018}; examples of the latter are some switching fabrics which use parallel paths to improve throughput \cite{le2002packet}. In Section~\ref{sec:mne} we give tight RTO and RBO bounds for network elements; in Section~\ref{sec:mco} we show how to concatenate them.

\subsection{RTO and RBO for Network Elements}
\label{sec:mne}
\begin{theorem} \label{thm:gen-node-reordering-time}
Consider a flow that traverses a system, with delay jitter upper-bounded by $V$. Then an upper bound on the RTO of this flow between the input and the output of the system is:
	\begin{enumerate}
		\item $\left[V - \alpha^{\downarrow}(2L^{\min})\right]^+$, if the flow has arrival curve $\alpha$ at the input of the system;
		\item $\left[V - \alpha_{\mathrm{pkt}}^{\downarrow}(2)\right]^+$, if the flow has packet level arrival curve $\alpha_{\mathrm{pkt}}$  at the input of the system.
	\end{enumerate}
The bounds are tight, i.e. for every achievable arrival curve and every value of $L^{\min}$  and $V$ there is a system and an execution trace that attains the bound.
\end{theorem}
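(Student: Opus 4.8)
The plan is to prove the upper bound directly from the definitions of RTO, arrival curve and jitter, and then establish tightness by an explicit two‑packet construction.

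For the upper bound, write $a_m$ for the time at which packet~$m$ is observed at the input of the system, so that $E_m-a_m$ is the delay of packet~$m$. Fix a non‑lost packet~$n$ (i.e. $E_n<+\infty$) and assume $\lambda_n>0$, as otherwise there is nothing to prove. Let $j^\star$ be the index attaining $\min\{E_j \mid j\geq n,\ E_j\leq E_n\}$; since $\lambda_n>0$ we have $E_{j^\star}<E_n$, hence $j^\star>n$ and $\lambda_n=E_n-E_{j^\star}$. The first step is to lower‑bound $a_{j^\star}-a_n$. The packets with indices $n,n+1,\dots,j^\star$ — at least two of them, of total length at least $2L^{\min}$ — are all observed at the input within the interval $(a_n-\delta,\,a_{j^\star}]$ for every $\delta>0$, so the arrival‑curve constraint gives $2L^{\min}\leq\alpha(a_{j^\star}-a_n+\delta)$; by the definition of the lower pseudo‑inverse and monotonicity of $\alpha$ this yields $a_{j^\star}-a_n+\delta\geq\alpha^{\downarrow}(2L^{\min})$, and letting $\delta\downarrow 0$ gives $a_{j^\star}-a_n\geq\alpha^{\downarrow}(2L^{\min})$. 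The second step applies the jitter bound to packets $n$ and $j^\star$ (both reach the output): $(E_n-a_n)-(E_{j^\star}-a_{j^\star})\leq V$, i.e. $\lambda_n=E_n-E_{j^\star}\leq V-(a_{j^\star}-a_n)\leq V-\alpha^{\downarrow}(2L^{\min})$. Together with $\lambda_n\geq 0$ and maximizing over all non‑lost $n$, this gives $\lambda\leq\left[V-\alpha^{\downarrow}(2L^{\min})\right]^+$. The packet‑level statement is obtained verbatim, replacing ``$2L^{\min}$ bytes'' by ``$2$ packets'' and $\alpha^{\downarrow}(2L^{\min})$ by $\alpha_{\mathrm{pkt}}^{\downarrow}(2)$.

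For tightness, set $s_0=\alpha^{\downarrow}(2L^{\min})$, which is finite because $\alpha$ is unbounded. If $V\leq s_0$ the bound is $0$ and is attained by any order‑preserving system with constant delay (jitter $0\leq V$, RTO $0$). If $V>s_0$, consider a flow consisting of two packets of size $L^{\min}$ observed at the input at times $a_1=0$ and $a_2=s_0$. This flow respects the arrival curve $\alpha$, since any interval $(s,t]$ containing both observation instants has length $t-s>s_0$ and $\alpha(u)\geq 2L^{\min}$ for every $u>s_0$ by the definition of $\alpha^{\downarrow}$. Let the system output packet~$1$ at $E_1=V$ and packet~$2$ at $E_2=s_0$ (an arbitrary common constant delay $c\geq 0$ may be added; causality $E_m\geq a_m$ holds). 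The per‑packet delays are $V$ and $0$, so the delay jitter equals $V$; and since $V>s_0$, packet~$2$ exits before packet~$1$, so $\lambda_1=E_1-E_2=V-s_0$ and $\lambda_2=0$, giving RTO $=V-s_0=\left[V-\alpha^{\downarrow}(2L^{\min})\right]^+$. The packet‑level case is identical with $s_0$ replaced by $\alpha_{\mathrm{pkt}}^{\downarrow}(2)$ and two packets of any allowed size.

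The only delicate points are bookkeeping: handling the half‑open interval of the arrival‑curve constraint through the $\delta\downarrow 0$ limit on the pseudo‑inverse (this is where the property $\alpha^{\downarrow}(y)<x\Rightarrow y\leq\alpha(x)$, or rather its converse direction, is used), and checking that the two‑packet flow in the tightness construction genuinely satisfies $\alpha$. Neither is a real obstacle, and I do not expect a hard step beyond these.
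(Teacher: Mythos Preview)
Your proof is correct and follows essentially the same approach as the paper: bound $\lambda_n$ by combining the jitter inequality between the late packet and an overtaking packet with the arrival-curve lower bound on their input separation, and prove tightness via a two-packet trace with sizes $L^{\min}$, input times $0$ and $\alpha^{\downarrow}(2L^{\min})$, and delays $V$ and $0$. The only cosmetic differences are that the paper invokes the alternative characterization $A_n-A_m\geq\alpha^{\downarrow}\bigl(\sum_{k=m}^n l_k\bigr)$ directly rather than your $\delta\downarrow 0$ argument, and you handle the boundary case $V\leq\alpha^{\downarrow}(2L^{\min})$ more explicitly in the tightness part.
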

The proof is in Appendix B. It consists in two steps. First, based on the definition of jitter in Section \ref{sec:delay_jitter} and the alternative characterization of arrival curves in Appendix A, we show that the RTO is not larger than the bound in Theorem \ref{thm:gen-node-reordering-time}. Second, we show that there always exists one execution trace of a flow that its RTO is equal to the bound. Specifically, the bound is achieved in a scenario where the two packets enter a system in a greedy manner and the first one experiences the worse-case delay and the second one experiences the best-case delay.
 \begin{remark}
	If no arrival curve is known for the flow, we can always take $\alpha(t)=+\infty$ for $t>0$ and then $\alpha^{\downarrow}\left(x\right) =0$; this gives the RTO bound $\lambda=V$, i.e. jitter is a valid RTO bound for any system and any flow.
 \end{remark}
\begin{remark}
	If $\alpha^{\downarrow}(2L^{\min})\geq V$ or $\alpha_{\mathrm{pkt}}^{\downarrow}(2)\geq V$ then the RTO bound given by the theorem is $0$, i.e., there is no reordering for this flow. Thus, the theorem captures the cases where the packets sent by the flow are rare and the delay jitter of the non order-preserving system is small, so that reordering is impossible. See the next two examples.
 \end{remark}

\begin{remark}
	If the RTO of the flow is larger than zero, according to the theorem $\alpha^{\downarrow}(2L^{\min})< V$ or $\alpha_{\mathrm{pkt}}^{\downarrow}(2)< V$. Then by \eqref{eq:gen-node-reordering-byte-11}, we have $\alpha(V)\geq 2L^{\min}$ or $\alpha_{\mathrm{pkt}}(V) \geq 2$. This implies that if a flow has reordering, the input generates at least two packets within the duration of $V$.
\end{remark}

Hereafter, we provide examples on computation of RTO for multi-path connections and common forms of arrival curves, i.e., staircase and leaky bucket.

 \begin{exmp}
	Consider an interconnection system with $K$ paths. Every path $k$ has a worst-case delay $d^{\max}_k$ and a best-case delay $d^{\min}_k$. The delay jitter of this interconnection system is $V=\max_{k=1...K}  d^{\max}_k-\min_{k=1...K}  d^{\min}_k$; the RTO for a flow at the output of this interconnection is then given by Theorem~\ref{thm:gen-node-reordering-time}. Non order-preserving switching fabrics fall into this category; here, the delay jitter, and hence the RTO, are typically very small.
\end{exmp}

\begin{exmp}
	Consider a flow that has packet-level arrival curve $\alpha_{\mathrm{pkt}}(t) = K\ceil{\frac{t}{\tau}}$, expressing that at most $K$ packets are allowed in any time window of $\tau$ seconds. Here we have $\alpha_{\mathrm{pkt}}^{\downarrow}\left(x\right) = \tau \ceil{\frac{x-K}{K}}, x>0$ and Theorem~\ref{thm:gen-node-reordering-time} gives an RTO bound equal to $\lambda = \left[V - \tau \ceil{\frac{2-K}{K}}\right]^+$.

Applying this with $K=1$ gives that, for a flow that generates at most one packet every $\tau$ seconds:
	\begin{enumerate}
		\item if $\tau \geq V$, the flow experiences no reordering, i.e., $\lambda=0$.
		\item if $\tau < V$, the flow may experience reordering, and $\lambda = V-\tau$.
	\end{enumerate}
\end{exmp}

\begin{exmp}
	Consider a flow that has a leaky bucket arrival curve, i.e., $\alpha(t) = rt+b, t>0$, with rate $r$ and burst $b\geq L^{\max}$. Then, we have $\alpha^{\downarrow}\left(x\right) = \left[\frac{x-b}{r}\right]^+$ and the RTO bound given by Theorem~\ref{thm:gen-node-reordering-time} is $\lambda = \left[V - \left[\frac{2L^{\min}-b}{r}\right]^+\right]^+$. It follows that:
\begin{enumerate}
		\item if $b < 2L^{\min}$ and $V\leq\frac{2L^{\min}-b}{r}$, the flow experiences no reordering, i.e., $\lambda=0$;
		\item if $b < 2L^{\min}$ and $V>\frac{2L^{\min}-b}{r}$, the flow may experience reordering, and the RTO is bounded by $\lambda = V-\frac{2L^{\min}-b}{r}$;
       \item if $b \geq 2L^{\min}$ the flow may experience reordering, and the RTO is bounded by $\lambda = V$.
	\end{enumerate}

The first case requires $L^{\max}< 2L^{\min}$, which we excluded when $L^{\max}\geq 2L^{\min}$, but it may occur when packets are all of the same size $l$; then reordering is impossible if the delay jitter is $\leq \frac{2 l -b}{r}$.
\end{exmp}

\begin{theorem} \label{thm:gen-node-reordering-byte}
	Consider a flow that traverses a system, with delay jitter upper-bounded by $V$ and with RTO upper-bounded by $\lambda>0$. Then a bound on RBO of the flow between the input and the output of this system is:
	\begin{enumerate}
		\item $\alpha(V)- L^{\min}$, if the flow has arrival curve $\alpha$ at the input of the system and $\alpha(V)\geq 2L^{\min}$, and $0$ if $\alpha(V)< 2L^{\min}$;
		\item $L^{\max} \left(\alpha_{\mathrm{pkt}}(V)-1\right)$, if the flow has packet-level arrival curve $\alpha_{\mathrm{pkt}}$ at the input of the system and $\alpha_{\mathrm{pkt}}(V)\geq 2$, and $0$ if $\alpha_{\mathrm{pkt}}(V)< 2$.
	\end{enumerate}
The bounds are tight, i.e. for every achievable arrival curve and every value of $L^{\min}$, $L^{\max}$, $V$ and $\lambda>0$ 
there is a system and an execution trace such that the RBO of the flow is arbitrarily close to the bound.
\end{theorem}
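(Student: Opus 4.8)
The plan is to prove the upper bound by a short jitter argument, then match it with explicit lossless execution traces. Throughout, let $a_n$ denote the time at which packet~$n$ is observed at the input of the system and $d_n=E_n-a_n$ its delay, so that packets are indexed in increasing order of $a_n$ and the jitter hypothesis reads $d_n-d_m\le V$ for every pair of non-lost packets $n,m$.

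\textbf{Upper bound.} Fix a non-lost packet~$n$; if no packet overtakes it then $\pi_n=0$, so suppose some $j>n$ has $E_j<E_n$. Since $j>n$ we have $a_j>a_n$, and the jitter bound for the pair $(n,j)$, namely $(E_n-a_n)-(E_j-a_j)\le V$, combined with $E_j<E_n$, gives $a_j<a_n+V$. Thus packet~$n$ and every packet overtaking it are observed at the input inside a half-open window of width strictly less than $V$. Evaluating the arrival-curve constraint $\alpha$ on a window $(a_n-\delta,a_{\max}]$ that contains packet~$n$ and all packets overtaking it, where $a_{\max}<a_n+V$ is the largest of their input times, then letting $\delta\downarrow0$ and using that $\alpha$ is non-decreasing, the total size of these packets is at most $\alpha(V)$; since packet~$n$ itself contributes at least $L^{\min}$ bytes, $\pi_n\le\alpha(V)-L^{\min}$, and the maximum over $n$ gives item~1. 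The same computation applied to a single overtaking pair $\{n,j\}$ forces $2L^{\min}\le\alpha(V)$, so overtaking is impossible, hence $\pi=0$, whenever $\alpha(V)<2L^{\min}$. Item~2 is obtained identically, counting packets instead of bytes: the window contains at most $\alpha_{\mathrm{pkt}}(V)$ packets, each overtaking one of size at most $L^{\max}$, which yields $\pi_n\le L^{\max}(\alpha_{\mathrm{pkt}}(V)-1)$, and no overtaking is possible when $\alpha_{\mathrm{pkt}}(V)<2$. Note that $\lambda>0$ is not used here; inequality \eqref{eq:reordering-time-def} merely adds that every overtaking packet also has $E_j\in[E_n-\lambda,E_n)$, which I will only need for tightness.

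\textbf{Tightness.} Fix $\alpha$, $L^{\min}$, $L^{\max}$, $V$ and $\lambda>0$; I construct a lossless system and an execution trace, taking $n=1$ so that no packet precedes the packet of interest. When $\alpha(V)<2L^{\min}$ the bound is $0$ and any order-preserving system attains it, so assume $\alpha(V)\ge 2L^{\min}$. Using achievability of $\alpha$, choose a greedy packetized source that emits packet~$1$ of size $L^{\min}$ at time $a_1=0$ (legitimate since $\alpha(0^+)\ge L^{\max}$) and packets $2,\dots,m$ at times $0<a_2<\dots<a_m<V$ with $\sum_{i=1}^{m}l_i$ as close to $\alpha(V)$ as desired from below; left-continuity of $\alpha$ and the structure of $\mathcal{L}$ ensure such a packetization fits inside $[0,V)$. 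For the system, let packet~$1$ have exit time $E_1$ with $E_1\ge V$ (so all delays are nonnegative), and give packet $j\in\{2,\dots,m\}$ an exit time in $\bigl[\max(E_1-\lambda,\;a_j+E_1-V),\;E_1\bigr)$, chosen strictly increasing in $j$; this interval is nonempty because $\lambda>0$ and $a_j<V$. Then one verifies: (i) every delay lies in $[E_1-V,E_1]$, so the jitter is at most $V$; (ii) packets $2,\dots,m$ leave in order so $\lambda_j=0$ for $j\ge2$, while $\lambda_1=E_1-\min_j E_j\le\lambda$, so the RTO is at most $\lambda$; (iii) each of packets $2,\dots,m$ overtakes packet~$1$, so $\pi_1=\sum_{i=2}^{m}l_i$, which is within the prescribed tolerance of $\alpha(V)-L^{\min}$. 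Item~2 is identical with all packets of size $L^{\max}$ and $\alpha_{\mathrm{pkt}}(V)$ packets emitted inside $[0,V)$, giving $\pi_1=(\alpha_{\mathrm{pkt}}(V)-1)L^{\max}$ exactly.

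\textbf{Expected obstacle.} The upper bound is routine once the jitter estimate $a_j<a_n+V$ is in hand. The work is in the tightness construction: showing that a valid source can pack a byte total arbitrarily close to $\alpha(V)$ (respectively exactly $\alpha_{\mathrm{pkt}}(V)$ packets) into a window of width \emph{strictly} below $V$ — which relies on left-continuity of $\alpha$ and on identifying which sums of packet sizes are attainable just below $\alpha(V)$ — and then checking that the single family of exit times $E_j$ simultaneously meets the jitter budget $V$ and the RTO budget $\lambda$, which is precisely why the admissible interval for $E_j$ is written as a maximum of two lower bounds.
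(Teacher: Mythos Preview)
Your proposal is correct and follows essentially the same approach as the paper. The upper bound argument is the same jitter-plus-arrival-curve computation (the paper routes it through the pseudo-inverse characterization and identifies the largest-index overtaking packet, whereas you apply the arrival-curve constraint directly to the window $(a_n-\delta,a_{\max}]$; these are equivalent). For tightness, both constructions emit packet~$1$ of minimal size at time $0$ followed by a greedy burst inside $[0,V-\varepsilon]$ that is then made to overtake packet~$1$; the paper picks concrete exit times $E_1=V+\varepsilon$, $E_k=V+(k-2)\varepsilon/n$ yielding RTO $=\varepsilon$, while your interval $[\max(E_1-\lambda,a_j+E_1-V),E_1)$ is a slightly more general parameterization that directly respects any prescribed $\lambda>0$. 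The only place the paper is more explicit is in separating the cases $L^{\max}\ge 2L^{\min}$ and $L^{\max}=L^{\min}$ when invoking the structure of $\calL$, which you fold into the phrase ``the structure of $\mathcal{L}$''.
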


The proof is in Appendix B. It consists in two steps. First,  based on the definition of jitter in Section \ref{sec:delay_jitter} and the alternative characterization of arrival curves in Appendix A, we show that the RBO is not larger than the bound in Theorem \ref{thm:gen-node-reordering-byte}. Second, we show that there always exists one execution trace of a flow that its RBO is arbitrary close to the bound. Specifically, it is achieved in a scenario where a sequence of packets enter a system in a greedy manner and the first packet experiences the worst-case delay while the other packets already left the system.

 Observe that we must have $\alpha(0^+)\geq  L^{\max}$ and $\alpha_{\mathrm{pkt}}(0^+) \geq 1$, therefore the expressions in items (1) and (2) are always non-negative. Also notice that the RBO bounds do not depend on $\lambda$ but require that $\lambda >0$; otherwise, namely if $\lambda=0$, there is no reordering and the RBO is $0$.
Last, observe that the tightness result implies that the RBO can be extremely large if the arrival curve can also be large. In other words, it is not possible to bound the RBO solely by constraining the delay jitter; for example, a non order-preserving switching fabric can have a very large RBO, limited only by the speed of the input ports, if the flows are not otherwise constrained.

\begin{exmp}
	Consider a flow that has a packet-level arrival curve $\alpha_{\mathrm{pkt}}(t) = K\ceil{\frac{t}{\tau}}$, expressing that $K$ packets are observed in any time window of $\tau$ seconds. An RBO bound for this flow is $\pi = K L^{\max}\ceil{\frac{V}{\tau}} - L^{\max}$ except if $K=1$ and $V\leq \tau$ in which case it is $0$.
\end{exmp}
\begin{exmp}
	Consider a flow that has a leaky bucket arrival curve, i.e., $\alpha(t) = rt+b, t>0$, with rate $r$ and burstiness $b\geq L^{\max}$. Applying Theorem \ref{thm:gen-node-reordering-byte} for a system with jitter bound $V$, an RBO bound for this flow is $\pi = rV+b-L^{\min}$.
\end{exmp}
\begin{remark}
If we count reordering offset in packets instead of bytes as in \cite{piratla_metrics_2008}, the bound in item (2) should be replaced by $\alpha_{\mathrm{pkt}}(V)-1$.
\end{remark} 
	\subsection{Concatenation Results}\label{sec:mco}

So far, we are able to compute RTO and RBO of a flow with known arrival curve for any system with known delay jitter. In practice, a flow typically traverses a sequence of network elements, of which some cause packet reordering and the rest preserve order. We are interested to compute RTO and RBO of the flow under such situation.
To tackle this problem, a trivial method is to concatenate all the elements as a single system with a delay jitter equal to the sum of delay jitters of each network element; then by applying theorems \ref{thm:gen-node-reordering-time} and \ref{thm:gen-node-reordering-byte}, we compute RBO and RTO of the combination.
However, we can do better, in two ways. First, as shown in Corollary \ref{col:reordering-concat}, we can ignore all prefix order-preserving elements when computing RTO and all suffix order-preserving elements when computing the RBO of the sequence. Second, we can use extra information about the RTO of every network element, as shown in Theorem \ref{thm:gen-concat-reordering-time}.

\begin{figure}[h]
	\centering
	\includegraphics[width= \linewidth]{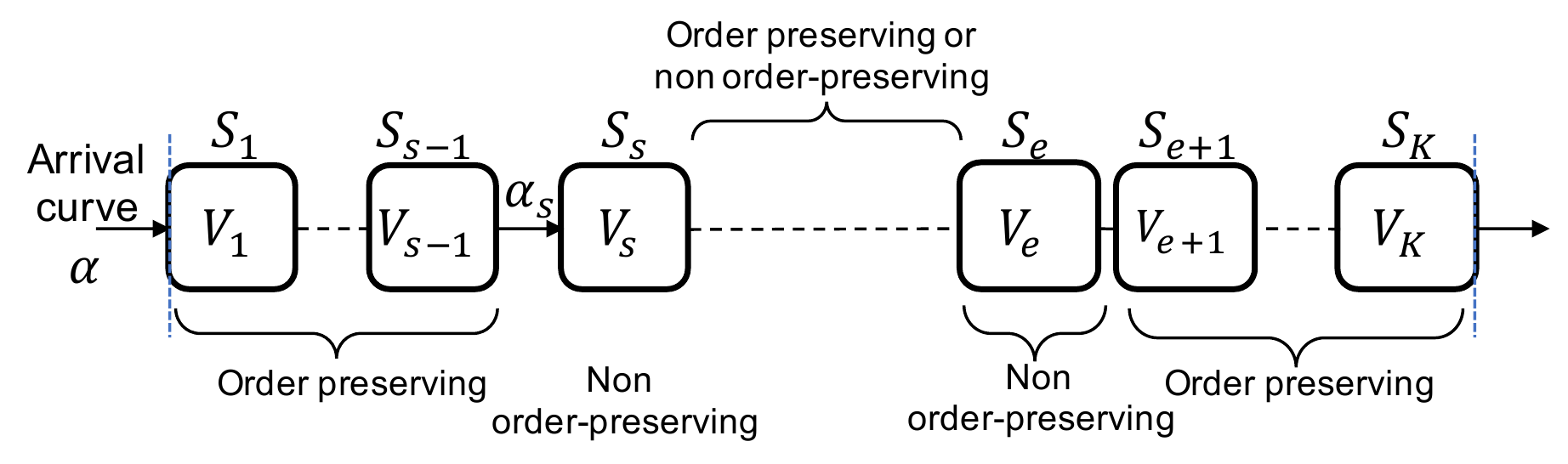}
	\caption{Notation for the sequence of network elements used in Section \ref{sec:mco}. $S_s$ and $S_e$ are respectively the first and the last non order-preserving elements in the sequence. $\alpha$ and $\alpha_s$ are arrival curves at the entrance of $S_1$ and $S_s$ respectively.}
	\label{fig:reordering-concat}
\end{figure}
\begin{corollary}\label{col:reordering-concat}
	Consider the notation in Fig. \ref{fig:reordering-concat}. An upper bound on the RTO of the flow between the input and the output of the sequence $S_1,...,S_K$, denoted as $\Lambda'(K)$, is obtained by applying Theorem \ref{thm:gen-node-reordering-time} only to $S_s,...,S_K$. Specifically:
	\begin{align}\label{eq:concat-th5-rto}
		\Lambda'(K)&= \sum_{h=s}^{K} V_h - \alpha_s^{\downarrow}(2L^{\min}),
	\end{align}
	where $\alpha_s$ is an arrival curve of the flow at the entrance of $S_s$.
\\
Moreover, an upper bound on the RBO of the flow between the input and the output of the sequence $S_1,...,S_K$ is obtained by applying Theorem \ref{thm:gen-node-reordering-byte} and setting the jitter term as $\sum_{h=1}^{e} V_h$ and the arrival curve term as the one at the entrance of $S_1$.
\end{corollary}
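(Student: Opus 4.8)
The plan is to reduce both claims to the per-element bounds already proved: the RTO claim to Theorem~\ref{thm:gen-node-reordering-time} applied to the sub-sequence $S_s,\ldots,S_K$, and the RBO claim to Theorem~\ref{thm:gen-node-reordering-byte} applied to the sub-sequence $S_1,\ldots,S_e$. The only real content is to check that deleting the leading block of order-preserving elements leaves the RTO unchanged, and that deleting the trailing block of order-preserving elements cannot increase the RBO. I will also use the standard fact that, for a concatenation of systems, the delay of a non-lost packet is the sum of its per-system delays, so the delay jitter of the concatenation is at most the sum of the per-system delay jitters.

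For the RTO, write $A$ for the input of $S_1$, $A'$ for the input of $S_s$ and $B$ for the output of $S_K$, and recall that $\lambda_n$, hence the RTO, depends only on the exit-time sequence $(E_n)$ at $B$ and on the packet indexing. Since $S_1,\ldots,S_{s-1}$ preserve per-flow order, the packets reaching $A'$ are, in chronological order, exactly those packets enumerated at $A$ that reach $A'$, in the same order; thus relabelling packets by chronological order at $A'$ rather than at $A$ amounts to composing with a strictly increasing map, which does not alter the relative order of the packet indices. Moreover a packet that never reaches $A'$ is lost, so it has $E_n=+\infty$ and is irrelevant in the definition of every $\lambda_n$: it is not the overtaken packet (the maximum defining the RTO ranges over $n$ with $E_n<+\infty$) and it is not an overtaking packet (an overtaking packet $j$ satisfies $E_j\leq E_n<+\infty$). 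Hence $\lambda_n$, and therefore the RTO, is the same between $A$ and $B$ as between $A'$ and $B$. Viewing $S_s,\ldots,S_K$ as one system, its delay jitter is at most $\sum_{h=s}^{K}V_h$ and the flow enters it with arrival curve $\alpha_s$, so Theorem~\ref{thm:gen-node-reordering-time} gives the RTO bound $\left[\sum_{h=s}^{K}V_h-\alpha_s^{\downarrow}(2L^{\min})\right]^+=\Lambda'(K)$.

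For the RBO, keep the input point at the input of $S_1$, so the packet indexing is the same at the output $B_e$ of $S_e$ and at the output $B_K$ of $S_K$. Fix a packet $n$ that reaches $B_K$; it then also reaches $B_e$. Since $S_{e+1},\ldots,S_K$ preserve per-flow order, for any two packets that both reach $B_K$ their exit order at $B_K$ coincides with their exit order at $B_e$, while a higher-indexed packet $j$ lost between $B_e$ and $B_K$ simply drops out of the count at $B_K$; therefore $\{\,j>n:\;E_j^{B_K}<E_n^{B_K}\,\}\subseteq\{\,j>n:\;E_j^{B_e}<E_n^{B_e}\,\}$, so $\pi_n$ measured at $B_K$ is at most $\pi_n$ measured at $B_e$. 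Taking the maximum over the packets reaching $B_K$ (a subset of those reaching $B_e$) shows that the RBO between the input of $S_1$ and $B_K$ is at most the RBO between the input of $S_1$ and $B_e$. The RTO of $S_1,\ldots,S_e$ is at most $\sum_{h=1}^{e}V_h$; if this is $0$ then the RTO, hence the RBO, of $S_1,\ldots,S_e$ is $0$; otherwise, applying Theorem~\ref{thm:gen-node-reordering-byte} to $S_1,\ldots,S_e$ with jitter term $\sum_{h=1}^{e}V_h$, RTO $\sum_{h=1}^{e}V_h>0$ and arrival curve $\alpha$ at its input yields the announced RBO bound.

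I expect the delicate part to be the loss bookkeeping in the relabelling step: one must verify that erasing the packets lost in the prefix neither alters the relative order of the packet indices nor removes any term that actually contributes to some $\lambda_n$, so the RTO is genuinely unchanged (not merely not increased), and symmetrically that erasing the packets lost in the suffix can only delete terms from the RBO sums. Everything else --- subadditivity of jitter under concatenation, and the two applications of Theorems~\ref{thm:gen-node-reordering-time} and~\ref{thm:gen-node-reordering-byte} --- is routine.
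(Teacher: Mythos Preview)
Your proposal is correct and follows essentially the same approach as the paper: the paper's proof is a single sentence observing that the RTO bound follows because $S_1,\ldots,S_{s-1}$ are order-preserving and the RBO bound follows because $S_{e+1},\ldots,S_K$ are order-preserving, and you have simply spelled out the bookkeeping (relabelling under an order-preserving prefix, and monotonicity of the RBO sum under an order-preserving suffix, including the treatment of lost packets) that the paper leaves implicit.
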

The proof of the Corollary is immediate: the RTO bound is justified by the definition of RTO and $S_1,...,S_{s-1}$ being order preserving. Similarly, since $S_{e+1},...,S_K$ are order preserving, they do not affect the RBO of the whole sequence.
In \eqref{eq:concat-th5-rto}, note that since $S_s$ is non order-preserving, $V_s - \alpha_s^{\downarrow}(2L^{\min}))>0$, thus $\sum_{h=s}^{K} V_h - \alpha_s^{\downarrow}(2L^{\min})>0$.

The bounds in Corollary \ref{col:reordering-concat} only exploit the information on the jitter of each element. However, as we now show, there are cases where an RTO bound $\lambda_h$ for each element $S_h$ can also be provided such that $\lambda_h < \lambda'_h$, where $\lambda'_h$ is obtained by Theorem \ref{thm:gen-node-reordering-time} using $V_h$ and an arrival curve at the entrance of $S_h$. Let us see the following example to see how this extra information can be available.
Consider a switch with different internal elements including order-preserving input processing units with jitter $V_1$, non order-preserving switching fabric with jitter $V_2$, and order-preserving output ports with jitter $V_3$. The jitter of the switch is $V=V_1+V_2+V_3$ and, a bound on the RTO of a flow with arrival curve $\alpha$ at the entrance of the switch is $\lambda=\left[ V_2+V_3- \alpha^{\downarrow}(2L^{\min}))\right]^+$ (Corollary~\ref{col:reordering-concat}). If only the jitter $V$ is exported by the switch, the RTO bound that can be computed by the control plane is $\lambda'=\left[ V- \alpha^{\downarrow}(2L^{\min}))\right]^+$ and $\lambda'>\lambda$ in most cases (i.e whenever $\alpha^{\downarrow}(2L^{\min})<V)$. Therefore, it is desirable that this switch exports both its jitter bound $V$ and its RTO bound $\lambda$.

This asks the question of which best RTO bound can be obtained from a concatenation of network elements, for each of which both jitter and RTO bounds are known. The answer is provided by the following theorem.

%
%
\begin{theorem}\label{thm:gen-concat-reordering-time}
Assume that for every network element $S_h$ and for the flow of interest, we know a bound $V_h$ on the delay jitter and a bound $\lambda_h$ on the RTO between the input and the output of $S_h$. Let $S_s$ be the first network element in the sequence that has $\lambda_s>0$. Then the RTO of the flow between the input and the output of the sequence is upper-bounded by
\begin{align}\label{eq:gen-concat-reordering-time}
	\Lambda(K) = \lambda_s+ \sum_{h=s+1}^{K} V_h.
\end{align}
The bound is tight, i.e., for every pair of sequences $V_h,\lambda_h$
there exists a system and an execution trace that comes arbitrarily close to the bound.
\end{theorem}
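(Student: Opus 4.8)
The plan is to treat the concatenation as two parts: the prefix $S_1,\dots,S_{s-1}$ of order-preserving elements (by the definition of $s$, each has $\lambda_h=0$), and the suffix $S_s,S_{s+1},\dots,S_K$. By Corollary~\ref{col:reordering-concat} (or directly from the definition of RTO), the prefix elements do not contribute to the RTO of the whole sequence, so it suffices to bound the RTO between the input of $S_s$ and the output of $S_K$ by $\lambda_s+\sum_{h=s+1}^K V_h$. I would set up the notation by letting $E_n^{(h)}$ denote the exit time of packet~$n$ from element $S_h$ (so $E_n^{(s-1)}$ is the input time to $S_s$), and work with the un-lost packets only. Fix an un-lost packet~$n$ and any index $p\ge n$ such that the two packets are relevant for $\lambda_n$ at the output; I want to show $E_p^{(K)}\ge E_n^{(K)}-\bigl(\lambda_s+\sum_{h=s+1}^K V_h\bigr)$.

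The key step is to combine the RTO bound at $S_s$ with the jitter bounds of the downstream elements $S_{s+1},\dots,S_K$. At the output of $S_s$, since its RTO is bounded by $\lambda_s$, inequality~\eqref{eq:reordering-time-def} gives $E_p^{(s)}\ge E_n^{(s)}-\lambda_s$ for $p\ge n$ (using that the input order to $S_s$ is the same as the input order to the sequence, because the prefix is order-preserving). Then for each downstream element $S_h$ with $h\in\{s+1,\dots,K\}$, the delay jitter bound means that for \emph{any} two packets $m,n$ that traverse $S_h$, the difference of their delays through $S_h$ is at most $V_h$; writing $\delta_m^{(h)}=E_m^{(h)}-E_m^{(h-1)}$, this is $\delta_p^{(h)}-\delta_n^{(h)}\ge -V_h$. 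Telescoping $E_p^{(K)}-E_n^{(K)} = \bigl(E_p^{(s)}-E_n^{(s)}\bigr) + \sum_{h=s+1}^K\bigl(\delta_p^{(h)}-\delta_n^{(h)}\bigr)$ and applying the two facts above yields $E_p^{(K)}-E_n^{(K)}\ge -\lambda_s-\sum_{h=s+1}^K V_h$, which is exactly \eqref{eq:reordering-time-def} with $\lambda=\Lambda(K)$, hence $\Lambda(K)$ is a valid RTO bound. A point to handle carefully: one must check that the relevant packet indices for the overtaking at the output of $S_K$ are also valid indices to plug into the $S_s$ bound — but since every packet considered is un-lost and the chronological input ordering of the whole sequence coincides with that at the input of $S_s$, the index set condition $j\ge n$ transfers cleanly.

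For tightness, the plan is to build, for any prescribed sequences $(V_h,\lambda_h)$, a two-packet (or few-packet) execution trace approaching the bound: make $S_s$ realize its RTO $\lambda_s$ exactly on a pair of packets (packet~$n$ late, a later packet~$p$ early, so that at the output of $S_s$ one has $E_p^{(s)} = E_n^{(s)} - \lambda_s$, which is achievable by the construction used in the proof of Theorem~\ref{thm:gen-node-reordering-time}), and then through each downstream element $S_h$ let packet~$n$ suffer the worst-case delay and packet~$p$ the best-case delay, so that $\delta_n^{(h)}-\delta_p^{(h)}$ comes arbitrarily close to $V_h$. Composing these, $E_n^{(K)}-E_p^{(K)}$ approaches $\lambda_s+\sum_{h=s+1}^K V_h$, and one checks $p$ remains misordered after $n$ so that $\lambda_n$ at the output equals this value. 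The main obstacle I anticipate is the bookkeeping needed to make the tightness construction simultaneously consistent with all the arrival-curve and packet-size constraints of the individual elements (so that each $S_h$ really can exhibit its full jitter on this particular pair while the whole trace is a legal flow); the upper-bound direction itself is a short telescoping argument once the index-set subtlety is dispatched.
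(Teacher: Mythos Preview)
Your proposal is correct and follows essentially the same approach as the paper: both arguments observe that the order-preserving prefix $S_1,\dots,S_{s-1}$ keeps packet indices aligned at the input of $S_s$, apply the RTO bound $\lambda_s$ there, and then add the downstream jitter $\sum_{h=s+1}^K V_h$ (the paper applies it in one step to the tail concatenation, you telescope element by element, which is the same thing); the tightness construction is likewise the same two-packet scenario. One minor point: your anticipated obstacle about arrival-curve consistency in the tightness construction is unnecessary here, since the theorem's tightness claim is only over the sequences $(V_h,\lambda_h)$ and does not impose an arrival-curve constraint, so the two-packet trace only needs to respect the per-element jitter and RTO bounds, which is immediate.
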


The proof is in Appendix B.
Obviously, we can assume that $V_s - \alpha_s^{\downarrow}(2L^{\min})>0$, since  $S_s$ is not order preserving for this flow, and, furthermore, that $\lambda_s\leq V_s - \alpha_s^{\downarrow}(2L^{\min})$, since otherwise, by Theorem~\ref{thm:gen-node-reordering-time}, we can replace $\lambda_s$ by $V_s - \alpha_s^{\downarrow}(2L^{\min})$.
It follows that the RTO bound given by Theorem~\ref{thm:gen-concat-reordering-time} is always at least as good as that of Theorem~\ref{thm:gen-node-reordering-time}, and improves on it whenever $\lambda_s< V_s - \alpha_s^{\downarrow}(2L^{\min})$, i.e. whenever the RTO bound exported by $S_s$ does improve over the knowledge of its jitter alone.

\begin{remark}
Theorem~\ref{thm:gen-concat-reordering-time} indicates that the only RTO that matters is that of the first non order-preserving element in the sequence (i.e. $S_s$). For subsequent network elements, it is their delay jitter $V_h$, not their RTO $\lambda_h$, that matters. Observe that RTO is always upper bounded by delay jitter (see remark after Theorem~\ref{thm:gen-node-reordering-time}). Therefore, the RTO of the flow through a sequence of nodes may be larger than the sum of the RTOs of the flow through each individual node. This can be explained as follows: If packet~$2$ overtakes packet~$1$ at node $S_s$ by a small amount up to $\lambda_s$, it may still happen that the delay of packet~$2$ through the subsequent nodes is much less than the delay of packet~$1$, by an amount up to the delay jitter $\sum_{h=s+1}^{K} V_h$. Thus, as destination, we observe that packet~$2$ overtakes packet~$1$ by an amount up to $\lambda_s+ \sum_{h=s+1}^{K} V_h$. Imagine that all non-order preserving network elements are switching fabrics, for which the RTO is tiny (sub-microseconds) and that other network elements are class-based queuing subsystems, which preserve packet order for every flow, but may have a much larger delay jitter (milliseconds or more). The end-to-end RTO is then of the order of milliseconds, orders of magnitude larger than the amount of reordering introduced by any single network element. This is the pattern of ``RTO amplification by downstream jitter''.
\end{remark}
Similarly, we can ask whether the information on RTO bound of every element can improve the RBO bound presented in Corollary \ref{col:reordering-concat}. The following theorem shows that the answer is no.
\begin{theorem}\label{thm:gen-reordering-byte-tight}
		The RBO bound in Corollary \ref{col:reordering-concat} is tight, i.e., for every pair of sequences $V_h,\lambda_h$ and every achievable arrival curve
		there exists a system and an execution trace that comes arbitrarily close to the bound.
\end{theorem}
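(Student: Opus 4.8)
Write $V=\sum_{h=1}^{e}V_h$, where $S_e$ denotes the last element we choose to make non order-preserving; since any element with $\lambda_h>0$ may be made non order-preserving, take $e$ to be the largest index with $\lambda_e>0$. By Corollary~\ref{col:reordering-concat} and Theorem~\ref{thm:gen-node-reordering-byte}, the bound to be attained is $\alpha(V)-L^{\min}$ when $\alpha(V)\ge 2L^{\min}$, and $0$ otherwise (also $0$ if every $\lambda_h=0$); in the two $0$ cases a trace with no reordering already attains it, so assume $\alpha(V)\ge 2L^{\min}$ and $\lambda_e>0$ (hence $V_e\ge\lambda_e>0$, since RTO never exceeds jitter). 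Fix $\epsilon>0$. The plan is to build a system in which $S_1,\dots,S_{e-1}$ are order-preserving ``compressors'', $S_e$ is the unique non order-preserving element and carries all of the reordering, and $S_{e+1},\dots,S_K$ are order-preserving with zero delay; the trace is a single source burst whose first packet is made very late in $S_e$ while every other packet overtakes it there by only a tiny margin.

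For the source, invoke achievability of $\alpha$: pick a packet-size sequence with first size $l_1=L^{\min}$ and with a prefix whose total size is arbitrarily close to $\alpha(V)$ from below, and take the source to be the packetization of $R(t)=\alpha(t)$; since $\alpha(0^+)\ge L^{\max}$ we have $A_1=0$, and since $\alpha$ is left-continuous this prefix burst --- call its packets $1,\dots,N$ --- can be emitted inside a window of length strictly less than $V$ while still carrying at least $\alpha(V)-L^{\min}-\epsilon$ bytes in packets $2,\dots,N$. Each compressor $S_h$, $1\le h\le e-1$, is order-preserving, so its RTO is $0\le\lambda_h$; staying within its jitter $V_h$, it delays the head of the burst by $V_h$ and leaves its tail untouched, thereby shrinking the burst window by $V_h$ while keeping packet~$1$ strictly first. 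After $S_{e-1}$ the packets reach $S_e$ still in the order $1,2,\dots,N$ and inside a window of length $\le V-\sum_{h=1}^{e-1}V_h=V_e$, in fact strictly less by the $\epsilon$-slack. By Lemma~\ref{lem:acp}, the arrival curve at the input of $S_e$ is $\alpha(\,\cdot+\sum_{h=1}^{e-1}V_h)$, whose value at $V_e$ equals $\alpha(V)$, so the burst remains consistent with --- indeed nearly greedy for --- that curve.

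The crux is the behaviour of $S_e$. Let packet~$1$ enter $S_e$ at time $t_0$, so that packets $2,\dots,N$ enter over $(t_0,\,t_0+V_e)$, strictly before $t_0+V_e$. Inside $S_e$, give packet~$1$ the delay $V_e$, so it leaves at $t_0+V_e$, and give each packet $j\in\{2,\dots,N\}$ exactly the delay (possibly $0$) that makes it leave $S_e$ inside the short interval $[\,t_0+V_e-\tfrac12\lambda_e,\ t_0+V_e\,)$. Then all delays lie in $[0,V_e]$, so the jitter of $S_e$ is $V_e$; every packet $j\ge 2$ leaves strictly before packet~$1$; packet~$1$ is overtaken by $\lambda_1^{(S_e)}=(t_0+V_e)-\min_{j\ge 2}E_j\le\tfrac12\lambda_e<\lambda_e$; and any overtaking among packets $2,\dots,N$ is bounded by the width $\tfrac12\lambda_e$ of their common exit interval, so $\mathrm{RTO}(S_e)<\lambda_e$. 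Finally make $S_{e+1},\dots,S_K$ order-preserving with zero delay: since packet~$1$ is behind all of $2,\dots,N$ at the input of $S_{e+1}$, it remains behind them at the output of $S_K$. Therefore the RBO of the whole sequence is at least $\pi_1=\sum_{j=2}^{N}l_j\ge\alpha(V)-L^{\min}-\epsilon$, and letting $\epsilon\to0$ establishes tightness; the packet-level statement follows from the same construction \emph{mutatis mutandis}.

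The step I expect to carry the real content, and to be the main obstacle, is the construction of $S_e$: one must push packet~$1$ behind \emph{every} packet of a large burst while keeping $\mathrm{RTO}(S_e)$ below a $\lambda_e$ that may be arbitrarily small. This is possible because the RTO contributed by one element depends only on where packet~$1$ ends up relative to the packets that overtake it, not on the total delay it has accumulated; concentrating packets $2,\dots,N$ into an exit window narrower than $\lambda_e$ just before packet~$1$'s departure keeps every individual overtaking small. The order-preserving compressors then supply the remaining bytes: they turn the upstream jitter budget $\sum_{h=1}^{e-1}V_h$ into reordered bytes --- by inflating the effective arrival curve from $\alpha$ to $\alpha(\,\cdot+\sum_{h=1}^{e-1}V_h)$ --- without ever spending any RTO, which is precisely why the RBO bound of Corollary~\ref{col:reordering-concat} cannot be improved using the per-element RTO values.
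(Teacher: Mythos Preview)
Your proof is correct and follows the same line the paper indicates (it merely says ``similar to the proof of tightness in Theorem~\ref{thm:gen-node-reordering-byte}''): you adapt that single-system scenario to the concatenation by spending the jitter budgets $V_1,\dots,V_{e-1}$ in order-preserving ``compressors'' and concentrating all reordering in $S_e$ with RTO below $\lambda_e$, which is exactly the natural way to distribute the Theorem~\ref{thm:gen-node-reordering-byte} construction across the chain. The only place to tighten is the compressor description---``delay the head by $V_h$, leave the tail untouched'' needs the intermediate packets handled explicitly (e.g.\ an affine map $E_k=t_0+V_h+(A_k-t_0)(W-V_h)/W$, or $E_k=\max(A_k,\,t_0+V_h)$ plus tiny order-preserving perturbations) so that packet~$1$ remains strictly first; once stated, the rest goes through as you wrote.
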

The proof of tightness is similar to the proof of tightness in Theorem~\ref{thm:gen-node-reordering-byte}.

\section{Application to Performance of Intermediate Re-sequencing}\label{sec:ir}
In this section we illustrate how the results in the previous sections can be combined in order to evaluate the performance impact of intermediate re-sequencing.

\subsection{Methodology}
\label{sec:irm}

For a flow that requires in-order delivery but traverses a network where some elements do not preserve packet order, re-sequencing can be performed at the destination, but it is also possible to insert re-sequencing buffers at intermediate points. For example if one is placed for every flow at the output of every non order-preserving switching fabric, then the network becomes order-preserving and the end-system is relieved from the need to re-sequence. Every choice obviously comes with a different implementation cost; here we do not address such a cost. Instead, we focus on the performance impact, primarily in terms of end-to-end worst-case delay and delay jitter. In this illustration, we consider networks that do not perform flow re-shaping \footnote{Flow re-shaping refers to the process of recreating the arrival curve of a flow as its source.} \cite[Section 1.5]{le_boudec_network_2001}, inside the network.

If losses in the network are rare enough to be ignored for standard operation, the conclusion is straightforward. Indeed, we know from Theorem~\ref{thm:delay-reseq} that, under such an assumption, re-sequencing does not increase the worst-case delay and the delay jitter.

In contrast, if lossy operation cannot be ignored, we also know from Theorem~\ref{thm:delay-reseq} that re-sequencing adds a penalty to worst-case delay and jitter that is at least equal to the upstream RTO. Furthermore, the pattern of RTO amplification due to downstream jitter may mean that the RTO at the destination is very large, even though RTOs at non order-preserving elements are minuscule. This suggests that intermediate re-sequencing may be beneficial. However, intermediate re-sequencing also introduces a delay penalty and modifies the propagated arrival curves, which must be accounted for.
\begin{figure}[h]
	\centering
	\includegraphics[width= \linewidth]{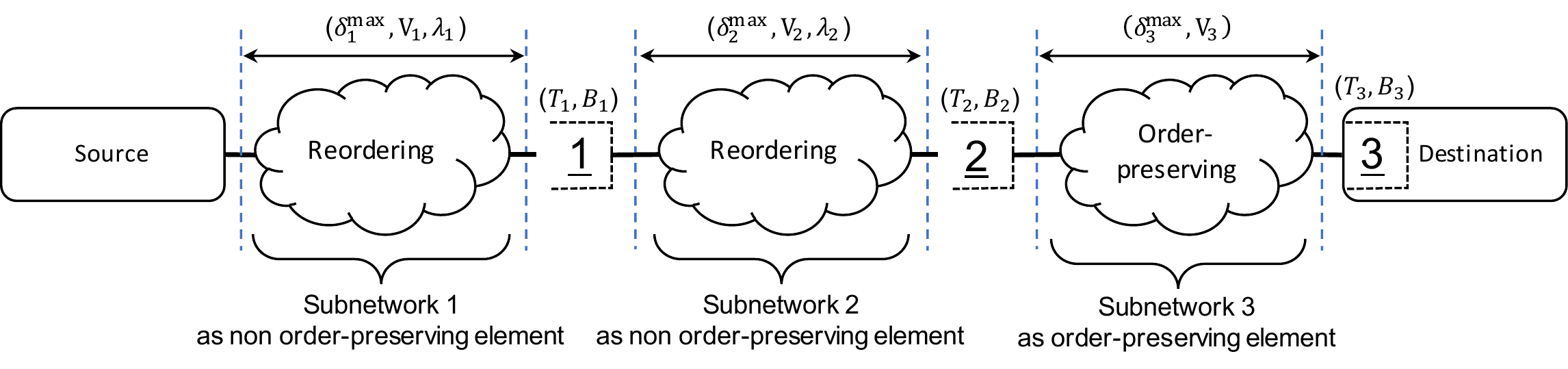}
	\caption{A prototypical scenario used to analyze the performance impact of intermediate re-sequencing. Potential placements of re-sequencing buffers are at points $1$, $2$, or $3$.}
	\label{fig:intermediate-reseq}
\end{figure}

To fix ideas, we consider a prototypical scenario as in Figure~\ref{fig:intermediate-reseq}, where the path of the flow of interest goes through three subnetworks, with jitters $\{V_i\}_{i=1}^3$ and RTOs $\{\lambda_i\}_{i=1}^3$. The first subnetwork may for example represent the source output queuing, a transmission link and the switching fabric of the next node. The second may represent the output queuing that follows this switching fabric, plus transmission links and the switching fabric of the following node. The third may represent the output queuing that follows the second switching fabric, plus transmission links to the final destination. By Theorem~\ref{thm:gen-concat-reordering-time}, we would have $\lambda_1 << V_1$ and $\lambda_2 << V_2$.
We consider the following possible placements of re-sequencing buffers:
\begin{itemize}
  \item Only at 3 (at destination end-system). 
  \item Only at 2 (at destination edge-switch). This is the case where the last edge-switch performs re-sequencing on behalf of the destination, just after the last non order-preserving element.
  \item At 1 and 2 (at every switch). This occurs when the network wants to guarantee that all switches preserve per-flow order, as some vendors do.
  \item At 1 and 3 (at the first switch and the destination end-system). This occurs when the network wants to guarantee that all switches except the destination edge-switch, preserve per-flow order. Then the destination performs re-sequencing with smaller timeout value.
\end{itemize}
We take as baseline the case where no re-sequencing is applied and compute the increase in worst-case delay and delay jitter with respect to the baseline, for each of the placements. We give the details for delay jitter, the computations are similar for the worst-case delay. We assume the timeout values are optimal, as given by Theorem~\ref{thm:gen-node-reordering-time}. The delay jitter of the baseline is $V_1+V_2+V_3$.

For the first placement (only at 3), the re-sequencing buffer at destination increases the delay jitter in the lossy case by $T_3$, the timeout value of the re-sequencing buffer at point 3, which is equal to the RTO between the source and point 3. By Theorem~\ref{thm:gen-concat-reordering-time}, it is equal to $\lambda_1+V_2+V_3$. By Theorem~\ref{thm:delay-reseq}, the increase on jitter is also $\lambda_1+V_2+V_3$.

\begin{table*}[t]
	\caption{Re-sequencing buffer optimal timeout value and increase on end-to-end jitter and delay upper bound (with respect to the baseline with no re-sequencing buffer) for the four placement strategies in Section~\ref{sec:irm}. $T_i$ is the timeout of re-sequencing buffer placed at point $i$. We see that placing re-sequencing buffers at 1 and 2 provides better end-to-end delay and jitter comparing to the placement at 1 and 3.}
\label{tab:conc}
\resizebox{\textwidth}{!}{
	\begin{tabular}{|c|c|c|c|c|c|}
		\hline
		\multirow{3}{*}{Re-sequencing} & \multicolumn{3}{c|}{Timeout} & \multicolumn{2}{c|}{\multirow{2}{*}{\begin{tabular}[c]{@{}c@{}}Increase to end-to-end jitter and delay with \\ respect to no re-sequencing buffer at all.\end{tabular}}} \\ \cline{2-4}
		& $T_1$ & $T_2$ & $T_3$ & \multicolumn{2}{c|}{} \\ \cline{2-6}
		& \multicolumn{3}{c|}{Lossless and lossy} & Lossless & Lossy \\ \hline
		Only at 3 & - & - & $\lambda_1+V_2+V_3$ & $0$ & $\lambda_1+V_2+V_3$ \\
\hline
		Only at 2 & - & $\lambda_1+V_2$ & - & $0$ & $\lambda_1+V_2+\Delta V_3$ \\
\hline
		At 1 and 2 & $\lambda_1$ & $\lambda_2$ & - & $0$ & $\lambda_1+\lambda_2+\Delta V_2+\Delta V_3$ \\
\hline
At 1 and 3 & $\lambda_1$ & - & $\lambda_2+V_3(+\Delta V_3  \mbox{ for lossy})$ & $0$ & $\lambda_1+\lambda_2+V_3+\Delta V_2 + 2\Delta V_3$ \\ \hline
	\end{tabular}
}
\end{table*}

For the last placement (at 1 and 3), the re-sequencing buffer at 1 modifies the arrival curve of the flow. This affects, in general, the downstream worst-case delay and jitter. We call $\Delta V_2$ and $\Delta V_3$ the increase on delay jitter at subnetworks 2 and 3 with respect to the baseline. In the following subsection we estimate these increases numerically on two industrial cases. The re-sequencing buffer at 1 has timeout $T_1=\lambda_1$, given by the RTO of subnetwork 1. For the RTO at point 3, observe that
the subnetwork 1 combined with the re-sequencing buffer at point 1 is an order-preserving element; this causes the RTO at point 3 to be independent of subnetwork 1 (Theorem \ref{thm:gen-concat-reordering-time}). We obtain the timeout value $T_3=\lambda_2+ V_3 +\Delta V_3$. Finally, the end-to-end delay jitter is increased by $T_1$ at point 1 and $T_3$ at point 3, thus it is equal to $V_1+T_1+V_2+\Delta V_2+ V_3+\Delta V_3+ T_3$, which gives an increase with respect to the baseline equal to $\lambda_1+\lambda_2+V_3+\Delta V_2 + 2\Delta V_3$.

The reasoning is similar for the two other placements. The results are given in Table~\ref{tab:conc}. A similar line of reasoning can be used to compute the required sizes of re-sequencing buffers.

We observe the following. In this scenario, we expect $\lambda_i$ to be much smaller than $V_i$, so if $\Delta V_i$ is small, it is beneficial to place an intermediate re-sequencing buffer, and it is also beneficial to place one at the edge-node rather than at the destination. This is because intermediate re-sequencing reduces the downstream RTO and avoids the RTO amplification pattern. However, if $\Delta V_i$ is large this benefit may be lost due to the burstiness increase caused by re-sequencing under lossy operation. In the numerical examples of the next sections, we find that, except in one case, the former effect is largely dominant. Also note that if per-flow re-shaping would be performed at every hop, the latter effect would disappear and intermediate re-sequencing would always reduce the worst-case delay and jitter under lossy operation.

\subsection{Case Study 1: Automotive Network}\label{subsec:case1}

We apply the methodology in Section~\ref{sec:irm} to the double star automotive network of \cite{manderscheid_network_2011} depicted in Fig. \ref{fig:auto}.
To obtain the re-sequencing buffer size and timeout, as well as the jitter and delay upper bounds, we used TFA \cite{thomas_on-cyclic_2019,mifdaoui_beyond_2017}  (details are in Appendix~C).
\begin{figure}[h]
	\centering
	\includegraphics[width= \linewidth]{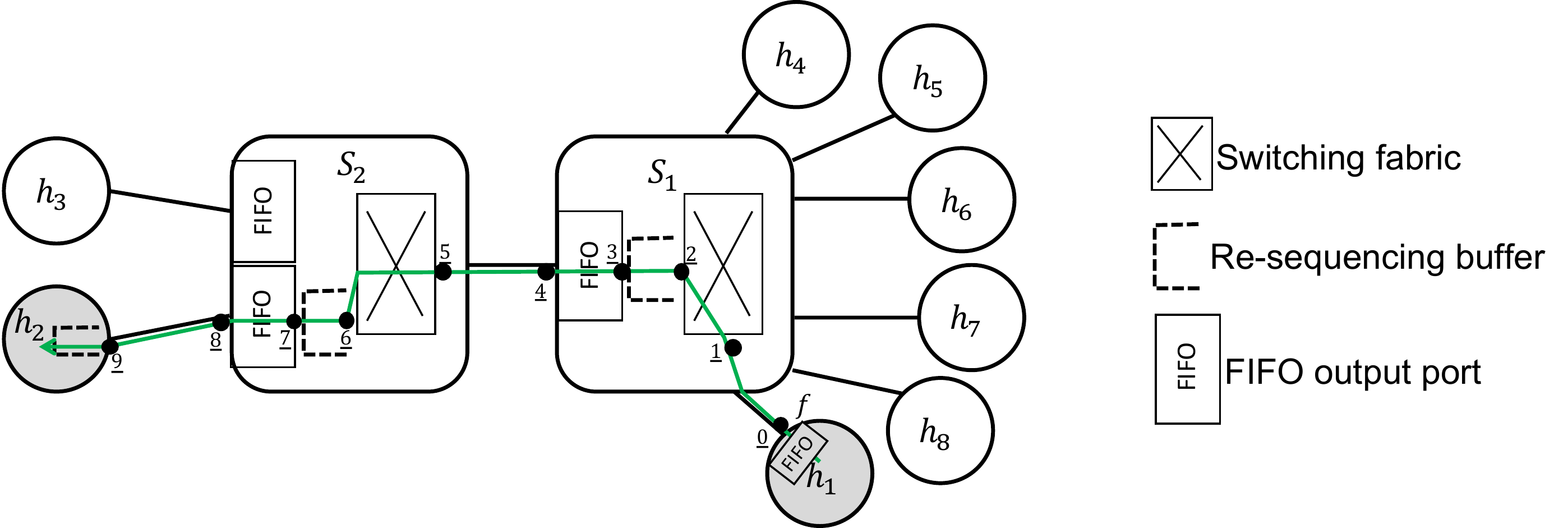}
	\caption{Double star automotive network \cite{manderscheid_network_2011}.}
	\label{fig:auto}
\end{figure}

The network consists of two switches and eight hosts and the rates of the links are $c = 1$~Gbps.
 The output ports are FIFO and the scheduling mechanism is non-preemptive strict priority. The output ports of $h_1$, $S_1$ and $S_2$ offer the same service curve $\beta(t) = 125e6 [t-12\mu]^+$~bytes to the highest priority queue.
 In each switch, the switching fabric is implemented in parallel stages, i.e., reordering of packets may occur. The delay of switching fabrics is between $0.5 \mu s$ to $2 \mu s$ \cite{nexus9508}.

According to \cite{manderscheid_network_2011}, the traffic is made of various flows with different priorities. In our example, we focus on ControlData flow as the only highest priority flow; it is shown as flow $f$ in Fig. \ref{fig:auto} and the path is taken from \cite{manderscheid_network_2011}; the flow is initiated by Control Data Unit ($h_1$) and destined to Control Unit ($h_2$) in the network. The source arrival curve for flow $f$ is leaky bucket with rate $6400$ bytes per second and burstiness $6400$~bytes. All packets are the same size and equal to $64$~bytes.

We implemented the four placement strategies in Section~\ref{sec:ir}.
The results are in Table~\ref{t:uc1}. We see that re-sequencing at every switch fabric significantly reduces delay and jitter bounds. In contrast, re-sequencing at edge node ($S_2$ only) is not beneficial: this is an instance where the burstiness increase due to re-sequencing does have an impact. We also see that the required size of the re-sequencing buffer is independent of the placement strategy.

\begin{table*}[t]
	\centering
	\caption{Bounds on end-to-end jitter and worst-case delay for case study 1 with the four placement strategies in Section~\ref{sec:irm}, under both lossless and lossy network conditions, followed by timeout value and size of the re-sequencing buffers.}
	\resizebox{\textwidth}{!}{
\begin{tabular}{|c||c|c||c|c|}
		\hline
		\multirow{2}{*}{\begin{tabular}[c]{@{}c@{}}Re-sequencing buffers\\ placement\end{tabular}} & \multicolumn{2}{c||}{Lossless} & \multicolumn{2}{c|}{Lossy} \\ \cline{2-5}
		& Delay ($\mu s$) & Jitter ($\mu s$) & Delay ($\mu s$) & Jitter ($\mu s$) \\ \hline\hline
		Only at $h_2$      & $95.22$         & $92.69$          & $124.72$        & $122.19$         \\ \hline
		Only at $S_2$      & $95.22$         & $92.69$          & $127.22$        & $124.69$         \\ \hline
		At $S_1$ and $h_2$ & $95.22$         & $92.69$          & $111.72$        & $109.19$         \\ \hline
		At $S_1$ and $S_2$ & $95.22$         & $92.69$          & $99.22$         & $96.69$          \\ \hline
	\end{tabular}

\begin{tabular}{|c||c|c|c||c|c|c||c|c|c|}
		\hline
		\multirow{3}{*}{\begin{tabular}[c]{@{}c@{}}Re-sequencing\\ buffers\\ placement\end{tabular}} &
		\multicolumn{3}{c||}{Lossless and Lossy} &
		\multicolumn{3}{c||}{Lossless} &
		\multicolumn{3}{c|}{Lossy} \\ \cline{2-10}
		& \multicolumn{3}{c||}{Timeout $T$ ($\mu s)$} & \multicolumn{3}{c||}{Size $B$ (bytes)} & \multicolumn{3}{c|}{Size $B$ (bytes)} \\ \cline{2-10}
		& $S_1$       & $S_2$       & $h_2$       & $S_2$      & $S_2$     & $h_2$    & $S_1$     & $S_2$     & $h_2$     \\ \hline \hline
		Only at $h_2$      & $-$         & $-$         & $29.49$     & $-$        & $-$   & $6336$   & $-$       & $-$       & $6400$    \\ \hline
		Only at $S_2$      & $-$         & $15.99$     & $-$         & $-$    & $6336$       & $-$      & $-$       & $6400$    & $-$       \\ \hline
		At $S_1$ and $h_2$ & $0.98$      & $-$         & $14.49$     & $6336$        & $-$   & $6336$     & $6400$    & $-$       & $6400$    \\ \hline
		At $S_1$ and $S_2$ & $0.98$      & $0.98$      & $-$         & $6336$     & $6336$       & $-$      & $6400$    & $6400$    & $-$       \\ \hline
	\end{tabular}
}
	\label{t:uc1}
\end{table*}

\subsection{Case study 2: Orion network}

We now consider the Orion crew exploration vehicle network, as described in \cite{obermaisser_time-triggered_2012} and depicted in Fig. \ref{fig:orion}, taken from \cite{thomas_on-cyclic_2019}. For the delay and jitter analysis, we used Fixed Point TFA \cite{thomas_on-cyclic_2019,mifdaoui_beyond_2017} as there are cyclic dependencies in the placement of flows. The output ports in the hosts and switches are connected to the links with a rate of $1$~Gbps. The output ports use the non-preemptive TSN scheduler with Credit-based Shapers (CBSs) with per-class queuing \cite{mohammadpour_latency_2018,zhao_timing_2018}; from highest to lowest priority, the classes are Control Data Traffic (CDT), A, B, and Best Effort). The CBSs are used separately for classes A and B. The CBS parameters $\mathit{idleslopes}$ are set to $50\%$ and $25\%$ of the link rate respectively for classes A and B \cite{zhao_timing_2018}. In each switch, the switching fabric is implemented in parallel stages, i.e., reordering of packets may occur. The delay of switching fabrics is between $0.5 \mu s$ to $2 \mu s$ \cite{nexus9508}. The CDT traffic has a leaky bucket arrival curve with rate $6.4$ kilobytes per second and burst $64$~bytes. The maximum packet length of classes B and BE is $1500$~bytes. We focus on class A. Using the results in \cite{mohammadpour_latency_2018}, a rate-latency service curve offered to class A is $\beta(t) = 62.49e6[t-t_0]^+$~bytes with $t_0=12.5\mu$s.

\begin{figure}[h]
	\centering
	\includegraphics[width=\linewidth]{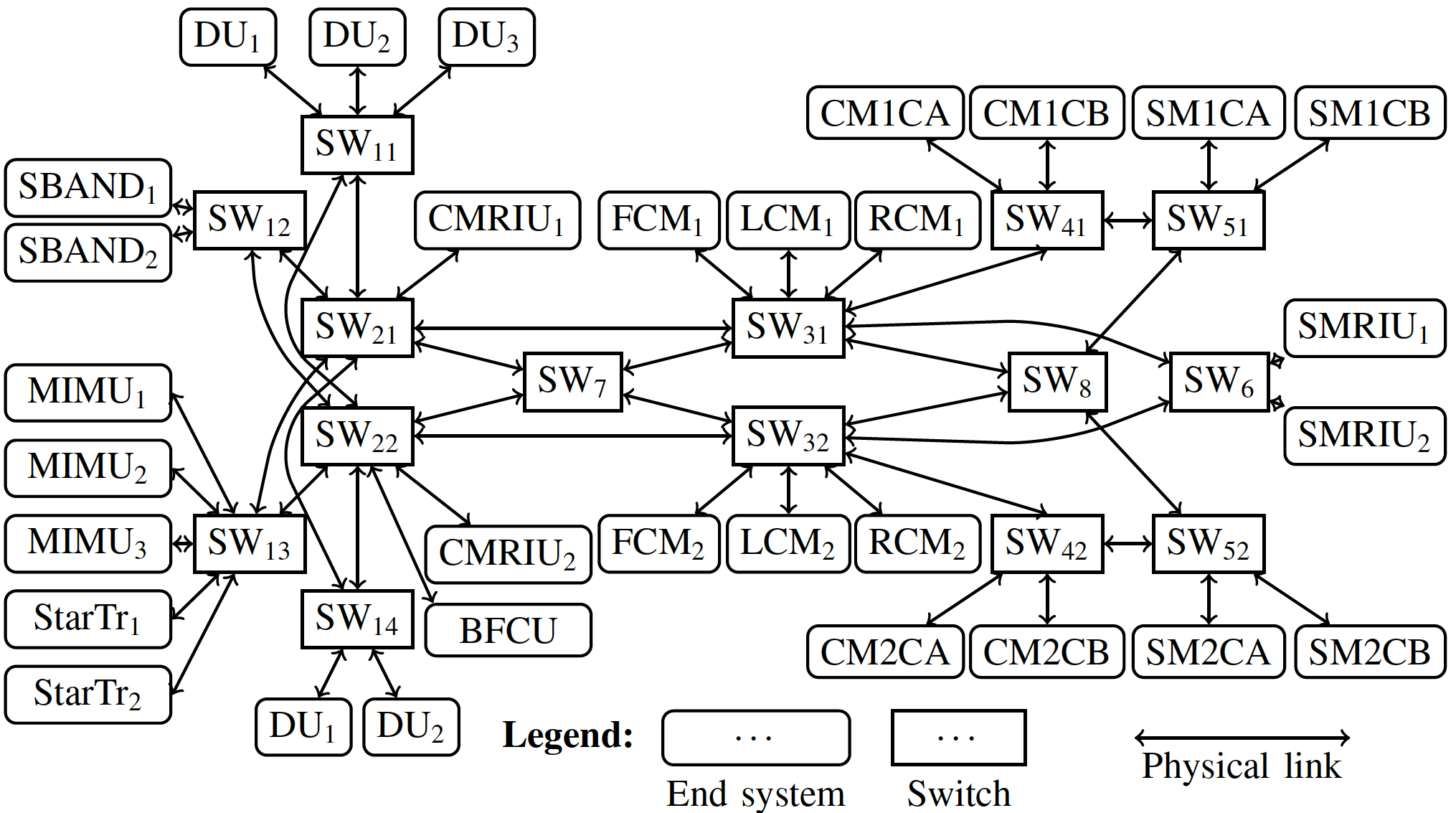}
	\caption{The Orion crew exploration vehicle network, taken from \cite{thomas2020time}.}
	\label{fig:orion}
\end{figure}

\begin{figure*}[h]
	\centering
	
	\includegraphics[width= \textwidth]{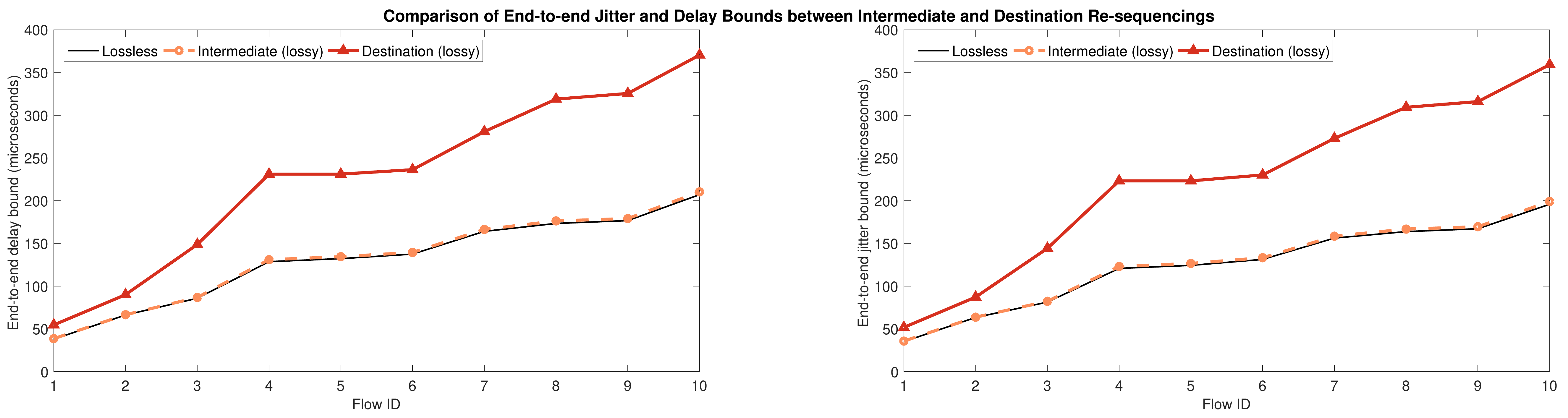}
	\caption{The end-to-end delay bounds (left) and jitter bounds (right) for the flows with in-order delivery requirement for the two strategies, i.e., placing re-sequencing buffers at the destinations or at every switch. In the lossless condition, the delay and jitter bounds do not depend on the strategy.}
	\label{fig:delay_jitter_orion}
\end{figure*} 

Class A contains $30$ flows with constant packet size $147$~bytes, which transmit $3$ packets every $8~$ms. Among these flows, $10$ require in-order packet delivery. The flows traverse between $2$ to $7$ hops. We apply two placement strategies for re-sequencing buffers: at destinations only, and at every switch (immediately after the switching fabrics).

Fig. \ref{fig:delay_jitter_orion} shows the end-to-end delay and jitter bounds for the flows with in-order delivery requirement, for both strategies under lossy condition. The figure also shows the delay and jitter under lossless conditions, which are the same for both strategies and, as we know from Theorem \ref{thm:delay-reseq}, are also equal to the values when there is no re-sequencing buffer. First, we see that if re-sequencing is at destinations only, the effect on delay and jitter under lossy conditions is large: for more than half of the flows, the re-sequencing buffer doubles the delay and jitter, the increase being of the order of 100$\mu$s. This occurs even though the amount of reordering late time offset at every switching fabric is minuscule: every flow has at most $7$ hops and the switching fabric re-orders packet by at most $324$~$n$s at every hop (by Theorem~\ref{thm:gen-node-reordering-time} this is less than the jitter of the switching fabric). This illustrates the pattern of RTO amplification by downstream jitter.  Second, we see that if re-sequencing is performed at every switch, the increase in delay and jitter under lossy conditions is negligible, as expected Section~\ref{sec:irm}, because such a strategy prevents amplification of RTO.

We also find that the size of re-sequencing buffers are the same for the two strategies; this implies that intermediate re-sequencing does not provide any benefit in terms of buffer size. It is equal to two packet size, i.e., $294$~bytes, under lossless network condition; and it is equal to three packet size, i.e., $441$~bytes, under lossy network conditions.
	\section{Conclusion}\label{sec:conclusion}
We have developed a theory of packet reordering in the context of time-sensitive networks, i.e. in networks where worst-cases are more relevant than averages. We showed that, if the network can safely be assumed lossless, re-sequencing does not modify worst-case delay nor delay jitter. In contrast, if performance under lossy operation is relevant, then re-sequencing comes with a penalty on delay equal at least to the RTO of the flow being re-sequenced. We showed that the RTO may be very large even though the RTO of every individual non order-preserving element is very small, due to amplification by downstream jitter. We provided a calculus to capture the RTO and RBO of a flow, given its arrival curve and simple properties of the network elements that are on its path. We applied the theory to evaluate the performance of re-sequencing strategies in industrial networks without re-shaping. Future work will focus on the interactions between re-sequencing and flow-reshaping or flow-damping.

\section{Acknowledgments}
This work was supported by Huawei Technologies Co., Ltd. in the framework of the project Large Scale Deterministic Network. The authors thank Bingyang Liu and Shoushou Ren for fruitful discussions.
	
	\bibliographystyle{IEEEtran}
	\bibliography{ref}

\begin{thebibliography}{10}
\providecommand{\url}[1]{#1}
\csname url@samestyle\endcsname
\providecommand{\newblock}{\relax}
\providecommand{\bibinfo}[2]{#2}
\providecommand{\BIBentrySTDinterwordspacing}{\spaceskip=0pt\relax}
\providecommand{\BIBentryALTinterwordstretchfactor}{4}
\providecommand{\BIBentryALTinterwordspacing}{\spaceskip=\fontdimen2\font plus
\BIBentryALTinterwordstretchfactor\fontdimen3\font minus
  \fontdimen4\font\relax}
\providecommand{\BIBforeignlanguage}[2]{{%
\expandafter\ifx\csname l@#1\endcsname\relax
\typeout{** WARNING: IEEEtran.bst: No hyphenation pattern has been}%
\typeout{** loaded for the language `#1'. Using the pattern for}%
\typeout{** the default language instead.}%
\else
\language=\csname l@#1\endcsname
\fi
#2}}
\providecommand{\BIBdecl}{\relax}
\BIBdecl

\bibitem{ieeeDraftStandardLocal2019b}
IEEE, ``Draft {{Standard}} for {{Local}} and metropolitan area networks
  \textemdash{} {{Time}}-{{Sensitive Networking Profile}} for {{Automotive
  In}}-{{Vehicle Ethernet Communications}},'' \emph{IEEE
  P802.1DG\texttrademark/D1.1}, vol. In IEEE802.1 private repository. To obtain
  the access credentials, visit
  https://www.ietf.org/proceedings/52/slides/bridge-0/tsld003.htm or contact
  the IEEE802.1 chair., Oct. 2019,
  \url{http://www.ieee802.org/1/files/private/dg-drafts/d1/802-1DG-d1-1.pdf}.

\bibitem{iecIECIEEE608022019}
IEC and IEEE, ``{{IEC}}/{{IEEE}} 60802 - {{Time}}-{{Sensitive Networking
  Profile}} for {{Industrial Automation}},'' vol. IEC/IEEE 60802 (D1.1), 2019,
  \url{http://www.ieee802.org/1/files/private/60802-drafts/d1/60802-d1-1.pdf}.

\bibitem{ecssSpaceWireLinksNodes2008}
ECSS, ``{{SpaceWire}} \textendash{} {{Links}}, nodes, routers and networks (31
  {{July}} 2008) | {{European Cooperation}} for {{Space Standardization}},''
  \emph{ECSS-E-ST-50-12C}, Jul. 2008,
  \url{https://ecss.nl/standard/ecss-e-st-50-12c-spacewire-links-nodes-routers-and-networks/}.

\bibitem{AFDX}
A.~Committee \emph{et~al.}, ``{Aircraft Data Network Part 7, Avionics Full
  Duplex Switched Ethernet (AFDX) Network, ARINC Specification 664},''
  \emph{Annapolis, Maryland: Aeronautical Radio}, 2002.

\bibitem{TTE}
H.~Kopetz, A.~Ademaj, P.~Grillinger, and K.~Steinhammer, ``{The Time-Triggered
  Ethernet (TTE) Design},'' in \emph{Object-Oriented Real-Time Distributed
  Computing, 2005. ISORC 2005. Eighth IEEE International Symposium on}.\hskip
  1em plus 0.5em minus 0.4em\relax IEEE, 2005, pp. 22--33.

\bibitem{noauthor_ieee_2011-1}
\BIBentryALTinterwordspacing
``{IEEE} standard for local and metropolitan area networks–audio video
  bridging ({AVB}) systems,'' \emph{{IEEE} Std 802.1BA-2011}. [Online].
  Available: \url{https://ieeexplore.ieee.org/document/6032690}
\BIBentrySTDinterwordspacing

\bibitem{ieeeIEEEStandardLocal2017}
IEEE, ``{{IEEE Standard}} for {{Local}} and metropolitan area
  networks\textendash{{Frame Replication}} and {{Elimination}} for
  {{Reliability}},'' \emph{IEEE Std 802.1CB-2017}, pp. 1--102, Oct. 2017.

\bibitem{bennett_packet_1999}
J.~Bennett, C.~Partridge, and N.~Shectman, ``Packet reordering is not
  pathological network behavior,'' \emph{IEEE/ACM Transactions on Networking},
  vol.~7, no.~6, pp. 789--798, Dec. 1999.

\bibitem{laor_effect_2002}
M.~Laor and L.~Gendel, ``The effect of packet reordering in a backbone link on
  application throughput,'' \emph{IEEE Network}, vol.~16, no.~5, pp. 28--36,
  Sep. 2002.

\bibitem{jaiswal_measurement_2007}
S.~Jaiswal, G.~Iannaccone, C.~Diot, J.~Kurose, and D.~Towsley, ``Measurement
  and {Classification} of {Out}-of-{Sequence} {Packets} in a {Tier}-1 {IP}
  {Backbone},'' \emph{IEEE/ACM Transactions on Networking}, vol.~15, no.~1, pp.
  54--66, Feb. 2007.

\bibitem{rfc8655}
\BIBentryALTinterwordspacing
N.~Finn, P.~Thubert, B.~Varga, and J.~Farkas. (2019) {RFC}8655: Deterministic
  networking architecture. [Online]. Available:
  \url{https://www.rfc-editor.org/rfc/rfc8655}
\BIBentrySTDinterwordspacing

\bibitem{piratla_metrics_2008}
\BIBentryALTinterwordspacing
N.~M. Piratla and A.~P. Jayasumana, ``\BIBforeignlanguage{en}{Metrics for
  packet reordering—{A} comparative analysis},''
  \emph{\BIBforeignlanguage{en}{International Journal of Communication
  Systems}}, vol.~21, no.~1, pp. 99--113, 2008. [Online]. Available:
  \url{https://onlinelibrary.wiley.com/doi/abs/10.1002/dac.884}
\BIBentrySTDinterwordspacing

\bibitem{gao_large_2012}
Y.~{Gao} and Y.~Q. {Zhao}, ``Large deviations for re-sequencing buffer size,''
  \emph{IEEE Transactions on Information Theory}, vol.~58, no.~2, pp.
  1003--1009, 2012.

\bibitem{li_probabilistic_2010}
J.~{Li}, Y.~{Zhou}, L.~{Lamont}, M.~{Huang}, and Y.~Q. {Zhao}, ``Probabilistic
  analysis of resequencing queue length in multipath packet data networks,'' in
  \emph{2010 IEEE Global Telecommunications Conference GLOBECOM 2010}, 2010,
  pp. 1--5.

\bibitem{rfc4737}
\BIBentryALTinterwordspacing
J.~Perser, A.~Morton, L.~Ciavattone, G.~Ramachandran, and S.~Shalunov.
  {RFC}4737: Packet reordering metrics. [Online]. Available:
  \url{https://tools.ietf.org/html/rfc4737}
\BIBentrySTDinterwordspacing

\bibitem{le_boudec_network_2001}
J.-Y. Le~Boudec and P.~Thiran, \emph{Network Calculus: A Theory of
  Deterministic Queuing Systems for the Internet}.\hskip 1em plus 0.5em minus
  0.4em\relax Springer Science \& Business Media, 2001, vol. 2050.

\bibitem{kamoun1982queueing}
F.~Kamoun, M.~B. Djerad, and G.~Le~Lann, ``Queueing analysis of the ordering
  issue in a distributed database concurrency control mechanism: A general
  case,'' 1982.

\bibitem{baccelli1984end}
F.~Baccelli, E.~Gelenbe, and B.~Plateau, ``An end-to-end approach to the
  resequencing problem,'' \emph{Journal of the ACM (JACM)}, vol.~31, no.~3, pp.
  474--485, 1984.

\bibitem{paxson_end--end_1997}
\BIBentryALTinterwordspacing
V.~Paxson, ``End-to-end {Internet} packet dynamics,'' in \emph{Proceedings of
  the {ACM} {SIGCOMM} '97 conference on {Applications}, technologies,
  architectures, and protocols for computer communication}, ser. {SIGCOMM}
  '97.\hskip 1em plus 0.5em minus 0.4em\relax Cannes, France: Association for
  Computing Machinery, Oct. 1997, pp. 139--152. [Online]. Available:
  \url{https://doi.org/10.1145/263105.263155}
\BIBentrySTDinterwordspacing

\bibitem{measuring_bellardo_2002}
\BIBentryALTinterwordspacing
J.~Bellardo and S.~Savage, ``Measuring packet reordering,'' in
  \emph{Proceedings of the 2nd ACM SIGCOMM Workshop on Internet Measurment},
  ser. IMW ’02.\hskip 1em plus 0.5em minus 0.4em\relax New York, NY, USA:
  Association for Computing Machinery, 2002, p. 97–105. [Online]. Available:
  \url{https://doi.org/10.1145/637201.637216}
\BIBentrySTDinterwordspacing

\bibitem{wang_study_2004}
Y.~Wang, G.~Lu, and X.~Li, ``A study of internet packet reordering,'' in
  \emph{Information Networking. Networking Technologies for Broadband and
  Mobile Networks}.\hskip 1em plus 0.5em minus 0.4em\relax Berlin, Heidelberg:
  Springer Berlin Heidelberg, 2004, pp. 350--359.

\bibitem{piratla_reordering_2005}
N.~M. Piratla, A.~P. Jayasumana, and A.~A. Bare, ``Reorder density (rd): A
  formal, comprehensive metric for packet reordering,'' in \emph{Networking
  2005. Networking Technologies, Services, and Protocols; Performance of
  Computer and Communication Networks; Mobile and Wireless Communications
  Systems}.\hskip 1em plus 0.5em minus 0.4em\relax Springer Berlin Heidelberg,
  2005, pp. 78--89.

\bibitem{saab2018robust}
W.~Saab, R.~Rudnik, J.-Y. Le~Boudec, L.~Reyes-Chamorro, and M.~Paolone,
  ``Robust real-time control of power grids in the presence of communication
  network non-idealities,'' in \emph{2018 IEEE International Conference on
  Probabilistic Methods Applied to Power Systems (PMAPS)}.\hskip 1em plus 0.5em
  minus 0.4em\relax IEEE, 2018, pp. 1--6.

\bibitem{rfc3393}
\BIBentryALTinterwordspacing
C.~Demichelis and P.~Chimento. (2002) {RFC}3393: Ip packet delay variation
  metric for ip performance metrics (ippm). [Online]. Available:
  \url{https://tools.ietf.org/html/rfc3393}
\BIBentrySTDinterwordspacing

\bibitem{thomas2020time}
L.~Thomas and J.-Y. Le~Boudec, ``On time synchronization issues in
  time-sensitive networks with regulators and nonideal clocks,'' in
  \emph{Abstracts of the 2020 SIGMETRICS/Performance Joint International
  Conference on Measurement and Modeling of Computer Systems}, 2020, pp.
  51--52.

\bibitem{le2002some}
J.-Y. Le~Boudec, ``Some properties of variable length packet shapers,''
  \emph{IEEE/ACM Transactions on Networking}, vol.~10, no.~3, pp. 329--337,
  2002.

\bibitem{bouillard_deterministic_2018}
A.~Bouillard, M.~Boyer, and E.~Le~Corronc, \emph{Deterministic Network
  Calculus: From Theory to Practical Implementation}.\hskip 1em plus 0.5em
  minus 0.4em\relax Networks and Telecommunications Wiley, 2018.

\bibitem{liebeherr_duality_2017}
\BIBentryALTinterwordspacing
J.~Liebeherr, ``Duality of the max-plus and min-plus network calculus,''
  \emph{Foundations and Trends® in Networking}, vol.~11, no.~3, pp. 139--282,
  2017. [Online]. Available:
  \url{https://www.nowpublishers.com/article/Details/NET-059}
\BIBentrySTDinterwordspacing

\bibitem{mohammadpour_latency_2018}
E.~Mohammadpour, E.~Stai, M.~Mohiuddin, and J.-Y. Le~Boudec, ``Latency and
  backlog bounds in time-sensitive networking with credit based shapers and
  asynchronous traffic shaping,'' in \emph{2018 30th International Teletraffic
  Congress ({ITC} 30)}, vol.~02, 2018, pp. 1--6.

\bibitem{le2002packet}
J.-Y. Le~Boudec and A.~Charny, ``Packet scale rate guarantee for non-fifo
  nodes,'' in \emph{Proceedings. Twenty-First Annual Joint Conference of the
  IEEE Computer and Communications Societies}, vol.~1.\hskip 1em plus 0.5em
  minus 0.4em\relax IEEE, 2002, pp. 84--93.

\bibitem{manderscheid_network_2011}
M.~{Manderscheid} and F.~{Langer}, ``Network calculus for the validation of
  automotive ethernet in-vehicle network configurations,'' in \emph{2011
  International Conference on Cyber-Enabled Distributed Computing and Knowledge
  Discovery}, 2011, pp. 206--211.

\bibitem{thomas_on-cyclic_2019}
L.~{Thomas}, J.-Y. {Le Boudec}, and A.~{Mifdaoui}, ``On cyclic dependencies and
  regulators in time-sensitive networks,'' in \emph{2019 IEEE Real-Time Systems
  Symposium (RTSS)}, 2019, pp. 299--311.

\bibitem{mifdaoui_beyond_2017}
\BIBentryALTinterwordspacing
A.~Mifdaoui and T.~Leydier, ``{Beyond the Accuracy-Complexity Tradeoffs of
  CompositionalAnalyses using Network Calculus for Complex Networks},'' in
  \emph{{10th International Workshop on Compositional Theory and Technology for
  Real-Time Embedded Systems (co-located with RTSS 2017)}}, Paris, France, Dec.
  2017, pp. pp. 1--8. [Online]. Available:
  \url{https://hal.archives-ouvertes.fr/hal-01690096}
\BIBentrySTDinterwordspacing

\bibitem{nexus9508}
\BIBentryALTinterwordspacing
Data center 40ge switch study, cisco nexus 9508 dr 140126l. [Online].
  Available: \url{http://miercom.com/pdf/reports/20140126.pdf}
\BIBentrySTDinterwordspacing

\bibitem{obermaisser_time-triggered_2012}
R.~Obermaisser, \emph{Time-{Triggered} {Communication}}, 1st~ed.\hskip 1em plus
  0.5em minus 0.4em\relax Boca Raton: CRC Press, 2012.

\bibitem{zhao_timing_2018}
L.~Zhao, P.~Pop, Z.~Zheng, and Q.~Li, ``Timing analysis of {AVB} traffic in
  {TSN} networks using network calculus,'' 2018.

\bibitem{le_boudec_theory_2018}
J.-Y. {Le Boudec}, ``A theory of traffic regulators for deterministic networks
  with application to interleaved regulators,'' \emph{IEEE/ACM Transactions on
  Networking}, vol.~26, no.~6, pp. 2721--2733, 2018.

\bibitem{mohammadpour_improved_2019-1}
E.~Mohammadpour, E.~Stai, and J.-Y. Le~Boudec, ``Improved delay bound for a
  service curve element with known transmission rate,'' \emph{{IEEE} Networking
  Letters}, pp. 1--4, 2019.

\end{thebibliography}
	
	\clearpage
	\appendices
\twocolumn[
\begin{@twocolumnfalse}
 \begin{center}   
{\Large\textbf{Supplementary Material}}\\
{\textbf{On Packet Reordering in Time-Sensitive Networks}\\
\textit{Ehsan Mohammadpour, Jean-Yves Le Boudec}}
\end{center}
  \end{@twocolumnfalse}
  ]
 
 \section{Complementary background}
\subsection{Alternative Characterization of Arrival Curves}\label{sec:appendix-arr}
We will use the following alternative representation of arrival curve constraint \cite{le_boudec_theory_2018}. Consider a flow with packets $1,2,...n, ...$ of sizes $l_1, l_2, ..., l_n, ...$ and let $A_n$ be the observation time of packet~$n$. Assume that the indices are in order of observation, i.e. $A_n$ is wide-sense increasing. Then saying that the flow satisfies the arrival curve constraint $\alpha$ is equivalent to saying that for all indices $m\leq n$:
\begin{align}\label{eq:arrcur-bit}
A_n-A_m\geq \alpha^{\downarrow}\left(\sum_{k=m}^{n}l_k\right)
\end{align} where $\alpha^{\downarrow}$ is the lower-pseudo inverse, defined in Section~\ref{sec:lps}. Similarly, the flow satisfies the packet-level arrival curve constraint $\alpha_{\mathrm{pkt}}$ if and only if for all indices $m\leq n$:
\begin{align}\label{eq:arrcur-pkt}
A_n-A_m\geq \alpha_{\mathrm{pkt}}^{\downarrow}(n-m+1)
\end{align}
  
\subsection{Re-sequencing Buffer Operation}\label{sec:appendix-reseq}
A re-sequencing buffer stores the packets of a flow until the packets with smaller sequence numbers arrive; then it delivers them in the increasing order of their sequence numbers. A re-sequencing buffer has two parameters, a size in bytes, $B$, and a timeout value, $T$.
It is described in terms of:
\begin{algorithm}
	\caption{Packet arrival event code routine}
	\label{alg:receive-event}
	\textbf{Input: }{packet~$p$}\\
	\textbf{Shared variables:}{ $buf$ and $N$}
	\begin{algorithmic}[1]
		\If{$p.id \geq N$} \Comment{if TRUE, $p$ is a valid packet}\label{algLine:valid-packet}
		\If{$p.id > N$}\label{algLine:valid-large}
		\If{$buf.\mathrm{len()} + p.len \leq B$}  \label{algLine:valid-packet-bufsize}
		\State TimerList.start($p.id$,$\mathrm{Time()}+T$) \label{algLine:valid-packet-timerstart}
		\State $buf$.enqueue($p$)\label{algLine:valid-packet-enqueue}
		\Else ~~~ discard($p$)  \Comment{ERROR, OVERFLOW}\label{algLine:valid-packet-overflow}
		\EndIf
		\Else \label{algLine:valid-equal}
		\State $N \gets p.id +1$\label{algLine:valid-equal-N}
		\State release($p$)\label{algLine:valid-equal-release}
		\State CHECK\_BUFFER()
		\EndIf
		\Else ~~ discard($p$) \Comment{ERROR, INVALID PACKET} \label{algLine:invalid-packet}
		\EndIf
		
		\Function{check\_buffer}{void} 	\label{algLine:checkbuffer}
		\If{$buf.\mathrm{contains}(N)$} \label{algLine:checkbuffer-contain}
		\State $p \gets buf$.dequeue($N$) \label{algLine:checkbuffer-dequeue}
		\State $N \gets p.id +1$ \label{algLine:checkbuffer-N}
		\State TimerList.stop($p.id$) \label{algLine:checkbuffer-timerstop}
		\State release($p$) \label{algLine:checkbuffer-release}
		\State CHECK\_BUFFER() \label{algLine:checkbuffer-recursive}
		\EndIf
		\EndFunction
	\end{algorithmic}
\end{algorithm}
\begin{algorithm}
	\caption{Timeout event code routine}
	\label{alg:timeout-event}
	\textbf{Input: }{packet id $pid$}\\
	\textbf{Shared variables:}{ $buf$ and $N$}
	\begin{algorithmic}[1]
		\While{$N \leq pid$}\label{algLine:timeout-mainloop}
		\While{$!buf.\mathrm{contains}(N)$}\label{algLine:timeout-innerloop-start}
		\State $N \gets N +1$
		\EndWhile\label{algLine:timeout-innerloop-end}
		\State $p' \gets buf$.dequeue($N$)\label{algLine:timeout-dequeue}
		\State $N \gets p'.id +1$
		\State TimerList.stop($p'.id$)
		\State release($p'$)\label{algLine:timeout-release}
		\EndWhile
		\State CHECK\_BUFFER()\label{algLine:timeout-checkbuffer}
	\end{algorithmic}
\end{algorithm}

\begin{itemize}
	\item Shared variables, that are manipulated by the code routines. These are 1) a list (buffer), $\mathit{buf}$, containing the packets that are waiting for the packets with smaller sequence number; 2) an integer, $N$, expressing the next sequence number that the buffer is expecting to receive
	\item Timers: The re-sequencing buffer sets a timer for each packet stored in the buffer. The object TimerList is the list of the timers for the packets. It has two functions, start($pid$,$deadline$) and stop($pid$). The former, starts a timer for the packet with sequence number $pid$ with expiration time $deadline$. The latter, stops the timer for the packet with sequence number $pid$.
	\item Events, which trigger the execution of code routines.
	The events are packet arrival (Algorithm~\ref{alg:receive-event}) and timeout (Algorithm~\ref{alg:timeout-event}). We assume that the execution of the code routines is serialized, namely, a code routine can start only after the code routine triggered by the previous event has completed (to avoid race conditions with shared variables).
\end{itemize}

%
%
%
%
%
%
%
%
%
%
%
%
%
%
%
%
%
%
When a new packet~$p$ arrives, the packet arrival code routine in Algorithm \ref{alg:receive-event} is executed.
If the packet sequence number $p.id$ is smaller than $N$, then the packet is considered invalid and is discarded (line \ref{algLine:invalid-packet}). This is an error-case: packet~$N$ is expected, which means that packet~$N-1$ was delivered. This packet is either a duplicate of packet~$N-1$ or a packet with a smaller sequence number, and delivering it would violate in-order delivery.

If $p.id > N$ (line \ref{algLine:valid-large}), the packet should wait in the buffer for the packets with smaller sequence numbers to arrive. Then, it checks the current length of buffer, $buf$.len(); if addition of packet~$p$ with length of $p.len$ does not exceed the size of the buffer, $B$, (line \ref{algLine:valid-packet-bufsize}) a timer for this packet starts, which expires at time Time()+$T$ (line \ref{algLine:valid-packet-timerstart}); the function Time() returns the current time of the buffer. Then, it enqueues the packet in the buffer (line \ref{algLine:valid-packet-enqueue}). Otherwise, if the buffer does not have enough capacity, buffer overflow occurs and the packet is discarded (line \ref{algLine:valid-packet-overflow}).

If $p.id == N$ (line \ref{algLine:valid-equal}), the packet is the expected one. Then, $N$ is incremented by $1$ (line \ref{algLine:valid-equal-N}) and the packet is released (line \ref{algLine:valid-equal-release}). When packet~$p$ departs, the buffer should be checked to release the packets that were only waiting for packet~$p$; this is done by a recursive function CHECK\_BUFFER() (line \ref{algLine:checkbuffer}). Accordingly, if the buffer contains a packet with a sequence number equal to the new value of $N$ (line \ref{algLine:checkbuffer-contain}), the packet is dequeued from the buffer (line \ref{algLine:checkbuffer-dequeue}). Then, $N$ is increased to the next sequence number of the flow; the corresponding timer is stopped; and then the packet is released (lines \ref{algLine:checkbuffer-N} to \ref{algLine:checkbuffer-release}). It recalls the functions recursively (line \ref{algLine:checkbuffer-recursive}) and the value of $N$ is increased every time the function executed; the recursion continues until the buffer does not have a packet with sequence number equal to the last updated value of $N$; all the packets with sequence numbers less than the value of $N$ are already released from the buffer.

When the timer for a packet with sequence number $pid$ expires, the timeout code routine in Algorithm \ref{alg:timeout-event} is executed. In this condition, the packet with sequence number $pid$ should be released. To provide in-order delivery, the buffer first releases all the packets in the buffer with sequence number less than or equal to $pid$ in increasing order. To do so, the loop at line \ref{algLine:timeout-mainloop} is executed to iterate the buffer for the packets with sequence number less than $pid$.
The lines \ref{algLine:timeout-innerloop-start} to \ref{algLine:timeout-innerloop-end}, increment $N$ to the smallest sequence number for a packet stored in the buffer; the function $\mathit{buf}$.contains($N$), returns TRUE if a packet with sequence number $N$ is in the buffer, otherwise, it returns FALSE. In line \ref{algLine:timeout-dequeue}, the buffer contains a packet~$p'$ with sequence number equal to $N$; then, it is dequeued from the buffer; the value of $N$ is incremented by $1$; the timer for this packet is stopped; and it is released from the buffer (lines \ref{algLine:timeout-dequeue} to \ref{algLine:timeout-release}). Whenever a packet is released from the buffer, its corresponding timer is stopped. The loop in line \ref{algLine:timeout-mainloop} is executed at least once, because when the timeout event occurs for a packet with sequence number $pid$, it is already in the buffer; this implies $N\leq p.id$. The loop in line \ref{algLine:timeout-innerloop-start} is executed at the latest when $N =pid$.
In line \ref{algLine:timeout-checkbuffer}, the buffer is checked to release the packets that were waiting for packets with id $pid$ and smaller; this is done by the function CHECK\_BUFFER().

Observe that a packet is released if any one of the following conditions hold: 1) all packets with smaller sequence numbers are received, 2) its timer expires or 3) the timer of a received packet with a larger sequence number expires.

By construction, the re-sequencing buffer delivers the packets that it does not discard in increasing sequence numbers. Furthermore, a packet is discarded by the re-sequencing buffer either when the buffer is full or when the sequence number of the arriving packet is less than $N$. The latter occurs when the timeouts of packets are too early, compared to the lateness of misordered packets. Therefore, to avoid discarding packets, the re-sequencing buffer size and the timeout value should be large enough.

\section{Proofs}

\subsection{Proof of Lemma~\ref{lem:acp}}

Let $A_1 \leq A_2\leq ...\leq A_n...$ be the arrival times of packets at $\calS$; here the packet indices are in order of arrival (and are not necessarily equal to the sequence numbers). Let $D_n$ be the departure time of the packet with index $n$; the sequence $D_n$ need not be monotonic. Let $R,R'$ be the cumulative arrival and departure functions\cite{le_boudec_theory_2018}, defined by
$R(t)=\sum_{n=1}^{+\infty}l_n \ind{A_n<t}$ and $R'(t)=\sum_{n=1}^{+\infty}l_n \ind{D_n<t}$. Here, $l_n$ is the length of packet~$n$ and we allow $t\leq 0$ (in which case $R(t)=R'(t)=0$).
Let $\varphi(t)$ be the function $\Reals\to\Reals$ defined by $\varphi(t)=0$ is $t\leq 0$ and $\varphi(t)=1$ if $t>0$, so that
\begin{equation}\label{eq:l11}
  R(t)=\sum_{n=1}^{+\infty}l_n \varphi(t-A_n), \;\;\; R'(t)=\sum_{n=1}^{+\infty}l_n \varphi(t-D_n)
\end{equation}
Let $0\leq s\leq t$; the arrival curve constraint at the input means that $R(t)-R(s)\leq \alpha(t-s)$.
This equation continues to hold if $s$ or $t$ is negative, with the convention that $\alpha(t)=0$ whenever $t\leq0$. Therefore
\begin{equation}\label{eq:l12}
  \forall s,t \in \Reals,\;\;\;R(t)-R(s)\leq \alpha(t-s)
\end{equation}
Furthermore,
\begin{equation}\label{eq:l13}
  R'(t)-R'(s)= \sum_{n=1}^{+\infty}l_n\left(\varphi(t-D_n)-\varphi(s-D_n)\right)
\end{equation}

Let $d^{\min}$ be the best-case delay of the flow, so that the worst-case delay is $\leq d^{\min}+V$. For every packet index $n$, we have
\begin{align}\label{eq:l133}
  A_n+d^{\min} \leq D_n  \leq A_n + d^{\min}+V
\end{align}
thus
\begin{align}\label{eq:l144}
  t- D_n  &\leq t - A_n - d^{\min}\\
  s-D_n &\geq s-A_n - d^{\min}-V
\end{align} and, since $\varphi$ is wide-sense increasing
\begin{align}\label{eq:l155}
  \nonumber\varphi(t- D_n)-\varphi(s-D_n )   \leq
  &\varphi(t - A_n - d^{\min})\\
  &- \varphi(s-A_n - d^{\min}-V)
\end{align}
Combining with \eqref{eq:l13} and \eqref{eq:l12}:
\begin{align}\label{eq:l66}
  \nonumber R'(t)-R'(s)&\leq \sum_{n=1}^{+\infty}l_n\Big(\varphi(t - A_n - d^{\min})\\
  &\nonumber-  \varphi(s-A_n - d^{\min}-V)\Big)\\
  &\nonumber =  R(t - d^{\min})-R(s - d^{\min}-V)\\
  &\leq \alpha(t-s+V)
\end{align}
\qed

\subsection{Proof of Lemma~\ref{lem:backlog}}

With the same notation as in the proof of Lemma~\ref{lem:acp}, the backlog of this flow at time $t$ is $B(t)=R(t)-R'(t)$, so that
\begin{equation}
  B(t)=\sum_{n=1}^{+\infty}l_n\left(\varphi(t-A_n)-\varphi(t-D_n)\right)
\end{equation}
We have $  t-D_n \geq t-A_n - U$ and, since $\varphi$ is wide-sense increasing:
\begin{align}\label{eq:l233}
  \nonumber B(t) & \leq \sum_{n=1}^{+\infty}l_n\left(\varphi(t-A_n)-\varphi(t-A_n-U)\right) \\
    & = R(t)-R(t-U)\leq \alpha(U)
\end{align}
where the last inequality is by \eqref{eq:l12}.
\qed

\subsection{Proof of Theorem~\ref{thm:io}}  \label{sec:appendix_thm_io}
	(1) We prove \eqref{eq:reseq-io} by induction on $n$.
	
	Base case $n=1$. According to the description of re-sequencing buffer in Section \ref{sec:sys}, the initial value of next expected sequence number is $N=1$; therefore, packet~$1$ is never stored in the buffer.

	\begin{itemize}
		\item If packet~$1$ arrives at some time $E_1 \neq +\infty$. Then, there are two cases for the timers at time $E_1$ of packets with index larger than 1:
		\begin{itemize}
			\item No timer has expired at $E_1$. Then, $N=1$ and packet~$1$ is released immediately after arrival ($D_1 = E_1$). Since no timer was expired, for any packet~$p$ in the buffer, $p>1$, we have $E_1 \leq E_p + T$. Thus, based on \eqref{eq:ioi}, $I_1 = E_1$ and finally $D_1=I_1$ as required.
			\item A timer has expired, say for packet index $p$, at $E_1$. Then, $N>1$ and packet~$1$ is discarded ($D_1 = +\infty$). We have $E_1 > E_p + T$, \eqref{eq:ioi} gives $I_1 = +\infty$ and finally $D_1=I_1$ as required.
		\end{itemize}
		\item Else packet~$1$ is lost ($E_1 = +\infty$). Then it is not released from the buffer ($D_1 = +\infty$). By \eqref{eq:ioi}, $I_1 =E_1 = +\infty$ and finally $D_1=I_1$ as required.
	\end{itemize}

	Induction step. Suppose that \eqref{eq:reseq-io} holds for all $i, i\leq n-1$. 	
\begin{itemize}
		\item If packet~$n$ arrives at some time $E_n\neq +\infty$.
		Then, there are two cases for the value of $N$:
		\begin{itemize}
			\item $N\leq n$. No timer for a packet~$j$ in the buffer, $j>n$, has been expired (otherwise, the value of $N$ would be increased to the index of the packet with expired timer, i.e., $N>n$). Then, we have $E_n \leq E_j + T$, therefore, $E_n \leq T + \min_{j\geq n}\{E_j\}$. Then, based on \eqref{eq:ioi}, $I_n = E_n$. Now, the release time of packet~$n$ depends on the status of packet~$n-1$; there are two possible cases:
			\begin{itemize}
				\item $N=n$. This implies that packet~$n-1$ is already released, $D_{n-1} \leq E_n$. Therefore, packet~$n$ is released immediately on arrival ($D_n = E_n$). Then, based on \eqref{eq:reseq-io} and \eqref{eq:ioi}, we also have:
				\begin{align}
				I_n &= E_n,\\
				 \nonumber G_n &=\min\left\{D_{n-1}, T + \min_{j\geq n}\{E_j\}\right\} \\
				 &\leq D_{n-1} \leq E_n,
				\end{align} and then $\max(G_n , I_n) = E_n$, which shows that the right hand-side of \eqref{eq:reseq-io} is $D_n$ as required.
				\item $N<n$. This implies that packet~$n-1$ is not yet released, $D_{n-1} > E_n$. Therefore, packet~$n$ is stored in the buffer. Packet~$n$ is released when:
				\begin{itemize}
					\item The timer for a packet with sequence number larger than or equal to $n$ is expired before packet~$n-1$ is released ($D_{n-1} \geq \min_{j\geq n}\{E_j\} + T$). Let us call $p$, where $p\geq n$, as the packet of which the timer is expired before the others ($T + \min_{j\geq n}\{E_j\} = T+ E_p$). Then, any packet~$j$ in the buffer $j\leq p$ are released in-order (Algorithm \ref{alg:timeout-event}, line \ref{algLine:timeout-release}); therefore, $D_n = T+ E_p = T + \min_{j\geq n}\{E_j\}$. Then, based on \eqref{eq:reseq-io} and \eqref{eq:ioi}, we also have:
					\begin{align}
					I_n &= E_n,\\
					\nonumber G_n &=\min\left\{D_{n-1}, T + \min_{j\geq n}\{E_j\}\right\} \\
					 &= T + \min_{j\geq n}\{E_j\},
					 \end{align}
					 and then we have:
					 \begin{align}
					\nonumber \max\{G_n , I_n\} &= \max\{T + \min_{j\geq n}\{E_j\} , E_n\} \\
					&= T + \min_{j\geq n}\{E_j\}=D_n,
					\end{align}
which shows \eqref{eq:reseq-io}.
					\item Or when packet~$n-1$ is released before any timer of packets with sequence number larger than or equal to $n$ is expired ($D_{n-1} < T+\min_{j\geq n}\{E_j\}$). Then packet~$n$ is released immediately after packet~$n-1$ is released ($D_n = D_{n-1}$). Then, based on \eqref{eq:reseq-io} and \eqref{eq:ioi}, we also have:
					\begin{align}
					G_n &= \min\left\{D_{n-1}, T + \min_{j\geq n}\{E_j\}\right\} \\
					&= D_{n-1}.
					\end{align}
					Then,
					\begin{align}
					\nonumber \max\{G_n , I_n\} &= \max\{G_n , E_n\} \\
					&= D_{n-1} =D_n,
					\end{align}

which shows \eqref{eq:reseq-io}.
				\end{itemize}
			\end{itemize}
		\item $N>n$. Then, packet~$n$ is discarded ($D_n = +\infty$). Since $N>n$, a timer should has been expired for a packet~$p$ in the buffer, $p>n$ such that $E_n > E_p + T$; Then, based on \eqref{eq:reseq-io} and \eqref{eq:ioi}, we also have:
			\begin{align}
			I_n = +\infty, \max\{G_n, I_n\} = +\infty=D_n,
			\end{align}which shows \eqref{eq:reseq-io}.
		\end{itemize}
		\item Else, packet~$n$ is lost ($E_n = +\infty$). Then it is not released from the buffer ($D_n = +\infty$). By \eqref{eq:ioi}, $I_n = +\infty$ as well and thus $\max\{G_n, I_n\} = +\infty$, i.e. the right-handside of \eqref{eq:reseq-io} is equal to $D_n$ as required.
	\end{itemize}

(2) 	Consider Fig. \ref{fig:reseq-gen} and a received packet~$n$. Due to \eqref{eq:reordering-time-def}, for any packet~$j\geq n$, we have $E_n \leq E_j + \lambda$; therefore, $E_n\leq \min_{j\geq n}\{E_j\}+\lambda$. Since $\lambda \leq T$:
	\begin{align}
	E_n \leq \min_{j\geq n}\{E_j\}+T \implies I_n = E_n,
	\end{align}
	which proves \eqref{eq:iod2}.


(3) The proof of item (3) is by induction on $n\geq 1$.
	
	Base case $n=1$. Then, by \eqref{eq:iod2}, $D_1 = E_1$ and the statement is trivially proven.
	
	Induction step. We assume that the statement holds for all packets $i$ with $i\leq n-1$. Due to \eqref{eq:reordering-time-def}:
	\begin{align}
	\nonumber &&\forall k<n, \forall p\geq n:~~&E_{k} \leq E_p + \lambda\\
	\nonumber & \mbox{ thus } &\forall k<n:~~ &E_{k} \leq T + \min_{j\geq n}\{E_j\},\\
	\nonumber &\mbox{ thus } &&\max_{k\leq n-1}\left\{E_k\right\} \leq T + \min_{j\geq n}\{E_j\} \\
	&\mbox{ thus } &&D_{n-1} \leq T + \min_{j\geq n}\{E_j\}.
	\end{align}
	By \eqref{eq:iog} and \eqref{eq:iod2}:
	\begin{align}
	 G_n &=\min\left\{D_{n-1}, T + \min_{j\geq n}\{E_j\}\right\} = D_{n-1},\\
	\nonumber D_n &= \max(G_n , E_n) =\max(D_{n-1} , E_n) \\
	&= \max\left(\max_{k\leq n-1}\left\{E_k\right\} , E_n\right) = \max_{k\leq n}\left\{E_k\right\}.
	\end{align}

\subsection{Proof of Theorem~\ref{thm:reseq-param}} \label{sec:appendix_thm_reseq-timeout}

	Consider Fig. \ref{fig:reseq-gen}.
	
	First, we prove by induction on $n\geq 1$ that
\begin{equation}D_n \geq \max_{i<n|E_i\neq+\infty}\{D_i\} \label{eq-asfrdgh}\end{equation}
	
	Base case. By Theorem~\ref{thm:io}, $n=1$, $D_1 = E_1$, and then \eqref{eq-asfrdgh} is obvious.
	
	Induction step. We assume that \eqref{eq-asfrdgh} holds for all packet index $i<n$. According to Theorem~\ref{thm:io}, for packet~$n\geq 2$, we have:
	\begin{align}\label{eq:reseq-param-1}
	D_n &= \max(G_n , E_n),\\
	G_n &= \min\left(D_{n-1}, T + \min_{j\geq n}\{E_j\}\right).
	\end{align}
	Consider the set $\calE_n=\left\{i<n| E_i\neq +\infty\right\}$, i.e. the packet numbers less than $n$ that are not lost in the network. If $\calE_n$ is empty, the set is empty, \eqref{eq-asfrdgh} trivially holds. Therefore, we now assume that $\calE_n$ is not empty. Let $m$ be the maximum of $\calE_n$.
By the induction hypothesis, $D_m = \max_{k<m|E_k\neq+\infty}\{D_k\}$. Then \eqref{eq-asfrdgh} gives:
\begin{align}
\nonumber D_n &\geq \max_{k<n|E_k\neq+\infty}\{D_k\} \\
\nonumber &= \max\left(\max_{k<m|E_k\neq+\infty}\{D_k\}, \max_{m\leq k<n|E_k\neq+\infty}\{D_k\}\right)\\
&= \max\left(D_m,\max_{m\leq k<n|E_k\neq+\infty}\{D_k\}\right).
\end{align}
Since $m = \max\{\calE_n\}$, we have $\max_{m\leq k<n|E_k\neq+\infty}\{D_k\} = D_m$. Therefore we need to show $D_n \geq D_m$ to prove the theorem.
Due to RTO bound for packet~$m$:
	\begin{align}
	\nonumber &&\forall j> m:~~&E_{m} \leq E_j + \lambda \leq E_j + T,\\
	&\mbox{ thus } &&E_{m} \leq T + \min_{j> m}\{E_j\} .
	\end{align}
	Since $E_{m} \leq T + E_{m}$, then:
	\begin{align}\label{eq:reseq-param-5}
	E_{m} \leq T + \min_{j\geq m}\{E_j\}.
	\end{align}
	By part (2) of Theorem \ref{thm:io} for packet~$m$, we have:
	\begin{align}\label{eq:reseq-param-2}
	\nonumber G_{m} &= \min\left(D_{m-1}, T + \min_{j\geq m}\{E_j\}\right) \leq T + \min_{j\geq m}\{E_j\},\\
	\nonumber D_{m} &= \max(G_{m} , E_{m}) \leq \max\{T + \min_{j\geq m}\{E_j\} , E_{m}\} \\
	&\stackrel{\text{Eq. \eqref{eq:reseq-param-5}}}\implies D_{m}\leq T + \min_{j\geq m}\{E_j\}.
	\end{align}
	We consider the two possible cases for $m$: 1) $m = n-1$, 2) $m<n-1$.
	\begin{itemize}
	\item $m=n-1$. Since $n-1<n$, $\min_{j\geq n-1}\{E_j\} \leq \min_{j\geq n}\{E_j\}$. Therefore, by \eqref{eq:reseq-param-2}, $D_{n-1} \leq T+\min_{j\geq n}\{E_j\}$. Now, using \eqref{eq:reseq-param-1}:
		\begin{align}\label{eq:reseq-param-4}
		\nonumber D_n = \max(G_n , E_n) \geq G_n &= \min\left(D_{n-1}, T + \min_{j\geq n}\{E_j\}\right) \\
		&= D_{n-1}.
		\end{align}
	\item $m<n-1$. Then $E_{n-1} = +\infty$ and in turn, $D_{n-1} = \max\{G_{n-1} , E_{n-1}\} = +\infty$. By \eqref{eq:reseq-param-1}, we have:
		\begin{align}\label{eq:reseq-param-3}
		\nonumber D_n = \max(G_n , E_n) \geq G_n &= \min\left(D_{n-1}, T + \min_{j\geq n}\{E_j\}\right) \\
		&= T + \min_{j\geq n}\{E_j\}.
		\end{align}
		Since $m<n-1$, $\min_{j\geq m}\{E_j\} \leq \min_{j\geq n}\{E_j\}$. Therefore using \eqref{eq:reseq-param-2} and \eqref{eq:reseq-param-3}, we have:
		\begin{align}
		D_n \geq T + \min_{j\geq n}\{E_j\} \geq T + \min_{j\geq m}\{E_j\} \geq D_{m}.
		\end{align}
	\end{itemize}
This establishes \eqref{eq-asfrdgh}.

Second, we show that if for any $\lambda >0$, if $T<\lambda$, there exists a scenario with RTO $\lambda$
where the re-sequencing buffer discards a packet
Consider a trace with two packets $1$ and $2$, received at the re-sequencing buffer at
a times $E_2 = t_0$ and $E_{1} = t_0 + \lambda$ for some $t_0\geq 0$. The RTO of this trace is $\lambda$. By Theorem~\ref{thm:io}, $D_1=I_1$ and $I_1=+\infty$ because $E_1>\min_{j\geq n}\{E_j\}+T=E_2+T$. Thus packet $1$ is discarded by the re-sequencing buffer.


\subsection{Proof of Theorem~\ref{thm:reseq-param-size}} \label{sec:appendix_thm_reseq-size}
\textbf{Item (1)}. Consider Fig. \ref{fig:reseq-gen}.
Assume the size of the re-sequencing buffer is unlimited; the actual buffer content at time $t$ is
\begin{align}\label{eq:def-L}
	L(t) = \sum_{E_k<t,D_k\geq t}l_k,
	\end{align}
First, we show that $L(t)\leq \pi$ at all times $t$ that immediately follow a packet arrival; this will imply that a buffer of size $\pi$ is sufficient to avoid overflow.
	
Packet $1$ is never stored in the buffer. Consider some fixed but arbitrary packet $n>1$, with size $l_n$, and define the set of indices $\calX$ by
\begin{equation}\calX\eqdef  \left\{i\in\mathbb{Z}^+~|~i<n,  D_i\geq E_n\right\}\end{equation}
If $\calX$ is empty then $D_{n-1}<E_n$; observe that $D_n=\max (E_1, ..., E_n)=\max(D_{n-1}, E_n)=E_n$, i.e. packet $n$ is not stored in the buffer, and therefore buffer overflow does not occur when packet $n$ arrives. Hence, we assume that $\calX$ is not empty and let $m=min (\calX)$. The actual content of the buffer just after the arrival of packet $n$ is
\begin{align}
  L(&E_n)+l_n  = \sum_{k, E_k<E_n\leq D_k}  l_k + l_n \label{eq-pr33fdsj}\\
    &= \sum_{k<m, E_k<E_n\leq D_k} l_k  + \sum_{k\geq m, E_k<E_n\leq D_k} l_k   + l_n \label{eq-pr3342}\\
      &=\sum_{k\geq m, E_k<E_n\leq D_k} l_k + l_n \leq \sum_{k\geq m, E_k<E_n} l_k + l_n, \label{eq-pr3343}
\end{align}
where the last equality is because the first sum in \eqref{eq-pr3342} is $0$ by definition of $m$. By Lemma \ref{lem:reseq-buffer}, $E_m \geq E_n$; since $m\neq n$ and we exclude simultaneous packet arrivals at the re-sequencing buffer, $E_m > E_n$. Thus
\begin{align}\label{eq-pr33klsfd}
 \nonumber  \Big[(k\geq m \mbox{ and } &E_k<E_n) \mbox{ or } (k=n) \Big]\\
  & \implies (k> m \mbox{ and } E_k<E_m),
\end{align}
therefore
\begin{align}\label{eq-pr33-sdjfh}
  \nonumber\sum_{k\geq m, E_k<E_n} l_k + l_n &= \sum_{k, (k\geq m, E_k<E_n) \mathrm{ or } (k=n)}\\
  &\leq  \sum_{k> m, E_k<E_m} l_k = \pi_m.
\end{align}
Combined with \eqref{eq-pr33fdsj}-\eqref{eq-pr3343}, this shows that
\begin{equation}\label{eq-pr33-22}
  L(E_n)+l_n  \leq \pi_m\leq \pi
\end{equation}

\begin{lemma}\label{lem:reseq-buffer}
	$E_m \geq E_n$.
\end{lemma}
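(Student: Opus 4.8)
The plan is to derive $E_m \geq E_n$ from the closed-form expression for the departure times in Theorem~\ref{thm:io}. Recall the context in which the lemma is invoked: we are in the proof of item~(1) of Theorem~\ref{thm:reseq-param-size}, so the network is lossless ($E_k<+\infty$ for all $k$), the hypothesis $T\geq\lambda$ is in force, and $m=\min(\calX)$ with $\calX=\{i\in\mathbb{Z}^+ \mid i<n,\ D_i\geq E_n\}$; in particular $m<n$ and $D_m\geq E_n$. Because the network is lossless and $T\geq\lambda$, item~(3) of Theorem~\ref{thm:io} applies and gives $D_k=\max_{i\leq k}\{E_i\}$ for every $k\geq 1$. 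This identity is the only structural ingredient I expect to need.

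Given the identity, I would argue as follows. The finite maximum $D_m=\max_{i\leq m}\{E_i\}$ is attained at some index $i_0\leq m$, i.e. $E_{i_0}=D_m$. I claim $i_0=m$; this immediately yields $E_m=D_m\geq E_n$ and completes the proof. Suppose instead $i_0<m$. Then $i_0$ is a positive integer with $i_0<m<n$, and $D_{i_0}=\max_{i\leq i_0}\{E_i\}\geq E_{i_0}=D_m\geq E_n$, so $i_0\in\calX$; this contradicts $m=\min(\calX)$. Hence $i_0=m$, as claimed.

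The whole argument is a one-line consequence of item~(3) of Theorem~\ref{thm:io}, so I do not foresee any real difficulty; the only point deserving a moment of attention is that invoking that item is legitimate, which requires both $T\geq\lambda$ (an explicit assumption of Theorem~\ref{thm:reseq-param-size}) and losslessness (the standing assumption of item~(1)), both of which hold here. If one wished to avoid relying on losslessness, one could equally use item~(2) of Theorem~\ref{thm:io}: for $m\geq 2$, minimality of $m$ forces $D_{m-1}<E_n$, and since $G_m\leq D_{m-1}$, the relation $D_m=\max(G_m,E_m)$ together with $D_m\geq E_n>G_m$ again gives $D_m=E_m\geq E_n$, while the case $m=1$ is immediate from $D_1=E_1$.
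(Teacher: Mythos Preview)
Your proof is correct and follows essentially the same approach as the paper: both arguments invoke the identity $D_k=\max_{i\leq k}E_i$ from item~(3) of Theorem~\ref{thm:io} together with the minimality of $m$ in $\calX$. The paper phrases the minimality step as $D_{m-1}<E_n$ (hence $E_k<E_n$ for all $k\leq m-1$), while you phrase it as a contradiction obtained from a maximizer $i_0<m$ landing in $\calX$; the underlying idea is identical.
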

	\begin{proof}
By construction, $m\in \calX$ thus $D_m\geq E_n$. Since $D_m=\max_{k\leq m}E_k$, it follows that
\begin{equation}\label{eq-l3-1}
  \exists k \in \left\{1...m\right\} \mbox{ such that }E_k\geq E_n
\end{equation}
If $m=1$ the conclusion follows.
Else, $m-1$ is not in $\calX$ thus $D_{m-1}<E_n$.
Since $D_{m-1}=\max_{k\leq m-1}E_k$, it follows that $\forall k \in \left\{1...m-1\right\}$, $E_k <E_n$. Combined with \eqref{eq-l3-1}, this shows that $E_m\geq E_n$.
	\end{proof}

Second, we show that for any possible $\lambda>0$ and valid RBO value $\pi$ there exists one execution trace of a flow with packet sizes between $L^{\min}$  and $L^{\max}$, with RTO $\lambda$ and RBO $\pi$, that achieves a buffer content equal to $\pi$. First observe that, by definition, $\pi$ can be written as $\pi=\sum_{j=1}^k \ell_j$ for some positive integer $k$ and  $\ell_j \in [L^{\min}, L^{\min}]$. The packet sequence is as follows. It has $k+1$ packets in total. Packet $1$ has some arbitrary size $l_1 \in [L^{\min}, L^{\min}]$ and is observed at time $E_1=\lambda$. Packets 2 to $k+1$ have sizes $l_2=\ell_1, ...l_{k+1}=\ell_k$ and are observed at times $E_j=\frac{(j-1)\lambda}{k+1}$. Packets $2$ to $k+1$ arrive before packet $1$, are stored in the buffer until packet $1$ arrives, and the buffer content when packet $1$ arrives is $\sum_{j=1}^k l_j$. The RTOs are $\lambda_1=\lambda, \lambda_2=...=\lambda_{k+1}=0$ and the RTO of the trace is $\lambda$. The RBOs are $\pi_1= \sum_{j=1}^k \ell_j=\pi$, $\pi_2=...=\pi_{k+1}=0$ thus the RBO of the sequence is $\pi$.

	\noindent\textbf{Item (2)}. First we show that the buffer size is upper bounded by $\alpha(T+V)$. Since delay jitter from the source to the input of the re-sequencing buffer is $V$, by Lemma \ref{lem:acp}, the flow has arrival curve $\alpha'(t) = \alpha(t+V)$ at the input of re-sequencing buffer. Also, by Theorem~\ref{thm:delay-reseq} (the proof of which is independent of this result) the delay at the re-sequencing buffer is upper bounded by the time-out $T$; by Lemma \ref{lem:backlog}, the amount of backlog inside the buffer is thus upper bounded by $\alpha'(T) = \alpha(V+T)$.

Fix some $\varepsilon>0$, smaller than $V$ and $T$. By the second technical assumption at the end of Section~\ref{sec:arrcur}, there exists an integer $n$ and a sequence of packet lengths $\ell_k\in [L^{\min}, L^{\max}]$ such that $\alpha(V+T-\varepsilon)=\sum_{k=1}^n \ell_k$. Since the arrival curve is achievable, there also exists a sequence of emission times $t_1=0, ...t_n=V+T- \varepsilon$ such that the packet sequence $t_1, ...t_n ,\ell_1, ..., \ell_n$ satisfies the arrival curve constraint $\alpha$. We now derive another packet sequence of $n+1$ packets as follows.
\begin{enumerate}
  \item Packet $1$ is emitted at time $A_1=0$ and has size $l_1=L^{\max}$.
  \item For $k=2...n+1$, packet $k$ is emitted at time $A_k=t_0+t_{k-1}$ and has size $l_k=\ell_{k-1}$, where $t_0$ is a positive number, large enough so that $\alpha(t_0)\geq \sum_{k=1}^{n+1}l_k$. Such a number exists because we assume $\lim_{t\to\infty}\alpha(t)=+\infty$. We have thus $A_{n+1}-A_2=V+T- \varepsilon$.
\end{enumerate}
The arrival times of the $n+1$ packets to the input of re-sequencing buffer are as follows.

\begin{enumerate}
  \item Packet $1$ is lost, i.e. $E_1=+\infty$.
  \item Packet $k=2$ arrives at time $E_2=V+A_2-\frac{\varepsilon}{4}$.
  \item If $n\geq 2$, for $k=3...n+1$, packet $k$ arrives at time $E_k=\max(E_{k-1},A_k)+ \frac{\varepsilon}{3(n-1)}$. 
\end{enumerate}
We now verify that our scenario satisfies all constraints. There is no simultaneous arrival at the re-sequencing buffer as required by our modelling assumptions. Obviously $E_k\geq A_k$ (the scenario is causal) and $E_k>E_{k-1}$ for $k\geq 3$ therefore there is no reordering, and any RTO or RBO constraint is satisfied.

We now verify that the jitter is $\leq V$. We first show by induction on $k\geq 2, k\leq n+1$ that
\begin{equation}\label{eq-pr32199302}
   E_k-A_k\leq V-\frac{\varepsilon (k-2)}{3 (n-1)}
\end{equation} For $k=2$ it follows from the definition of $E_2$. Consider now $k\geq 3$ and assume it holds for $k-1$. Then, by the induction hypothesis:
\begin{align}\label{eq-pr3123a}
 \nonumber &&E_{k-1}&\leq A_{k-1}+V-\frac{\varepsilon (k-3)}{3 (n-1)} \\
 \nonumber & &&\leq A_k+V-\frac{\varepsilon (k-3)}{3 (n-1)} \\
 &\mbox{thus } &E_{k-1}+ \frac{\varepsilon}{3(n-1)}&\leq A_k+V-\frac{\varepsilon (k-2)}{3 (n-1)}.
\end{align}
Also, as $V>\varepsilon$:
\begin{align}\label{eq-pr3123c}
A_{k}+ \frac{\varepsilon}{3(n-1)}&\leq A_k+V-\frac{\varepsilon (k-2)}{3 (n-1)}.
\end{align}
Then, \eqref{eq-pr3123a} and \eqref{eq-pr3123c} give:
\begin{align}
E_k=\max(E_{k-1},A_k)+ \frac{\varepsilon}{3(n-1)}&\leq A_k+V-\frac{\varepsilon (k-2)}{3 (n-1)},
\end{align}
as required. It follows from \eqref{eq-pr32199302} that the jitter of the trace is less than or equal to $V$.

Next, packet $2$ arrives at time $V+A_2-\frac{\varepsilon}{4}$ and is out of order (due to the loss of packet $1$), which triggers a timeout at time $T+V+A_2-\frac{\varepsilon}{4}$.
We now verify that all packets $k\geq 3$ arrive before $V+T+A_2-\frac{\varepsilon}{4}$. To this end, we show by induction on $k\geq 2$ that
\begin{equation}\label{eq-pr32199303}
   E_k\leq T+ V+A_2-\frac{\varepsilon (2n-k)}{3 (n-1)}.
\end{equation} For $k=2$, it follows from the definition of $E_2$ and $T>\varepsilon$. Assume it holds for $k-1$. Then, by the induction hypothesis
\begin{align}\label{eq-pr3123b}
  \nonumber E_{k-1}&\leq T+ V+A_2-\frac{\varepsilon (2n-k+1)}{3 (n-1)} \\
 \mbox{thus } E_{k-1}+ \frac{\varepsilon}{3(n-1)}&\leq T+ V+A_2-\frac{\varepsilon (2n-k)}{3 (n-1)}.
 \end{align}
 Also $A_{k}\leq A_{n+1} = T+ V+A_2-\varepsilon$, therefore,
 \begin{align}
\nonumber  A_k + \frac{\varepsilon}{3(n-1)}&\leq T+ V+A_2-\frac{\varepsilon (3n-2)}{3(n-1)}\\
&\leq T+ V+A_2-\frac{\varepsilon (2n-k)}{3 (n-1)},
\end{align}
hence,
\begin{align}
E_k&\leq T+ V+A_2-\frac{\varepsilon (2n-k)}{3 (n-1)},
\end{align}
as required.

Now $\frac{\varepsilon (2n-k)}{3 (n-1)}>\frac{\varepsilon}{4}$ for $k=3...n+1$ thus every packet other than $2$ arrives before packet $2$ times out. Thus the buffer content just after the arrival of packet $n+1$ is all packets $2$ to $n+1$, i.e. its size is $\alpha(V+T-\varepsilon)=\sum_{k=1}^n \ell_k$.

It remains to verify that the trace satisfies the arrival curve constraint. Let $R(t)$ be the cumulative arrival function of the trace, i.e. $R(t)=\sum_{n=1}^{+\infty}l_n \ind{A_n<t}$.
First, the sequence of packets $2$ to $n+1$ is obtained by time-shifting by $t_0$ a sequence that satisfies the arrival curve constraint, therefore it also does, namely, $R(t)-R(s)\leq \alpha(t-s)$ whenever $s\leq t$, $s\geq t_0$ and $t\geq t_0$. It remains to see the other cases:
\begin{itemize}
  \item $0<s\leq t<t_0$: then $R(t)-R(s)=0\leq \alpha(t-s)$
  \item $0=s\leq t<t_0$: then $R(t)-R(s)=R(t)-R(0)=l_1\leq \alpha(0+)\leq \alpha(t-s)$
  \item $0<s<t_0\leq t$: then $R(t)-R(s)=R(t)-R(t_0)\leq \alpha(t-t_0)\leq \alpha(t-s)$
  \item $0=s<t_0\leq t$: then $R(t)-R(s)=R(t)-R(0)=R(t)-R(t_0)+l_1\leq \sum_{k=1}^{n+1}l_k\leq \alpha(t_0)$ by construction of $t_0$. Thus $R(t)-R(0)\leq \alpha(t_0)\leq \alpha(t)$.
\end{itemize}
This shows that the arrival curve constraint is satisfied.

At this stage, we have shown that, for every $\varepsilon$ small enough, there is a scenario where the backlog reaches $\alpha(V+T-\varepsilon)$. Thus the minimal bound is at least $\sup_{\varepsilon>0}\alpha(V+T-\varepsilon)=\alpha(V+T)$ because arrival curves are left-continuous.

\subsection{Proof of Theorem~\ref{thm:delay-reseq}} \label{sec:appendix_thm_delay_reseq}

\begin{proof}
	We use the notations in Fig. \ref{fig:reseq-gen}. Suppose that the system has $\delta^{\max}$ and $\delta^{\min}$ as worst-case and best-case delays, thus, for any packet~$i$, $\delta^{\min} \leq E_i - A_i \leq \delta^{\max}$. Now, consider a received packet~$n$.	
	Since $D_n \geq E_n$,  $D_n - A_n \geq E_n -A_n \geq \delta^{\min}$; this shows that the best-case delay is not decreased.
	
	Part (1). The system is lossless, therefore by part (3) of Theorem \ref{thm:io}, $D_n = \max_{i\leq n}\{E_i\}$. Therefore:
	\begin{align}
	D_n - A_n = \max_{i\leq n}\{E_i\} - A_n = \max_{i|i\leq n}\{E_i - A_n\}.
	\end{align}
	Since $\forall i\in \mathbb{Z}^+, A_i \leq A_{i+1}$:
	\begin{align}
	D_n - A_n\leq \max_{i\leq n}\{E_i - A_i\} \leq \delta^{\max},
	\end{align}
	which proves that the worst-case delay is not increased. Since the best case delay is not decreased, the delay jitter is not increased.
	
	Part (2). The system is not lossless. By part (2) of Theorem \ref{thm:io}, we have:
	\begin{align}
	\nonumber G_n &= \min\left(D_{n-1}, T+\min_{j\geq n}E_j\right) \leq T+\min_{j\geq n}E_j \leq T + E_n,\\
	D_n &= \max(G_n,E_n) \leq  \max(T + E_n,E_n) = T + E_n.
	\end{align}
	Thus
	\begin{align}
	D_n -A_n \leq T + E_n - A_n \leq T + \delta^{\max},
	\end{align}
	that proves part (2).
\end{proof}

\subsection{Proof of Theorem~\ref{thm:gen-node-reordering-time}} \label{sec:appendix_thm_rto}

\begin{proof}
	First, we obtain an upper bound for RTO of the flow separately for each part of the theorem. Second we show that each bound is achievable.
	
	Consider a packet $n$. Let us denote $A_n$ as the arrival time of a packet $n$ (with size $l_n$) into the system and $E_n$ as its exit time. Now, consider another packet $m$ such that $m\leq n-1$ and $E_n < E_m$.
	Since $V$ is a jitter bound for this system.
	\begin{align}
	\left(E_m - A_m\right) - \left(E_n - A_n\right) \leq V.
	\end{align}
	Then:
	\begin{align}
	E_m - E_n  \leq V - \left(A_n- A_m\right).
	\end{align}
	If the flow has arrival curve $\alpha$, then by \eqref{eq:arrcur-bit}, $A_n~-~A_m~\geq~\alpha^{\downarrow}\left(\sum_{k=m}^{n}l_k\right)$:
	\begin{align}
	E_m - E_n  \leq V - \alpha^{\downarrow}\left(\sum_{k=m}^{n}l_k\right).
	\end{align}
	Since $\alpha^{\downarrow}$ is wide-sense increasing and $m\leq n-1$:
	\begin{align}
	\nonumber E_m - E_n  &\leq V - \alpha^{\downarrow}\left(l_{n-1}+l_n\right) \leq V - \alpha^{\downarrow}(2L^{\min}) \\
	&\leq \left[V-\alpha^{\downarrow}(2L^{\min})\right]^+,
	\end{align}
	that  proves item (1) in the statement of theorem.
	
	If the flow has packet-level arrival curve $\alpha_{\mathrm{pkt}}$, then by \eqref{eq:arrcur-pkt}, $A_n - A_m \geq \alpha_{\mathrm{pkt}}^{\downarrow}\left(n-m+1\right)$:
	\begin{align}
	E_m - E_n  \leq V - \alpha_{\mathrm{pkt}}^{\downarrow}(n-m+1).
	\end{align}
	Since $n$ is integer and $n > m$, then $n-m \geq 1$:
	\begin{align}
	E_m - E_n  \leq V - \alpha_{\mathrm{pkt}}^{\downarrow}(2) \leq \left[V - \alpha_{\mathrm{pkt}}^{\downarrow}(2)\right]^+,
	\end{align}
	which proves item (2) in the statement of theorem.
	
	Second, we show that the bounds are achievable by constructing a scenario where the RTO for a packet of the flow reaches the bound in item (1) of the theorem.
	
	Consider two packets $1$ and $2$ with sizes $l_1=l_2=L^{\min}$ and a non order-preserving system with a jitter bound $V$. Packet~$1$ is issued at $A_1=0$ and packet~$2$ arrives at $A_2 = A_1+\alpha^{\downarrow}(l_1+l_2)$, i.e, $(A_2=t_2)$.
	
	Packet~$1$ experiences a delay of $d+V$, $E_1 = A_1+d+V$, and Packet~$2$ experiences a delay of $d$, $E_2 = A_2+d$.
	Then we have:
	\begin{align}
	\nonumber E_1 - E_2 &= A_1+d+V - (A_2+d) = V - (t_2 - t_1) \\
	&= V - \alpha^{\downarrow}(l_1+l_2)= V- \alpha^{\downarrow}(2L^{\min}),
	\end{align}
	which shows that the RTO for packet~$1$ is equal to the bound in part (1) of the theorem.
	
	Now, we verify that jitter bound and arrival curve assumptions are not violated. The difference between the delay of two packets is:
	\begin{align}
	\left(E_1 - A_1\right) - \left(E_2 - A_2\right) = (d+V) - (d) = V,
	\end{align}
	which is equal to the jitter bound. Also:
	\begin{align}
	A_2 - A_1 = t_2 -t_1 = \alpha^{\downarrow}(l_1+l_2),
	\end{align}
	 which shows the arrival curve constraint holds.
	
	The tightness scenario for item (2) of the theorem is similar to the one for part (1). We set $t_2 = t_1 + \alpha^{\downarrow}_{\mathrm{pkt}}(2)$, and the rest follows the description of item (1).
\end{proof}

\subsection{Proof of Theorem~\ref{thm:gen-node-reordering-byte}} \label{sec:appendix_thm_rbo}

\begin{proof}
	
	First, we show the bounds in items (1) and (2).

Item (1). First observe that, from Theorem \ref{thm:gen-node-reordering-time}, if $\alpha(V)<2 L^{\min}$ then there is no reordering and the RBO is $0$.

Next, assume that there is some reordering and  
consider a packet index $m$ such that $\lambda_m>0$. Let $\calE_m = \{i\in \mathbb{Z}^+|i> m, E_i < E_m\}$. Since $\lambda_m>0$ and there is no simultaneous arrival of packets, $\calE_m$ is not empty. Then let $n=\max\{\calE_m\}$.
%
	Due to the jitter bound of this system and since $E_n < E_m$, we have:
	\begin{align}\label{eq:gen-node-reordering-byte-0}
	\nonumber&&\left(E_m - A_m\right) - &\left(E_n - A_n\right) \leq V, \\
&\mbox{ thus } 	&A_n - A_m \leq V +&\left(E_n - E_m \right) <V.
	\end{align}
	Since the flow has arrival curve $\alpha$, then by \eqref{eq:arrcur-bit}, $A_n~-~A_m~\geq~\alpha^{\downarrow}\left(\sum_{k=m}^{n}l_k\right)$. Therefore,
	\begin{align}\label{eq:gen-node-reordering-byte-1}
	\alpha^{\downarrow}\left(\sum_{k=m}^{n}l_k\right) \leq A_n - A_m &< V.
	\end{align}
	By \eqref{eq:gen-node-reordering-byte-11}, we obtain:
	\begin{align}
	\sum_{k=m}^{n}l_k \leq \alpha(V).
	\end{align}
	We exclude packet~$m$ from the left side of the above equation:
	\begin{align}\label{eq:gen-reordering-single-byte-2}
	\sum_{k=m+1}^{n}l_k \leq\alpha(V) - l_m \leq \alpha(V) - L^{\min}.
	\end{align}
	The reordering byte offset $\pi_m$ for packet $m$, defined in \eqref{eq:reordering-byte-def}, includes only the packets with larger index and smaller exit time than packet~$m$. Thus:
	\begin{align}\label{eq:gen-reordering-single-byte-3}
	\pi_m = \sum_{k | k >m, E_k < E_m} l_{k} \leq \sum_{k=m+1}^{n}l_k.
	\end{align}
	Using \eqref{eq:gen-reordering-single-byte-3} in \eqref{eq:gen-reordering-single-byte-2}, we have:
	\begin{align}
	\pi_m \leq  \alpha\left(V\right)- L^{\min},
	\end{align}
	which proves item (1) of the theorem.
	
Item (2). The flow has packet-level arrival curve $\alpha_{\mathrm{pkt}}$. Observe that, from Theorem \ref{thm:gen-node-reordering-time}, if $\alpha_{\mathrm{pkt}}(V)<2$ then there is no reordering and the RBO is $0$.

	Next, since the flow has packet-level arrival curve $\alpha_{\mathrm{pkt}}$, then by \eqref{eq:arrcur-pkt}, $A_n - A_m \geq \alpha_{\mathrm{pkt}}^{\downarrow}\left(n-m+1\right)$. From \eqref{eq:gen-node-reordering-byte-0}, we have:
	\begin{align}\label{eq:gen-node-reordering-byte-2}
	\alpha_{\mathrm{pkt}}^{\downarrow}\left(n-m+1\right) \leq A_n - A_m &< V,
	\end{align}
	By \eqref{eq:gen-node-reordering-byte-11}, we obtain:
	\begin{align}
	n-m+1 \leq  \alpha_{\mathrm{pkt}}(V).
	\end{align}
	Since for any packet $k$, $l_k \leq L^{\max}$:
	\begin{align}
	\sum_{k=m+1}^{n}l_k \leq L^{\max}(n-m) \leq L^{\max} \left(\alpha_{\mathrm{pkt}}(V)-1\right).
	\end{align}
	Since $\pi_m \leq \sum_{k=m+1}^{n}l_k$, item (2) of the theorem is proven.
	
	Second, we show the tightness.
	
	Fix some $\varepsilon>0$, smaller than $\lambda$. By the second technical assumption at the end of Section~\ref{sec:arrcur}, we show the tightness for the two cases, i) $L^{\max}\geq2L^{\min}$, ii) $L^{\min}= L^{\max}$.

	Case i) $L^{\max}\geq2L^{\min}$. By assumption, we know that  $\alpha(0^+)\geq L^{\max}$; therefore, $\alpha(0^+)\geq 2L^{\min}$.
	
	By the second technical assumption at the end of Section~\ref{sec:arrcur}, there exists an integer $n$ and a sequence of packet lengths $l_k\in [L^{\min}, L^{\max}]$ such that $l_1 = L^{\min}$ and $\sum_{k=2}^n l_k = \alpha(V-\varepsilon) - L^{\min}$. Since $\varepsilon <\lambda$ and by Theorem \ref{thm:gen-node-reordering-time} $\lambda \leq V$, $\alpha(V-\varepsilon) \geq  2L^{\min}$; therefore, $n\geq 2$. Now, since the arrival curve is achievable, there also exists a sequence of emission times $A_1=0, ...A_n=V- \varepsilon$ such that the packet sequence $A_1, ...A_n ,l_1, ..., l_n$ satisfies the arrival curve constraint $\alpha$.
	
	
	
	Next, we construct the exit times of packets $k$ from the system as follows:
	\begin{align}
	 E_1=V+\varepsilon, \indent E_k=V+\frac{(k-2)\varepsilon}{n}, ~~k=2,\dots,n.
	\end{align}
	Observe that $E_2<E_3<\dots<E_n<E_1$. Also note that $A_1\leq A_2\leq \dots \leq A_n=V-\varepsilon < E_2 = V$.
	


	Now, according to \eqref{eq:reordering-byte-def}, the RBOs for packet~$1$ and packet~$k$, $k=2,\dots,n$, are:
	\begin{align}
	\nonumber \pi_1 &= \sum_{j | j >1, E_j < E_1} l_{j} = \sum_{k=2}^n l_k = \alpha(V-\varepsilon) - L^{\min},\\
	\pi_k &=\sum_{j | j >k, E_j < E_k} l_{j}= 0.
	\end{align}
	Therefore, $\pi = \max_{1\leq i\leq n}\{\pi_i\} = \alpha(V-\varepsilon) - L^{\min}$.
	
	Finally, we verify that the assumptions are not violated: 1)  arrival curve, 2) jitter bound, 3) RTO of the flow.
	
	 (1) The arrival curve constraint is satisfied by construction.
	
	
	(2) For any packet $k\geq 1$, we have:
		\begin{align}
		E_k - A_k &\leq E_1 - A_1 = V+\varepsilon,\\
		E_k - A_k &\geq E_2 - A_n = V - (V-\varepsilon) = \varepsilon.
		\end{align}
		Therefore, the jitter is:
		\begin{align}
		\max_{k}\{E_k - A_k\} - \min_{k}\{E_k - A_k\} \leq V+\varepsilon - \varepsilon = V,
		\end{align}
		which conforms the jitter constraint.

	(3) For any packet $k\geq 2$, the packets are in order
	\begin{align}
	\lambda_k = E_k - \min_{j | j \geq k, E_j \leq E_k} = E_k - E_k = 0.
	\end{align}
	For packet $1$, we have:
	\begin{align}
	\nonumber\lambda_1 &= E_1 - \min_{j | j \geq 1, E_1 \leq E_k} = E_1 - E_2 \\
	&= (V+\varepsilon) - V = \varepsilon,
	\end{align}
	therefore, $\lambda = \max_{1\leq i\leq n}\{\lambda_i\} = \varepsilon$, that satisfies the RTO constraint.

Thus, we have shown that, for every $\varepsilon$ small enough, there is a scenario where the RBO reaches $\alpha(V-\varepsilon)-L^{\min}$. Thus the minimal bound is at least $\sup_{\varepsilon>0}\alpha(V-\varepsilon)-L^{\min}=\alpha(V)-L^{\min}$ because arrival curves are left-continuous.

	Case ii) $L^{\max}=L^{\min}$. Then all the packets have the same size $l=L^{\max}$. By assumption, we know that  $\alpha(0^+)\geq l$.
	
	By the second technical assumption at the end of Section~\ref{sec:arrcur}, there exists an integer $n$ and a sequence of packets with length $l$ such that $\sum_{k=1}^n l_k = \alpha(V-\varepsilon) - l$. Now, if $\alpha(V-\varepsilon)<2l$, then $n=1$ and therefore no reordering occurs and $\pi=0$. Hence, we consider the case $\alpha(V-\varepsilon)\geq 2l$; then $n\geq 2$. Now, since the arrival curve is achievable, there also exists a sequence of emission times $t_1=0, ...t_n=V- \varepsilon$ such that the packet sequence $t_1, ...t_n ,l_1, ..., l_n$ satisfies the arrival curve constraint $\alpha$. The rest of the proof follows exactly as case (i).

	The tightness scenario for item (2) of the theorem is similar to the one for case (ii). We set $n = \alpha_{\mathrm{pkt}}(V)$ and for any $k=1,\dots,n$, $l_k = L^{\max}$. 
\end{proof}

\subsection{Proof of Theorem~\ref{thm:gen-concat-reordering-time}}\label{sec:appendix_thm_rto-concat}
\begin{figure}[h]
	\centering
	\includegraphics[width=0.8 \linewidth]{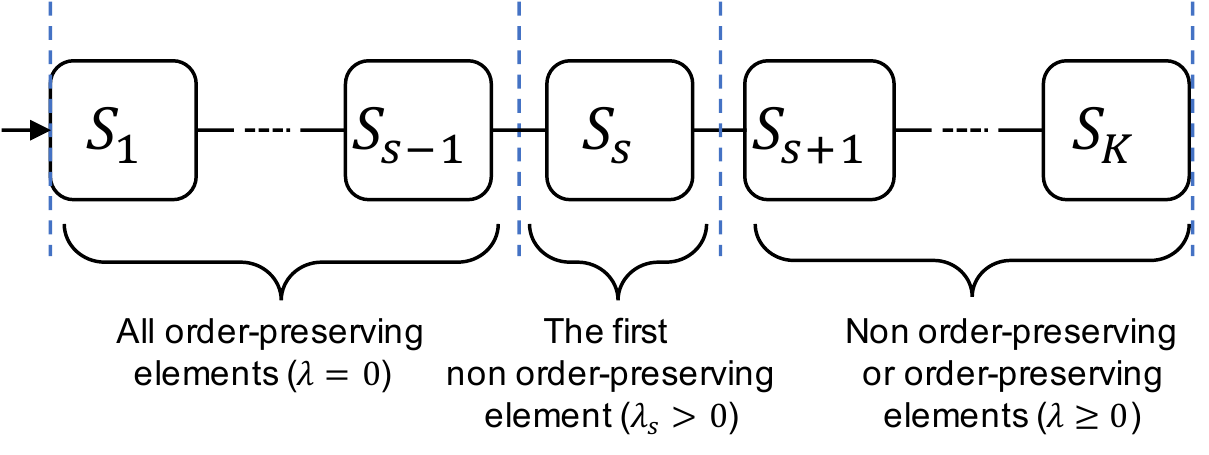}
	\caption{Notation for the sequence of network elements used in Theorem \ref{thm:gen-concat-reordering-time}.}
	\label{fig:gen-k-node}
\end{figure}
\begin{proof}
		First, we obtain an upper bound for RTO of the flow separately for each part of the theorem. Second we show that each bound is achievable.
		
	Consider Fig. \ref{fig:gen-k-node}. We denote the arrival and exit times of a packet~$i$ at $S_h$ by $E^{h-1}_i$ and $E^{h}_i$. Consider two packet indices $m$ and $n$ such that $m<n$ and $E^{K}_n < E^{K}_m$.
	 Since $S_s$ is the first system with nonzero RTO, for any system $S_h$, $h<s$, $E^{h}_m < E^{h}_n$.
	Therefore, $E^{s-1}_{m} \leq E^{s-1}_n$, i.e., at the output of system $S_{s-1}$ (input of system $S_s$), packet~$m$ and $n$ are in-order. Therefore, by definition of the RTO bound at $S_s$,
\begin{equation}\label{eq-pr7a}
E^{s}_m~-~E^{s}_n~\leq~\lambda_s.
\end{equation}

	Now, according to the jitter bound for the concatenation of systems $S_{s+1}$ to $S_K$, we have:
	\begin{align}
	\left(E^{K}_m - E^{s}_m\right) - \left(E^{K}_n - E^{s}_n\right) \leq \sum_{h=s+1}^{K} V_h.
	\end{align}
	Then, we have:
	\begin{align}\label{eq:gen-reordering}
	E^{K}_m - E^{K}_n \leq \left( E^{s}_m- E^{s}_n\right)+ \sum_{h=s+1}^{K} V_h.
	\end{align}
	Combining with \eqref{eq-pr7a}:
	\begin{align}\label{eq:proof-gen-reordering-1}
	E^{K}_m - E^{K}_n \leq \lambda_s+ \sum_{h=s+1}^{K} V_h \eqdef \Lambda(K).
	\end{align}
	
	Second, we show tightness. We are given a sequence of systems with RTOs $\lambda_h$ and jitters $V_h$, and we construct a scenario that conforms with these parameters and where a packet reaches the RTO bound in Theorem \ref{thm:gen-concat-reordering-time} at system $K$.
	We use the same notation as before. In particular, $S_s$ is the first system in the sequence for which $\lambda_s>0$.
		Now, consider a trace with two packets $1$ and $2$ entering $S_1$. Also, consider a 
set of positive values $\{d_s,d_{s+1},\dots,d_K\}$.
	
	Packet $1$ and $2$ arrive at $S_1$ at times $E_1=0$ and $E_2=\varepsilon$, with $\lambda_s>\varepsilon>0$. Each packet then has the same transfer time through system $S_j$ at time $t_j$, $j\in\{1,2,\dots,s-1\}$, thus preserving order and $E^{s-1}_2 = E^{s-1}_1+\varepsilon$.
	
	 The transfer times through $S_s$ are $d_s+\lambda_s$ for packet~$1$  and $d_s$ for packet $2$, i.e.,  $E^s_1 = E^{s-1}_1 + d_s+\lambda_s$ and $E^s_2 =E^{s-1}_2+d_s$. Packet~$1$ and $2$ experience the delays of $d_h+V_h$ and $d_h$ at system $S_h$, $h\in\{s+1,s+2,\dots,K\}$:
	\begin{align}
	\nonumber E^{h}_1 &= E^{h-1}_1 + d_h+ V_h,\\
	E^{h}_2 &= E^{h-1}_2 + d_h.
	\end{align}
	Then at the output of system $S_K$, we have:
	\begin{align}
	\nonumber E^{K}_1 &= E^{s}_1 + \sum_{h=s+1}^{K} \left(d_h+ V_h\right) =E^{s-1}_1+\lambda_s+ \sum_{h=s}^{K} d_h + \sum_{h=s+1}^{K} V_h,\\
	E^{K}_2 &= E^{s}_2 + \sum_{h=s+1}^{K} d_h = E^{s-1}_1+ \varepsilon+\sum_{h=s}^{K} d_h.
	\end{align}

	We now verify that the assumptions in the statement of theorem are not violated.
	
	(1) We check that RTOs for all the systems are not violated. For any system $S_j, j<s$, packets $1$ and $2$ preserve order by construction, i.e. $\lambda_j=0$.

	For system $S_s$, according to the departure times of packets $1$ and $2$ from $S_s$, we have:
	\begin{align}
	E^s_1 - E^s_2 =  \lambda^s - \varepsilon,
	\end{align}
	which satisfies the constraint $\lambda_s$ on RTO conforms to RTO assumption for system $S_s$ being the first system that RTO is equal to $\lambda_s$. At the output of system $S_s$ we have $E^s_2 < E^s_2$, i.e., packet~$2$ is prior to packet~$1$. For system $S_h$, $h\in\{s+1,s+1,\dots,K\}$, we have
	\begin{align}
	\nonumber E^h_2 - E^h_1 &= \left(E^{s-1}_1+\varepsilon+ \sum_{j=s}^{h} d_j\right) \\
	\nonumber&- \left(E^{s-1}_1+\lambda_s+ \sum_{j=s}^{h} d_j + \sum_{j=s+1}^{h} V_j\right)\\
	&=\varepsilon-\lambda_s-\sum_{j=s+1}^{h} V_j < 0,
	\end{align}
	that shows $E^h_2 < E^h_1$, i.e., system $S_h,  h\geq s+1$ preserves the order of its input. Therefore, RTO for each system $S_h$ is $0$, which satisfies any RTO constraint.
	
	(2) We check that the jitter bounds are not violated.
	
	For systems $S_1$ to $S_{s-1}$, since both packets experience the same delay, the jitter is $0$.
	
	For system $S_s$, we have:
	\begin{align}
	\nonumber \left(E^s_1 - E^{s-1}_1\right) - \left(E^s_2 - E^{s-1}_2\right) &= \lambda_s.
	\end{align}
	Now, by Theorem \ref{thm:gen-node-reordering-time}, $\lambda_s \leq V_s$, thus the jitter bound assumption for $S_s$ are satisfied.
	
	For any system $S_h$ where $h \in \{s+1,\dots,K\}$, we have:
	\begin{align}
	\nonumber \left(E^h_1 - E^{h-1}_1\right) - \left(E^h_2 - E^{h-1}_2\right) = (d_h+V_h) - (d_h) = V_h,
	\end{align}
	which shows that the jitter bound assumptions for systems $S_{s+1}, S_{s+2}, \dots, S_K$ are satisfied.

Last,
	\begin{align}
	E^K_1 - E^K_2 = \lambda_s -\varepsilon + \sum_{h=s+1}^{K} V_h,
	\end{align}
	thus the RTO for packet~$1$ is equal to the bound in Theorem \ref{thm:gen-concat-reordering-time} minus $\varepsilon$. Since $\varepsilon$ can be arbitrarily small, this shows the result.

\end{proof}

\section{Details of Computations for Case Study 1 in Section \ref{subsec:case1}}\label{sec:appendix-numerical-case1}
\begin{table}[h]
	\caption{Arrival curve propagation from point 1 to point 9 under lossless network condition. Arrival curve at point $i$ is $\alpha_i(t)=\min(6.4e3~t+b_i,125e6~t+M_i)$, where $b_i$ and $M_i$ are in bytes and shown in the table. Since under lossless network condition, the re-sequencing buffers do not increase  delay and jitter bounds, the arrival curves are the same for the four placement strategies.}
	\centering
	\begin{tabular}{|c|c|c|c|c|c|c|c|c|c|}
		\hline
		Burst & 1 & 2 & 3 & 4 & 5 & 6 & 7 & 8 & 9 \\ \hline
		$b_i$ & $6400$ & \multicolumn{2}{c|}{$6400$} & \multicolumn{2}{c|}{$6400$} & \multicolumn{2}{c|}{$6400$} & \multicolumn{2}{c|}{$6400$} \\ \hline
		$M_i$ & $64$ & \multicolumn{2}{c|}{$251$} & $1751$ & $64$ & \multicolumn{2}{c|}{$251$} & $1751$ & $64$ \\ \hline
	\end{tabular}
\label{table:case1-arr-lossless}
\end{table}

\begin{table*}[t]
	\caption{Arrival curve propagation from point 1 to point 9 under lossy network condition. Arrival curve at point $i$ is $\alpha_i(t)=\min(6.4e3~t+b_i,125e6~t+M_i)$, where $b_i$ and $M_i$ are in bytes and shown in the table.}
	\centering
	\resizebox{0.8\textwidth}{!}{
	\begin{tabular}{|c|c|c|c|c|c|c|c|c|c|c|}
		\hline
		Re-sequencing & Burst & 1 & 2 & 3 & 4 & 5 & 6 & 7 & 8 & 9 \\ \hline
		\multirow{2}{*}{Only at $h_2$} & $b_i$ & $6400$ & \multicolumn{2}{c|}{$6400$} & \multicolumn{2}{c|}{$6400$} & \multicolumn{2}{c|}{$6400$} & \multicolumn{2}{c|}{$6400$} \\ \cline{2-11}
		& $M_i$ & $64$ & \multicolumn{2}{c|}{$251$} & $1751$ & $64$ & \multicolumn{2}{c|}{$251$} & $1751$ & $64$ \\ \hline
		\multirow{2}{*}{Only at $S_2$} & $b_i$ & $6400$ & \multicolumn{2}{c|}{$6400$} & \multicolumn{2}{c|}{$6400$} & $6400$ & $6400$ & \multicolumn{2}{c|}{$6400$} \\ \cline{2-11}
		& $M_i$ & $64$ & \multicolumn{2}{c|}{$251$} & $1751$ & $64$ & $251$ & $2249$ & $3749$ & $64$ \\ \hline
		\multirow{2}{*}{At $S_1$ and $h_2$} & $b_i$ & $6400$ & $6400$ & $6400$ & \multicolumn{2}{c|}{$6400$} & \multicolumn{2}{c|}{$6400$} & \multicolumn{2}{c|}{$6400$} \\ \cline{2-11}
		& $M_i$ & $64$ & $251$ & $626$ & $1875$ & $64$ & \multicolumn{2}{c|}{$2012$} & $1751$ & $64$ \\ \hline
		\multirow{2}{*}{At $S_1$ and $S_2$} & $b_i$ & $6400$ & $6400$ & $6400$ & \multicolumn{2}{c|}{$6400$} & $6400$ & $6400$ & \multicolumn{2}{c|}{$6400$} \\ \cline{2-11}
		& $M_i$ & $64$ & $251$ & $626$ & $1875$ & $64$ & $251$ & $375$ & $1875$ & $64$ \\ \hline
	\end{tabular}
}
\label{table:case1-arr-lossy}
\end{table*}

In these scenarios, we require arrival curve information to compute RTO, RBO, and delay upper bounds. Let us call $\alpha_i$ as arrival curve of flow $f$ at point $i$ in units of bytes. Tables \ref{table:case1-arr-lossless} and \ref{table:case1-arr-lossy} show the propagated arrival curve at points 1 to 9 in Fig. \ref{fig:auto}. The arrival curve at point $i$ has the form of $\alpha_i(t)=\min(r~t+b_i,c~t+M_i)$, where $r=6400$ bytes per second, $c=125e5$ bytes per second~(i.e. $1Gbps$), $b_i$ and $M_i$ are shown in Tables \ref{table:case1-arr-lossless} and \ref{table:case1-arr-lossy}. $\alpha_0(t) = r~t+b_0$, where $b_0=6400~bytes$. The arrival curves at point 1, 5, and 9 capture the line shaping and packetizer effects:
\begin{align}
\nonumber\alpha_1(t)&=\min(r~t+b_0,c~t+512),\\
\nonumber \alpha_5(t)&=\min(r~t+b_4,c~t+512),\\
\alpha_9(t)&=\min(r~t+b_8,c~t+512).
\end{align}
The arrival curves at points 4 and 8 capture the effect of traversing the output FIFO systems with service curve $\beta(t)=125e6[t-Q]^+~bytes$, where $Q=12\mu s$:
\begin{align}
\nonumber\alpha_4(t)&=\min(r~t+b_3+rQ,c~t+M_3+cQ),\\
 \alpha_8(t)&=\min(r~t+b_7+rQ,c~t+M_7+cQ).
\end{align}
The arrival curves at point 2 and 6 capture the effect of switching fabric with jitter bound $V_{\mathrm{sf}}$:
\begin{align}
\nonumber \alpha_2(t)&=\min(r~t+b_1+rV_{\mathrm{sf}},c~t+M_1+cV_{\mathrm{sf}}), \\
 \alpha_6(t)&=\min(r~t+b_5+rV_{\mathrm{sf}},c~t+M_5+cV_{\mathrm{sf}}).
\end{align}
The arrival curves at point 3 and 7 capture the effect of re-sequencing buffer (if any) at switches $S_1$ and $S_2$ with time-out values respectively $T_1$ and $T_2$ under lossy network condition:
\begin{align}
\nonumber\alpha_3(t)&=\min(r~t+b_2+rT_1,c~t+M_2+cT_1), \\
\alpha_7(t)&=\min(r~t+b_6+rT_2,c~t+M_6+cT_2).
\end{align}
Under lossless network condition, the re-sequencing buffers do not increase delay and jitter bounds, hence, $\alpha_3=\alpha_2$ and $\alpha_7=\alpha_6$.

To compute delay bounds of output FIFO systems we use the results in \cite{mohammadpour_improved_2019-1, le_boudec_network_2001}. Accordingly, the delay bound of FIFO system of $h_1$ is $\delta^{\max}_{\mathrm{FIFO},h_1} = 63.2\mu s$. Also, the minimum delay for each FIFO system is the transmission of a packet with minimum length, $\delta^{\min}_{\mathrm{FIFO},h_1} = \delta^{\min}_{\mathrm{FIFO},S_1}  =\delta^{\min}_{\mathrm{FIFO},S_2} = \frac{L^{\min}}{c} = 512 ns$. Now, since we know delay upper and lower bounds for the FIFO system of $h_1$, its jitter is $V_{\mathrm{FIFO},h_1} = 62.69\mu s$. Now we analyze the four strategies separately.
\subsubsection{Re-sequencing only at $h_2$}
We first obtain delay and jitter bounds of flow $f$ for the FIFO output ports of $S_1$ and $S_2$. Using the arrival curves in Tables \ref{table:case1-arr-lossless} and \ref{table:case1-arr-lossy}:
\begin{align}
\delta^{\max}_{\mathrm{FIFO},S_1} = \delta^{\max}_{\mathrm{FIFO},S_2} = 14.01\mu s.
\end{align}
Since, we already compute the minimum delay for the FIFO systems of $S_1$ and $S_2$, the jitter bounds are:
\begin{align}
V_{\mathrm{FIFO},S_1} = V_{\mathrm{FIFO},S_2} = 13.5\mu s.
\end{align}
Having the knowledge on jitters of each element, we now compute the RTO bound of the flow at $h_2$. To obtain the RTO bound, we use Theorem \ref{thm:gen-concat-reordering-time}. Since the switching fabric (SF) in $S_1$ is the first non order-preserving element, we need to compute its corresponding RTO. Using Theorem \ref{thm:gen-node-reordering-time}:
\begin{align}
\nonumber \lambda_{\mathrm{SF},S_1} = \left[V_{\mathrm{SF},S_1} - \alpha_1^{\downarrow}(2L^{\min})\right]^+ &= 1.5 \mu s- 0.512 \mu s \\
&=  0.988 \mu s.
\end{align}
Therefore the RTO at $h_2$ is:
\begin{align}
\nonumber \Lambda(h_2) &= \lambda_{\mathrm{SF},S_1} + V_{\mathrm{FIFO},S_1}+V_{\mathrm{SF},S_2} +V_{\mathrm{FIFO},S_2} \\
&= 0.988 \mu + 13.5 \mu + 1.5 \mu + 13.5 \mu =29.49 \mu s.
\end{align}
Then $T_{h_2}=\Lambda(h_2) = 29.49 \mu s$.

Now, if the network is lossless, due to Theorem \ref{thm:delay-reseq}, re-sequencing is for free; therefore, the bounds are:
\begin{align}
\nonumber \delta^{0,\max}_{\mathrm{e2e}} &= \delta^{\max}_{\mathrm{FIFO},h_1} + \delta^{\max}_{\mathrm{SF},S_1} + \delta^{\max}_{\mathrm{FIFO},S_1} + \delta^{\max}_{\mathrm{SF},S_2} + \delta^{\max}_{\mathrm{FIFO},S_2} \\
\nonumber &= 63.2 + 2 + 14.01 + 2 + 14.01 = 95.22 \mu s,\\
\nonumber V^0_{\mathrm{e2e}} &= V_{\mathrm{FIFO},h_1} + V_{\mathrm{SF},S_1} + V_{\mathrm{FIFO},S_1} + V_{\mathrm{SF},S_2} + V_{\mathrm{FIFO},S_2} \\
\nonumber&= 62.69 + 1.5+13.5+1.5+13.5 = 92.69 \mu s.
\end{align}
Using Corollary \ref{col:reordering-concat}, the RBO bound at $h_2$ is:
\begin{align}
\nonumber \Pi(h_2) &= \alpha(V_{\mathrm{FIFO},h_1} + V_{\mathrm{SF},S_1} + V_{\mathrm{FIFO},S_1} + V_{\mathrm{SF},S_2}) - L^{\min}\\
&= \alpha(79.19\mu s) - 64= 6336~bytes.
\end{align}
Then $B_{h_2}=\Pi(h_2)=6336~bytes$. Note that due to Theorem \ref{thm:gen-node-reordering-byte}, we eliminate the FIFO system at $S_2$ as the switching fabric of $S_2$ is the last FIFO system.

If the network is lossy, by Theorem \ref{thm:delay-reseq}, the jitter and delay worst-case are increased by $T_{h_2} = 29.49\mu s$; therefore:
\begin{align}
\delta^{\max}_{\mathrm{e2e}} = 124.72 \mu s, \indent\indent V_{\mathrm{e2e}} = 122.19 \mu s.
\end{align}
The size of re-sequencing buffer is:
\begin{align}
\nonumber B_{h_2}&=\alpha(V_{\mathrm{FIFO},h_1} + V_{\mathrm{SF},S_1} + V_{\mathrm{FIFO},S_1} + V_{\mathrm{SF},S_2}\\
\nonumber&+V_{\mathrm{FIFO},S_2} + T_{h_2})= \alpha(122.19 \mu s)=6400~bytes.
\end{align}

\subsubsection{Re-sequencing only at $S_2$}
Similarly to the previous scenario, $\delta^{\max}_{\mathrm{FIFO},S_1} = 14.01 \mu s$ and $V_{\mathrm{FIFO},S_1} = 13.5\mu s$; and therefore, we obtain RTO bound in switching fabric of $S_1$ as $\lambda_{\mathrm{SF},S_1} = 0.988\mu s$. Using Theorem \ref{thm:gen-concat-reordering-time}, the RTO bound after the switching fabric of $S_2$ is:
\begin{align}
\nonumber \Lambda(S_2) &= \lambda_{\mathrm{SF},S_1} + V_{\mathrm{FIFO},S_1}+V_{\mathrm{SF},S_2} \\
&= 0.988 \mu + 13.5 \mu + 1.5 \mu =15.99 \mu s.
\end{align}
Then $T_{S_2}=\Lambda(S_2) = 15.99 \mu s$.

Now, if the network is lossless, $\delta^{\max}_{\mathrm{FIFO},S_2} = 14.01 \mu s$. Then, due to Theorem \ref{thm:delay-reseq}, re-sequencing is for free; therefore, similarly to the previous strategy $\delta^{0,\max}_{\mathrm{e2e}} = 95.22 \mu s$ and $V^0_{\mathrm{e2e}} = 92.69 \mu s$. Also for the RBO bound at $S_2$, using Corollary \ref{col:reordering-concat}, we have:
\begin{align}
\nonumber\Pi(S_2) &= \alpha(V_{\mathrm{FIFO},h_1} + V_{\mathrm{SF},S_1} + V_{\mathrm{FIFO},S_1} + V_{\mathrm{SF},S_2}) - L^{\min}\\
&= \alpha(79.19\mu s) - 64= 6336~bytes.
\end{align}
By Theorem \ref{thm:reseq-param},  and the size is $B_{S_2}=\Pi(S_2)=6336~bytes$.

If the network is lossy, by Theorem \ref{thm:delay-reseq}, the jitter and delay worst-case are increased by $T_{S_2} = 15.99\mu s$; therefore this affects the arrival curve at point 7 and in turn delay bound of output FIFO system at $S_2$. Then, $\delta^{\max}_{\mathrm{FIFO},S_2} = 30 \mu s$ and $V_{\mathrm{FIFO},S_2} = 29.49\mu s$. Finally,
\begin{align}
\nonumber \delta^{\max}_{\mathrm{e2e}} = \delta^{\max}_{\mathrm{FIFO},h_1}& + \delta^{\max}_{\mathrm{SF},S_1} + \delta^{\max}_{\mathrm{FIFO},S_1} + \delta^{\max}_{\mathrm{SF},S_2} + \delta^{\max}_{\mathrm{FIFO},S_2}\\
\nonumber &+ T_{S_2}
= 127.22 \mu s,\\
\nonumber V_{\mathrm{e2e}} = V_{\mathrm{FIFO},h_1} &+ V_{\mathrm{SF},S_1} + V_{\mathrm{FIFO},S_1} + V_{\mathrm{SF},S_2} + V_{\mathrm{FIFO},S_2}\\
\nonumber&+ T_{S_2}
= 124.69 \mu s.
\end{align}
The size of re-sequencing buffer is:
\begin{align}
\nonumber B_{S_2}&=\alpha(V_{\mathrm{FIFO},h_1} + V_{\mathrm{SF},S_1} + V_{\mathrm{FIFO},S_1} + V_{\mathrm{SF},S_2}+ T_{S_2})\\
&= \alpha(95.18 \mu s)=6400~bytes.
\end{align}
\subsubsection{Re-sequencing at $S_1$ and $h_2$}
Since we already computed arrival curve at point 1, we compute RTO bound switching fabric at $S_1$ using Theorems \ref{thm:gen-node-reordering-time}, $\lambda_{\mathrm{SF},S_1} = 0.988 \mu s$. Then $T_{S_1}=\lambda_{\mathrm{SF},S_1} = 0.988 \mu s$. Similarly to switching fabric of $S_1$, we have $\lambda_{\mathrm{SF},S_2} = 0.988 \mu s$.

Now, if the network is lossless, using the arrival curves in Table \ref{table:case1-arr-lossless}, $\delta^{\max}_{\mathrm{FIFO},S_1} = \delta^{\max}_{\mathrm{FIFO},S_2} = 14.01\mu s$ and $V_{\mathrm{FIFO},S_1} = V_{\mathrm{FIFO},S_2} = 13.5\mu s$. Hence, due to Theorem \ref{thm:delay-reseq}, re-sequencing is for free; therefore, similarly to the previous strategy $\delta^{0,\max}_{\mathrm{e2e}} = 95.22 \mu s$ and $V^0_{\mathrm{e2e}} = 92.69 \mu s$. To compute RTO bound at $h_2$, we need to find RTO bound switching fabric of $S_2$ as the first non order-preserving element after re-sequencing buffer of $S_1$ (the output of which is in-order). Using Theorem \ref{thm:gen-concat-reordering-time}, the RTO bound at $h_2$ is:
\begin{align}
\nonumber \Lambda(h_2) &= \lambda_{\mathrm{SF},S_2} +V_{\mathrm{FIFO},S_2} = 0.988 \mu + 13.5 \mu =14.49 \mu s.
\end{align}
Then $T_{h_2}=\Lambda(h_2) = 14.49 \mu s$. Also for the RBO bound at $S_1$ and $h_1$, using Corollary \ref{col:reordering-concat}, we have:
\begin{align}
\nonumber\Pi(S_1) &= \alpha(V_{\mathrm{FIFO},h_1} + V_{\mathrm{SF},S_1}) - L^{\min} \\
\nonumber&= \alpha(64.19\mu s) - 64= 6336~bytes,\\
\nonumber\Pi(h_2) &= \alpha(V_{\mathrm{FIFO},h_1} + V_{\mathrm{SF},S_1} + V_{\mathrm{FIFO},S_1} + V_{\mathrm{SF},S_2}) - L^{\min}\\
\nonumber&= \alpha(79.19\mu s) - 64= 6336~bytes.
\end{align}
By Theorem \ref{thm:reseq-param}, $B_{S_1} = B_{h_2}=\Pi(S_2)=6336~bytes$.

If the network is lossy, the re-sequencing buffer at $S_1$ increases the jitter by $T_{S_1}$; the impact on arrival curves is shown in Table \ref{table:case1-arr-lossy}. Hence, $\delta^{\max}_{\mathrm{FIFO},S_1} = 15\mu s, \delta^{\max}_{\mathrm{FIFO},S_2} = 14.01\mu s$. Using the obtained lower and upper delay bounds, $V_{\mathrm{FIFO},S_1} = 14.49 \mu s,  V_{\mathrm{FIFO},S_2} = 13.5\mu s$. To compute RTO bound at $h_2$, similarly to the lossless case,
\begin{align}
\nonumber\Lambda(h_2) &= \lambda_{\mathrm{SF},S_2} +V_{\mathrm{FIFO},S_2} = 0.988 \mu + 13.5 \mu =14.49 \mu s.
\end{align}
Then $T_{h_2}=\Lambda(h_2) = 14.49 \mu s$. Finally:
\begin{align}
\nonumber \delta^{\max}_{\mathrm{e2e}} = \delta^{\max}_{\mathrm{FIFO},h_1} &+ \delta^{\max}_{\mathrm{SF},S_1} + \delta^{\max}_{\mathrm{FIFO},S_1} + \delta^{\max}_{\mathrm{SF},S_2} + \delta^{\max}_{\mathrm{FIFO},S_2}\\
\nonumber&+ T_{S_1} + T_{h_2} =
111.72 \mu s,\\
\nonumber V_{\mathrm{e2e}} = V_{\mathrm{FIFO},h_1} &+ V_{\mathrm{SF},S_1} + V_{\mathrm{FIFO},S_1} + V_{\mathrm{SF},S_2} + V_{\mathrm{FIFO},S_2}\\
\nonumber&+ T_{S_1}+ T_{h_2} =
109.19 \mu s.
\end{align}
The size of re-sequencing buffers are:
\begin{align}
\nonumber B_{S_1}=\alpha(V_{\mathrm{FIFO},h_1} &+V_{\mathrm{SF},S_1}+  T_{S_1})= \alpha(65.18 \mu s)\\
\nonumber&=6400~bytes,\\
\nonumber B_{h_2}=\alpha(V_{\mathrm{FIFO},h_1} &+V_{\mathrm{SF},S_1}+  T_{S_1} + V_{\mathrm{FIFO},S_1} + V_{\mathrm{SF},S_2}\\
\nonumber&+ V_{\mathrm{FIFO},S_2} + T_{h_2})= \alpha(109.19 \mu s)\\
\nonumber&=6400~bytes.
\end{align}

\subsubsection{Re-sequencing at $S_1$ and $S_2$}
Similarly to the previous scenario, we have $T_{S_1}=\lambda_{\mathrm{SF},S_1} = 0.988\mu s$. Also $\lambda_{\mathrm{SF},S_2} = 0.988 \mu s$.

Now, if the network is lossless, using the arrival curves in Table \ref{table:case1-arr-lossless}, $\delta^{\max}_{\mathrm{FIFO},S_1} = \delta^{\max}_{\mathrm{FIFO},S_2} = 14.01\mu s$ and $V_{\mathrm{FIFO},S_1} = V_{\mathrm{FIFO},S_2} = 13.5\mu s$. Hence, due to Theorem \ref{thm:delay-reseq}, re-sequencing is for free; therefore, similarly to the previous strategy $\delta^{0,\max}_{\mathrm{e2e}} = 95.22 \mu s$ and $V^0_{\mathrm{e2e}} = 92.69 \mu s$. To compute RTO bound at $h_2$, we need to find RTO bound switching fabric of $S_2$ as the first non order-preserving element after re-sequencing buffer of $S_1$ (the output of which is in-order). Using Theorem \ref{thm:gen-concat-reordering-time}, the RTO bound at $S_2$ is $\Lambda(S_2) = \lambda_{S_2} = 0.988\mu s$.  Also for the RBO bound at $S_1$ and $S_2$, using Corollary \ref{col:reordering-concat}, we have:
\begin{align}
\nonumber \Pi(S_1) &= \alpha(V_{\mathrm{FIFO},h_1} + V_{\mathrm{SF},S_1}) - L^{\min} \\
\nonumber &= \alpha(64.19\mu s) - 64= 6336~bytes,\\
\nonumber \Pi(S_2) &= \alpha(V_{\mathrm{FIFO},h_1} + V_{\mathrm{SF},S_1} + V_{\mathrm{FIFO},S_1} + V_{\mathrm{SF},S_2}) - L^{\min}\\
\nonumber &= \alpha(79.19\mu s) - 64= 6336~bytes.
\end{align}
By Theorem \ref{thm:reseq-param}, $B_{S_1} = B_{S_2}=\Pi(S_2)=6336~bytes$.

If the network is lossy, the re-sequencing buffer at $S_1$ increases the jitter by $T_{S_1}$; the impact on arrival curves is shown in Table \ref{table:case1-arr-lossy}. Hence, $\delta^{\max}_{\mathrm{FIFO},S_1} = 15\mu s, \delta^{\max}_{\mathrm{FIFO},S_2} = 14.01\mu s$. Using the obtained lower and upper delay bounds, $V_{\mathrm{FIFO},S_1} = 14.49 \mu s,  V_{\mathrm{FIFO},S_2} = 13.5\mu s$. To compute RTO bound at $h_2$, similarly to the lossless case, $\Lambda(S_2) = \lambda_{\mathrm{SF},S_2} = 0.988 \mu s$. Then $T_{S_2}=\Lambda(S_2) = 0.988 \mu s$. Finally:
\begin{align}
\nonumber \delta^{\max}_{\mathrm{e2e}} = \delta^{\max}_{\mathrm{FIFO},h_1} &+ \delta^{\max}_{\mathrm{SF},S_1} + \delta^{\max}_{\mathrm{FIFO},S_1} + \delta^{\max}_{\mathrm{SF},S_2} + \delta^{\max}_{\mathrm{FIFO},S_2}\\
\nonumber &+ T_{S_1} + T_{S_2} =
99.22 \mu s,\\
\nonumber V_{\mathrm{e2e}} = V_{\mathrm{FIFO},h_1} &+ V_{\mathrm{SF},S_1} + V_{\mathrm{FIFO},S_1} + V_{\mathrm{SF},S_2} + V_{\mathrm{FIFO},S_2}\\
\nonumber &+ T_{S_1}+ T_{S_2} =
96.69 \mu s.
\end{align}
The size of re-sequencing buffers are:
\begin{align}
\nonumber B_{S_1}=\alpha(V_{\mathrm{FIFO},h_1} &+V_{\mathrm{SF},S_1}+  T_{S_1})= \alpha(65.18 \mu s)\\
\nonumber &=6400~bytes,\\
\nonumber B_{S_2}=\alpha(V_{\mathrm{FIFO},h_1} &+V_{\mathrm{SF},S_1}+  T_{S_1} + V_{\mathrm{FIFO},S_1} + V_{\mathrm{SF},S_2}\\
\nonumber &+ T_{h_2})= \alpha(83.19 \mu s)=6400~bytes.
\end{align}

\end{document}